\def\ps@headings{%
\def\@oddhead{\mbox{}\scriptsize\rightmark \hfil \thepage}%
\def\@evenhead{\scriptsize\thepage \hfil \leftmark\mbox{}}%
\def\@oddfoot{}%
\def\@evenfoot{}}
\NewDocumentCommand{\Exp}{m g}{%
  \IfNoValueTF{#2}%
  {\relax\if@display
        \mathbb{E}\!\left[{#1}\right]%
    \else
        \mathbb{E}[{#1}]%
    \fi}%
  {\relax\if@display
        \mathbb{E}\!\left[{#1}\,\middle|\,{#2}\right]%
    \else
        \mathbb{E}[{#1}\,|\,{#2}]%
    \fi}%
}
\DeclareMathOperator*{\argmin}{arg\,min}
\newtheorem{theorem}{Theorem}
\newtheorem{assumption}{Assumption}
\newtheorem{corollary}{Corollary}
\newtheorem{lemma}{Lemma}
\newtheorem{proposition}{Proposition}
\newtheorem{remark}{Remark}
\newcommand{\mc}[1]{\mathcal{#1}}
\newcommand{\bs}[1]{\boldsymbol{#1}}
\begin{document}
\captionsetup[figure]{labelfont={footnotesize},labelformat={simple},labelsep=period,name={Fig.}}
\title{Integrated Sensing, Computation, and Communication for UAV-assisted Federated Edge Learning}
\author{Yao Tang, Guangxu Zhu, Wei Xu, Man Hon Cheung, Tat-Ming Lok, and Shuguang Cui
\thanks{Yao Tang and Tat M. Lok are with the Department of Information Engineering, the Chinese University of Hong Kong (CUHK), Hong Kong (e-mail: \{ty018, tmlok\}@ie.cuhk.edu.hk).
Guangxu Zhu is with Shenzhen Research Institute of Big Data, Shenzhen, China (e-mail: gxzhu@sribd.cn).
Wei Xu is with the National Mobile Communications Research Lab, Southeast University, Nanjing 210096, China (e-mail: wxu@seu.edu.cn).
Man Hon Cheung is with the Department of Computer Science, City University of Hong Kong, Hong Kong (e-mail: mhcheung@cityu.edu.hk).
Shuguang Cui is with the School of Science and Engineering (SSE), the Future Network of Intelligence Institute (FNii), and the Guangdong Provincial Key Laboratory of Future Networks of Intelligence, The Chinese University of Hong Kong (Shenzhen), 
Shenzhen, China. 
He is also with Peng Cheng Laboratory (e-mail: shuguangcui@cuhk.edu.cn).
}
}

\maketitle

\vspace{-0.55cm}

\pagestyle{headings} 
\thispagestyle{empty} 

\begin{abstract}
Federated edge learning (FEEL) enables privacy-preserving model training through periodic communication between edge devices and the server. 
Unmanned Aerial Vehicle (UAV)-mounted edge devices are particularly advantageous for FEEL due to their flexibility and mobility in efficient data collection. 
In UAV-assisted FEEL, sensing, computation, and communication are coupled and compete for limited onboard resources, and UAV deployment also affects sensing and communication performance. Therefore, the joint design of UAV deployment and resource allocation is crucial to achieving the optimal training performance.
In this paper, we address the problem of joint UAV deployment design and resource allocation for FEEL via a concrete case study of human motion recognition based on wireless sensing.
We first analyze the impact of UAV deployment on the sensing quality and identify a threshold value for the sensing elevation angle that guarantees a satisfactory quality of data samples.
Due to the non-ideal sensing channels, we consider the probabilistic sensing model, where the successful sensing probability of each UAV is determined by its position.
Then, we derive the upper bound of the FEEL training loss as a function of the sensing probability. 
Theoretical results suggest that the convergence rate can be improved if UAVs have a uniform successful sensing probability.
Based on this analysis, we formulate a training time minimization problem by jointly optimizing UAV deployment, integrated sensing, computation, and communication (ISCC) resources under a desirable optimality gap constraint.
To solve this challenging mixed-integer non-convex problem, we apply the alternating optimization technique, and propose the bandwidth, batch size, and position optimization (BBPO) scheme to optimize these three decision variables alternately.
Simulation results demonstrate that our BBPO scheme outperforms other baseline schemes regarding convergence rate and testing accuracy.
\end{abstract}

\begin{IEEEkeywords}
Federated edge learning, UAV deployment design, sensing-computation-communication resource allocation, integrated sensing and communication.
\end{IEEEkeywords}

\section{Introduction} \label{sec:introduction}

 \subsection{Motivations}
Advances in artificial intelligence (AI) and the Internet of Things (IoT) have shifted wireless networks from 5G's connected things to 6G's connected intelligence, with more stringent requirements for reliability and latency \cite{Roadmap6G, Survey6G, SurveyUAV, Chanelgain}.
In the pursuit of demanding requirements, federated edge learning (FEEL) has emerged as a popular distributed framework that trains global machine learning (ML) models at the edge of wireless networks \cite{FEEL, wei2023, chen2020joint, zhu2023pushing}, which helps preserve data privacy and avoid long delays caused by data transmission. 
Owing to the UAVs' mobility, UAV-mounted edge devices can facilitate data collection in FEEL.
A typical UAV-assisted FEEL iteration includes \emph{sensing, computation, and communication}, particularly for FEEL \cite{ISCC_liu}.
Specifically, starting with a common initial model broadcast from the edge server, the UAV-mounted edge devices collect/\emph{sense} data from the target and use their local sensing data to \emph{compute} the model updates. Then, they \emph{upload} their model updates to the edge server for further aggregation to yield an updated global model.

To fully exploit the benefits of UAV-assisted FEEL, overcoming several challenges associated with UAV deployment and limited onboard resources is crucial.
\emph{Firstly, UAV deployment affects the sensing data quality.} 
Due to practical constraints in sensing range and accuracy, the sensing data quality heavily relies on the sensing channel between the UAV and the target, which is controlled by the UAV position. This in turn affects the accuracy of model \cite{evidence1}. 
\emph{Secondly, UAV deployment affects the communication efficiency.}
The transmission rate of local model upload depends on the communication channel between the UAV and the server. As the channel model described in \cite{LOS_prob}, shorter relative distances and larger elevation angles favour line-of-sight (LOS) communication links. However, this improved communication efficiency leads to a decrease in sensing quality. 
Specifically, the shorter the UAV-server distance, the longer the UAV-target distance, resulting in a complex sensing environment, such as dense obstacles, which hinder the ability of UAVs to sense the target successfully.
As a result, the UAV deployment results in a tradeoff between sensing quality and communication efficiency, ultimately impacting the FEEL performance.
\emph{Thirdly, the limited onboard energy of UAVs imposes strict constraints on the duration of training.}
In order to minimize the training time, it is essential to allocate resources efficiently on both the UAV and server sides.
Therefore, we focus on the UAV deployment design and resource allocation in UAV-assisted FEEL to minimize the training time.

The preliminary studies \cite{UAV_FL_swarm, UAV_FL_energy,UAV_FL_image,UAV_FL_AQ,UAV_FL_blockchain} have mainly focused on implementing classical federated learning (FL) architectures using UAVs. Typically, UAVs are deployed as FL clients with stored datasets, or aerial base stations as model aggregators.
In recent research \cite{UAV_FL_relay, UAV_FL_TP,UAV_FL_placement, UAV_FL_hierarchical, UAV_FL_aggregator}, UAV deployment has been integrated into the FL system, enabling UAVs to serve as relays for efficient data collection from other devices, or as aggregators with an adaptive position to enhance the FL performance.
However, \emph{these existing studies rarely account for the data sensing process}, and assumed that the training data is either already available on the UAVs or can be directly acquired from other devices.
In addition, \emph{the sensing, computing, and communication processes are highly coupled in FEEL, competing for resources.}
While the most relevant study \cite{ISCC_liu} investigated the resource management of integrated sensing, computation, and communication (ISCC), it did not include UAV deployment in their problem formulation.
Therefore, we aim to address an open problem of \emph{optimizing the UAV deployment and ISCC resource allocation to enhance the training performance in UAV-assisted FEEL systems}.

 \subsection{Contributions}
In this paper, we propose a novel approach for human motion recognition using a UAV-assisted FEEL system.
The system comprises multiple UAVs and one edge server. Each UAV is considered an integrated sensing and communication (ISAC) device due to the size, weight, and power constraints \cite{ISAC_implement, ISAC_implement_motion}. 
In each training round, the ISAC UAVs sense the target and get their local datasets through wireless sensing. 
However, due to complex sensing environments such as trees, buildings, and other obstacles, UAVs may fail to sense targets successfully. Therefore, we employ a probabilistic sensing model, where the relative positions of UAVs and targets determine the successful sensing probability.
Only the UAVs that successfully sense the target can update their local models and upload them to the server for model aggregation.
To minimize the total training time, we optimize the UAV deployment and ISCC resource allocation, including bandwidth and batch size design, building upon a partial participation FEEL framework.
We aim to address three main challenges: 1) How does UAV deployment affect the sensing quality? 2) How does the probabilistic sensing model affects FEEL convergence? 3) How can we optimize the UAV deployment and ISCC resource allocation to improve the training performance?

\emph{To the best of our knowledge, this is the first work to investigate the impact of UAV deployment and ISCC resource allocation on the training performance of a UAV-assisted FEEL system.}

We summarize the key results and contributions as follows:

\begin{itemize}
\item \textbf{Impact of UAV deployment on sensing quality:}
We provide a theoretical analysis of the sensing process for human motion recognition and verify it experimentally.
Our results demonstrate that the sensing quality will reach \emph{a very good level} when the sensing elevation angle is larger than a given threshold.
This finding suggests that it is sufficient to sense targets at the aforementioned elevation angle threshold so that each UAV can generate high-quality data samples.
This addresses the first challenge above.
\item \textbf{FEEL convergence analysis under probabilistic sensing model:}
We first analyze the effect of the successful sensing probability on the convergence of FEEL by deriving the upper bound of the training loss.
Our derived theoretical results show that non-uniform successful sensing probabilities amplify the negative effects caused by data heterogeneity. 
However, these negative effects can be mitigated if UAVs have uniform successful sensing probabilities.
This addresses the second challenge above.
\item \textbf{Optimized UAV deployment and ISCC resource allocation:}
Based on solutions to the two challenges above, we formulate a joint UAV deployment and ISCC resource allocation problem to minimize the total training time under per round latency and training optimality gap constraints.
To solve this challenging mixed-integer non-convex problem, we adopt the alternating optimization technique and split it into three subproblems, which optimize the bandwidth, batch size, and UAV position, respectively.
Our proposed scheme can efficiently compute suboptimal solutions.
This addresses the third challenge above.
\item \textbf{Performance evaluation:}
We perform extensive simulations of a specific wireless sensing task (i.e., human motion recognition) on a high-fidelity wireless sensing simulator \cite{sense_platform}. Our bandwidth, batch size, and position optimization (BBPO) scheme achieves the best convergence rate and testing accuracy performance compared to other baseline schemes.

\end{itemize}

The rest of this paper is organized as follows. Section \ref{sec:related_works} discusses the related works. Section \ref{sec:system_model} describes the system model. Section \ref{sec:sense_analysis} presents the theoretical analysis of the sensing process. Section \ref{sec:problem_formulation} establishes the problem formulation. Section \ref{sec:train_time_minimize} develops the UAV deployment design and ISCC resource management. Section \ref{sec:simulation} provides the simulation results, and Section \ref{sec:conclusion} concludes this paper.

 \section{Related Works} \label{sec:related_works}
Initial studies on UAV-assisted FEEL focused primarily on implementing classical FL architectures using UAVs \cite{UAV_FL_swarm, UAV_FL_energy,UAV_FL_image,UAV_FL_AQ,UAV_FL_blockchain}.
Typically, the UAVs are deployed as FL clients with stored datasets or aerial base stations as model aggregators.
%
%
To fully reap the benefits of the UAV's mobility, recent works \cite{UAV_FL_relay, UAV_FL_TP,UAV_FL_placement, UAV_FL_hierarchical, UAV_FL_aggregator} have integrated UAV deployment into the UAV-assisted FEEL system, enabling UAVs to efficiently collect data from other devices, or act as an aggregator with adaptive position to enhance FL performance.
For example, Ng \emph{et al.} \cite{UAV_FL_relay} proposed using UAVs as wireless relays to facilitate communications between the Internet of Vehicles (IoV) components and the FL aggregator, thus improving the accuracy of the FL. Similarly, Lim \emph{et al.} \cite{UAV_FL_TP} proposed an FL-based sensing and collaborative learning approach for UAV-enabled IoVs, where UAVs collect data from subregions and train ML models for IoVs.
In \cite{UAV_FL_placement}, the authors proposed an asynchronous advantage actor-critic-based joint device selection, UAV placement, and resource management algorithm to enhance federated convergence rate and accuracy.
Additionally, the work in \cite{UAV_FL_hierarchical} introduced a new online FL scheme to improve the performance of personalized local models by jointly optimizing the CPU frequency and trajectories of the UAVs.
The work in \cite{UAV_FL_aggregator} proposed a joint algorithm for UAV placement, power control, bandwidth allocation, and computing resources, to minimize the energy consumption of the UAV aggregator and users.

\emph{Overall, these studies \cite{UAV_FL_swarm, UAV_FL_energy,UAV_FL_image,UAV_FL_AQ,UAV_FL_blockchain, UAV_FL_relay, UAV_FL_TP,UAV_FL_placement, UAV_FL_hierarchical, UAV_FL_aggregator} assumed that training data is readily available onboard the UAVs or can be obtained from other devices, regardless of the data sensing process.} However, this assumption can significantly impact training performance since sensing, computation, and communication in UAV-assisted FEEL are \emph{interdependent} and compete for resources.
The most related study \cite{ISCC_liu} focused on integrated sensing, computation, and communication (ISCC) resource management to optimize training performance.
However, they did not consider the UAV-mounted edge devices, which could significantly affect sensing and communication performance.
Therefore, we aim to address an open problem of optimizing ISCC resource allocation and UAV deployment to enhance the training performance in a UAV-assisted FEEL system.

\section{System Model} \label{sec:system_model}

As shown in Fig. \ref{fig:system_model}, we consider a UAV-assisted FEEL system consisting of $K$ UAVs and a server. The set of UAVs is represented as $\mathcal K=\{1,2,\dots, K\}$.
Each UAV is equipped with a single-antenna ISAC transceiver that can switch between the sensing and communication modes as needed in a time-division manner\footnote{A practical implementation of such an ISAC device via a software-defined radio platform has been demonstrated in \cite{ISAC_implement}.}.
Specifically, in the sensing mode, a dedicated radar waveform frequency-modulated continuous-wave (FMCW) \cite{ISAC_implement_motion} is transmitted. Then, sensing data containing the motion information of the human target can be obtained on board the UAVs by processing the received radar echo signals. This mode is used for UAVs to collect training data.
On the other hand, in the communication mode, a constant frequency carrier modulated by communication data is transmitted. This mode is used for information exchange between UAVs and the edge server.
%
Our UAV-assisted FEEL system aims to train an ML model for specific aerial sensing applications, e.g., target recognition, using the data wirelessly sensed by the UAVs.

\begin{figure}[t]
\begin{center}
\includegraphics[width=8.7cm, trim = 0cm 0cm 0cm 0cm, clip = true]{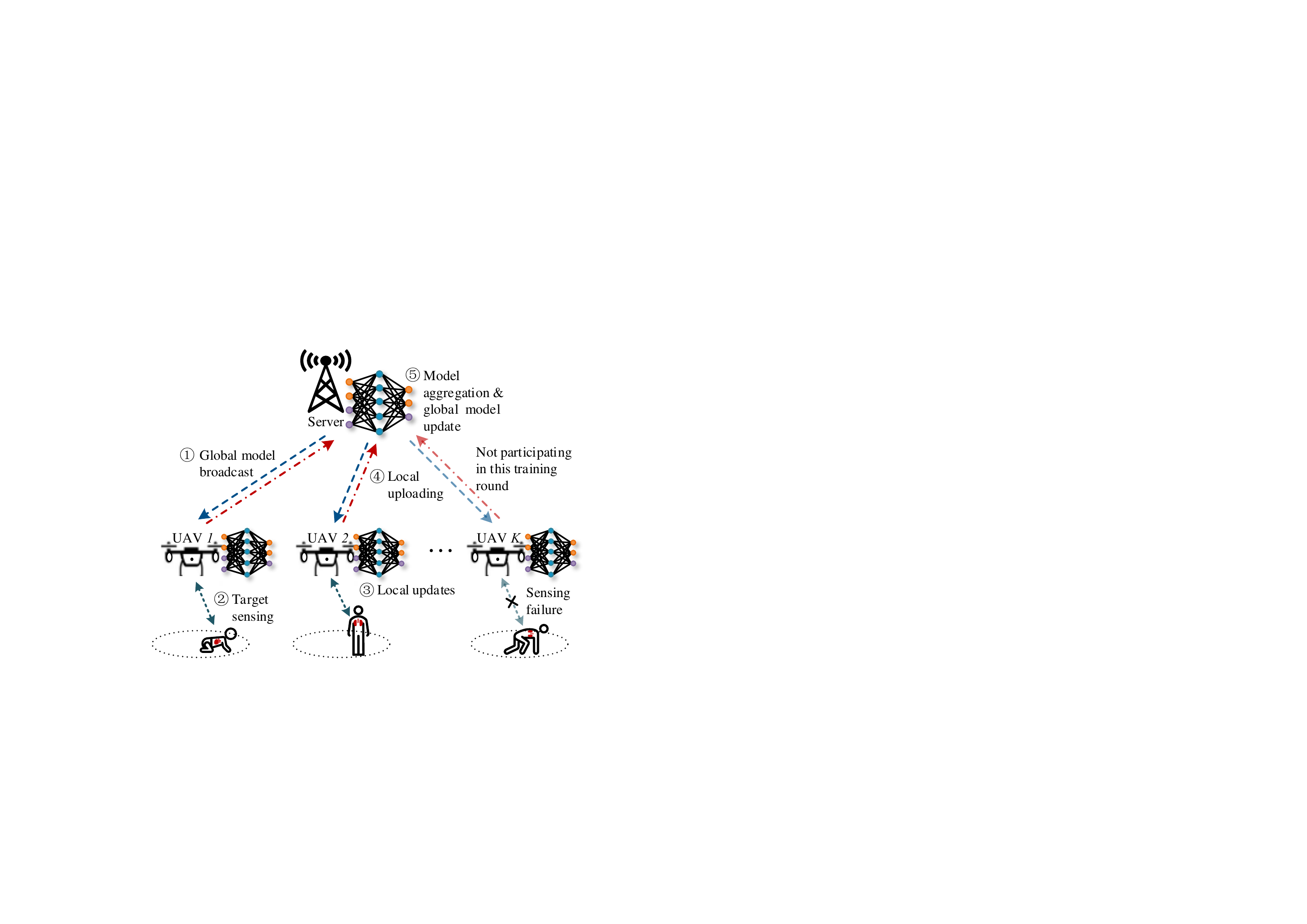}
\caption{\footnotesize{UAV-assisted FEEL system with integrated sensing, communication, and computation.}}
\label{fig:system_model}
\end{center}
\end{figure}

\subsection{Learning Model} \label{sec:learning_model}
The training process is to minimize a global loss function in a distributed manner. In particular, we define the global loss function as
\begin{equation}
F(\mathbf w)=\frac {1} {K} \sum\limits_{k\in\mathcal K}\mathbb E_{\varepsilon\sim\mc P_k}\left[f_k(\mathbf w;\varepsilon)\right],
\end{equation}
where $\mathbf w$ represents the global model, jointly trained on all UAVs and coordinated with the edge server. The function $f_k(\mathbf w;\varepsilon)$ is the local loss function for UAV $k$, and $\varepsilon$ is a random seed with distribution $\mc P_k$, of which its realization represents a batch of samples. To facilitate the subsequent analysis, we define $F_k(\mathbf w)=\mathbb E_{\varepsilon\sim\mc P_k}\left[f_k(\mathbf w;\varepsilon)\right]$.
The training process is iterated in multiple communication rounds. The system repeats the following five steps in each round $n$ until the global model converges (see Fig. \ref{fig:system_model}).

\begin{enumerate}
\item \textbf{Global model broadcasting:}
The server broadcasts the current global model $\mathbf w^{(n)}$ to each UAV via the wireless broadcast channel.
\item \textbf{Target sensing:}
Each UAV switches to the sensing mode and transmits dedicated FMCW signals for target sensing.
We assume that the UAV can only sense the target successfully if it has a LOS sensing link with the target and no obstacles in the link.
Due to the unpredictable sensing environment, we utilize the LOS probabilistic model\footnote{The formulation and the analysis can be readily applied to other probabilistic sensing models described in \cite{radar_prob1, radar_prob2}.} for sensing.
We define the 3-dimensional coordinates of UAV $k$ as $\bs u_k=(x_{u,k},y_{u,k},z_{u,k})$, where $z_{u,k}$ represents the flying altitude.
The 3-dimensional coordinates of the sensed target is $\bs v_k=(x_{v,k},y_{v,k},z_{v,k})$, where $z_{v,k}$ is the altitude of the corresponding target.
The successful sensing probability of UAV $k$ is \cite{LOS_prob}
\begin{equation} \label{equ:los_s}
    q_{s,k}(\bs u_k)=\frac 1 {1+\psi \exp(-\zeta[\theta_{s,k}(\bs u_k)-\psi])},
\end{equation}
where $\psi$ and $\zeta$ are constant values determined by the type of environment, $\theta_{s,k}(\bs u_k)$ is the sensing elevation angle between UAV $k$ at position $\bs {u}_k$ and its sensed target.
More specifically, $\theta_{s,k}(\bs u_k)=\frac {180^{\circ}} \pi \times \sin^{-1}\left |\frac {z_{u,k}-z_{v,k}} {d_{s,k}}\right|$, where $d_{s,k}=\|\bs u_{k}-\bs v_k\|$ is the Euclidean distance between UAV $k$ and the sensed target.
Once UAV $k$ successfully senses the target in round $n$, it obtains a batch of data samples with size $\delta_k$.
The batch size $\delta_k$ can vary adaptively across different UAVs but remain unchanged across all training rounds.

\item \textbf{Local gradient updating:}
Each UAV that successfully senses the target updates its local gradient by running one step of the stochastic gradient descent (SGD) from $\mathbf w^{(n)}$, i.e.,
\begin{equation}
\mathbf g_k^{(n)}=\frac 1 {\delta_k} \sum\limits_{\xi\in\mathcal D_k^{(n)}} \nabla f_k(\mathbf w^{(n)};\xi).
\end{equation}
For UAVs that do not successfully sense the targets, their local gradients will not be updated, i.e., $\mathbf g_k^{(n)}=\mathbf g_k^{(n-1)}$.

\item \textbf{Local uploading:}
We assume that only successfully sensed UAVs upload their local gradients to the server via the uplink
wireless channel (see Section \ref{sec:sense_analysis} for more details on the sensing model). Therefore, we consider the partial participating FEEL scenario.

\item \textbf{Global model aggregation and updating:}
The server aggregates local gradients and updates the global model as
\begin{equation} \label{equ:w}
\mathbf  w^{(n+1)}=\mathbf  w^{(n)} - \eta\frac {\sum\limits_{k\in\mc K} \mathbf 1_{k}^{(n)}(\bs u_k)\mathbf {g}_k^{(n)}} {\sum\limits_{k\in\mc K} \mathbf 1_{k}^{(n)}(\bs u_k)},
\end{equation}
\end{enumerate}
where $\eta$ is the learning rate.
The indicator function $\mathbf 1_{k}^{(n)}(\bs u_k)$ is
\begin{equation} \label{equ:q_indicator}
\mathbf 1_{k}^{(n)}(\bs u_k)=
\begin{cases}
1, & \text{if UAV $k$ senses its target successfully in round $n$}\ \text{w.p.}\ q_{s,k}(\bs u_k),\\
0, & \text{if UAV $k$ senses its target unsuccessfully in round $n$}\ \text{w.p.}\ 1-q_{s,k}(\bs u_k).
\end{cases}
\end{equation}
%

\subsection{UAV-server Communication Model} \label{sec:communication}
%
Without loss of generality, we define the 3-dimensional coordinates of the server as $\bs s=(x_s,y_s,z_s)$, where $z_s$ is the altitude of the server.
Next,
we analyze the channel model between UAV $k$ and the server, which can be LOS or non-LOS (NLOS).
The LOS probability in each round $n$ is \cite{LOS_prob}
\begin{equation} \label{equ:LOS probability}
    q_{c,k}(\bs u_k)=\frac 1 {1+\psi \exp(-\zeta[\theta_{c,k}(\bs u_k)-\psi])},
\end{equation}
where $\theta_{c,k}(\bs u_k)$ is the communication elevation angle. More specifically, $\theta_{c,k}(\bs u_k)=\frac {180^{\circ}} \pi \times \sin^{-1}\left|\frac {z_{u,k}-z_s} {d_{c,k}}\right|$, where $d_{c,k}=\|\bs u_{k}-\bs s\|$ is the Euclidean distance between UAV $k$ and the server.
For ease of analysis, we consider that the UAVs are flying at a constant altitude, i.e., $z_{u,k}=H,\ \forall k\in\mc K$.
Moreover, the NLOS probability is $1-q_{c,k}(\bs u_k)$.
%
The average channel gain between UAV $k$ and the server is \cite{R1, Chanelgain} 
\begin{equation} \label{equ:ATG channel gain}
    h_{k}(\bs u_k)= \frac {(K_0d_{c,k})^{-\alpha}} {\eta_1 q_{c,k}(\bs u_k)+\eta_2 (1-q_{c,k}(\bs u_k))},
\end{equation}
where $K_0=\frac {4\pi f_c} c$, $f_c$ is the carrier frequency, $c$ is the speed of light, and $\alpha$ is the path loss exponent of the link between the UAV and the server. Also, $\eta_1$ and $\eta_2$ ($\eta_2 > \eta_1>1$) are the excessive path loss coefficients in LOS and NLOS cases, respectively, which models the shadowing effect and reflection effect \cite{LOS_prob}.

We consider that each UAV is assigned a dedicated sub-channel of bandwidth $B_k$ for uploading.
Suppose that UAV $k$ uploads the gradient with a power of $p_{c,k}>0$.
The received rate of the server from UAV $k$ is
\begin{equation} \label{equ:UAV-server}
\begin{array}{rll}
    \!\!\!\!\!\!&r_{k}(\bs u_k,B_{k})
    \displaystyle =B_{k}\log\left(1+\frac {p_{c,k} h_{k}(\bs u_k)}{B_kN_0}\right),
\end{array}
\end{equation}
where the bandwidth allocation is limited by the total bandwidth $B_c$ such that $\sum\limits_{k \in \mc{K}} B_k = B_c$, and $N_0$ is the noise power spectral density.

\subsection{UAV-target Sensing Model}\label{sec:sensing}
%
The dedicated FMCW signal consisting of multiple up-chirps is used for UAVs' sensing \cite{ISAC_implement_motion}.
The sensing signal of UAV $k$ at time $t$ is defined as $x_k(t)$.
We use the primitive-based method \cite{Sens_doppler1} to model the scattering from the whole human body to UAVs. The human contains $L$ body primitives.
The scattering along a direct/indirect reflection path is approximated using the superposition of the returns from $L$ body primitives.
The signal received by UAV $k$ at time $t$ is\footnote{
UAVs are performing human motion recognition in different places and far apart. Therefore, UAVs can sense their targets through the same sensing bandwidth without interfering with each other. } 
\begin{equation} \label{equ:sense_signal}
\begin{array}{rll}
    \!\!\!\!\!\! y_{k}(t)
    \displaystyle =\sqrt {p_{s,k}(t)}\times\frac {A_0} {\sqrt{4\pi}}\sum\limits_{l=1}^L\frac{\sqrt{G_{k,l}(t)}}{d_{k,l}^2(t)}\exp\left(-j\frac{4\pi f_c}{c}d_{k,l}(t)\right)x_{k,l}\left(t-2\frac{d_{k,l}(t)}{c}\right) + n_k(t),
\end{array}
\end{equation}
where $p_{s,k}(t)$ is the sensing transmission power of UAV $k$, $A_0$ is the gain of the antenna, $G_{k,l}(t)$ is the complex amplitude proportional to the radar cross section (RCS) of the $l$-th primitive, $d_{k,l}(t)$ is the distance from the $l$-th primitive to UAV $k$, and $n_k(t)$ is the signal due to ground clutter and noise.
Note that we only consider the direct reflection path in \eqref{equ:sense_signal}, since UAVs in aerial space are less susceptible to indirect reflection paths.

\subsection{Training Time}
%
Based on the learning procedure in Section \ref{sec:learning_model}, the latency for each UAV $k$ in training round $n$ consists of the following three parts\footnote{We ignore the global download time of step 1) in Section \ref{sec:learning_model} because the server can use the entire frequency band to broadcast the global model to all UAVs. Usually, the server has a more considerable transmit power.}:

\begin{itemize}
\item \emph{Sensing time:} We consider that each UAV takes time $T_0$ to generate a sample (see Section \ref{sec:sense_analysis} for details). The number of samples generated by UAV $k$ is $\delta_{k}$. Therefore, the sensing time of UAV $k$ in round $n$ is
\begin{equation} \label{equ:sensing time}
T_{s,k}^{(n)}(\delta_k)=T_0\cdot \delta_k.
\end{equation}
%
\item \emph{Local computation time:} The indicator function $\mathbf 1_{k}^{(n)}(\bs u_k)$ defined in \eqref{equ:q_indicator} indicates whether UAV $k$ successfully senses the target. Since UAVs partly participate in FEEL, the local calculation time of UAV $k$ in round $n$ is
\begin{equation} \label{equ:local updating time}
T_{cp,k}^{(n)}(\delta_k,\bs u_k)=\mathbf 1_{k}^{(n)}(\bs u_k) \cdot\frac {\delta_k \xi}{f_{\text{cpu}}},
\end{equation}
where $\xi$ is the CPU cycles required to execute one sample in the local gradient computing, and $f_{\text{cpu}}$ is each UAV's CPU frequency (cycles/s).

\item \emph{Local upload time:} Each UAV that successfully senses the target needs to upload its gradient to the server. We consider the data size of the gradient to be a constant, defined as $D_0$.
The local upload time of UAV $k$ in round $n$ is
\begin{equation} \label{equ:local uploading time}
T_{cm,k}^{(n)}(\bs u_k,B_k)= \mathbf 1_{k}^{(n)}(\bs u_k) \cdot\frac {D_0}{r_k(\bs u_k,B_k)}.
\end{equation}
\end{itemize}

We consider the synchronous FL on the server side, i.e., aggregation happens until local gradients from all participating UAVs are received. Then, the latency for round $n$ is\footnote{For ease of illustration, we denote $\{\delta_k, \forall k\in\mc K\}$, $\{\bs u_k, \forall k\in\mc K\}$, and $\{B_k, \forall k\in\mc K\}$ as $\{\delta_k\}$, $\{\bs u_k\}$, and $\{B_k\}$, respectively.}
\begin{equation} \label{equ:round latency}
T^{(n)}\left(\{\delta_k\}, \{\bs u_k\}, \{B_k\}\right)= \max\limits_{k\in\mc K}\left\{T_{s,k}^{(n)}(\delta_k)+T_{cp,k}^{(n)}(\delta_k, \bs u_k)+T_{cm,k}^{(n)}(\bs u_k,B_k)\right\}.
\end{equation}
%

\section{Sensing Analysis} \label{sec:sense_analysis}

In this section, we analyze the effect of UAV deployment on the quality of sensing samples.
Micro-Doppler signature is a characteristic of human motion. 

We aim to verify our hypothesis that there exists a threshold value of the sensing evaluation angle that can guarantee a satisfactory quality of the data samples.
A common technique for micro-Doppler analysis is time-frequency representation, such as spectrograms\cite{ISAC_implement_motion}. Therefore, the received signal in \eqref{equ:sense_signal} requires further preprocessing to obtain the micro-Doppler signature.
Specifically, each spectrogram is generated from the received signal over a time period $T_0=MT_p$, where $M$ is the number of chirps and $T_p$ is the duration of each chirp.
Since each UAV $k$ can control $\delta_k$ collected samples (i.e. spectrograms), as defined in Section \ref{sec:learning_model}, UAV $k$ needs to sense the target for a continuous duration of $\delta_kT_0$.
The received raw signal in \eqref{equ:sense_signal} first needs to be sampled with rate $f_s$.
After that, the signal in the $\rho$-th sensing duration can be reconstructed as a 2D sensing data matrix $\mathbf Y_{k}(\rho)\in \mathbb C^{f_sT_p\times M}$ \cite{ISCC_liu}, i.e.,
\begin{equation} \label{equ:sense_signal_matrix}
\begin{array}{rll}
    \!\!\!\!\!\!&\mathbf Y_{k}(\rho)
    \displaystyle =\sqrt {p_{s,k}(\rho)}\times\frac {A_0} {\sqrt{4\pi}}\sum\limits_{l=1}^L\frac{\sqrt{G_{k,l}(\rho)}}{d_{k,l}^2(\rho)}\exp\left(-j\frac{4\pi f_c}{c}d_{k,l}(\rho)\right)\mathbf X_{k,l}(\rho) + \mathbf N_k(\rho).
\end{array}
\end{equation}
Sensing power $p_{s,k}(\rho)$ and distance $d_{k,l}(\rho)$ will remain constant for each $\rho$-th duration, where $\rho\in [ 1, 2, \dots, \delta_k ]$.
The element of row $r$ and column $m$ of matrix $\mathbf Y_{k}(\rho)$ is $[\mathbf Y_{k}(\rho)]_{r,m}=y_k\left((\rho-1)T_0\\ +(m-1)T_p+\frac{r}{f_s}\right)$, $[\mathbf X_{k,l}(\rho)]_{r,m}=x_{k,l}\left((\rho-1)T_0+(m-1)T_p+\frac{r}{f_s}-2\frac{d_{k,l}(\rho)}{c}\right)$, and $[\mathbf N_{k}(\rho)]_{r,m}=n_k\left((\rho-1)T_0+(m-1)T_p+\frac{r}{f_s}\right)$, where $r\in [1,2, \dots, f_sT_p]$ represents the sampled signal index, and $m\in [1, 2, \dots, M]$ represents the chirp index.

Next, we adapt the short-time Fourier transform (STFT) with a sliding window function $o[w]$ of length $W$ to generate the range-Doppler-time (RDT) cube in order to obtain time-frequency features of $\mathbf Y_{k}(\rho)$. The RDT cube of $\mathbf Y_{k}(\rho)$ after STFT is given by \cite{ISAC_implement_motion, ISCC_liu}
\begin{equation} \label{equ:sense_signal_matrix_FT}
\begin{array}{lll}
    \!\!\!\! \widetilde{\mathbf Y}_{k}(\rho,f)
    \displaystyle =\sqrt {p_{s,k}(\rho)}\times\frac {A_0} {\sqrt{4\pi}}\sum\limits_{l=1}^L\frac{\sqrt{G_{k,l}(\rho)}}{d_{k,l}^2(\rho)}\exp\left(-j\frac{4\pi f_c}{c}d_{k,l}(\rho)\right)\widetilde{\mathbf X}_{k,l}(\rho,f) + \widetilde{\mathbf N}_k(\rho,f).
\end{array}
\end{equation}
The element of $\widetilde{\mathbf X}_{k,l}(\rho,f)$ is
\begin{equation} \label{equ:sense_signal_matrix_FT_par1}
\begin{array}{rll}
    \!\!\!\!\!\! \left[\!\widetilde{\mathbf X}_{k,l}(\rho,f)\!\right]_{r,m}
    \!\!\!=\!\!\sum\limits_{w=0}^W\!x_{k,l}\!\left(\!(\rho\!-\!1)T_0\!+\!\left(m(W\!-\!Q)\!-\!w\!-\!1\right)T_p\!+\!\frac{r}{f_s}\!-\!2\frac{d_{k,l}(\rho)}{c}\!\right)\!\exp(\!-\!j\frac{2\pi fw}{W})o(w),
\end{array}
\end{equation}
where $Q$ is the number of overlapping points, $f\in[0,W-1]$ and $m\in\left[1,\frac{M-Q}{W-Q}\right]$ are the frequency and temporal shift index, respectively.
We non-coherently integrate $\widetilde{\mathbf X}_{k,l}(\rho,f)$ within all the range bins, and obtain the integrated STFT as $\left[\overline{\mathbf X}_{k,l}(\rho,f)\right]_m=\sum\limits_{r=1}^{fsT_p}\left[\widetilde{\mathbf X}_{k,l}(\rho,f)\right]_{r,m}$.
Moreover, the element definition of $\widetilde{\mathbf N}_{k}(\rho,f)$ is similar to $\widetilde{\mathbf X}_{k,l}(\rho,f)$. 
Then, we have
\begin{equation} \label{equ:sense_signal_matrix_FT_int}
\begin{array}{rll}
    \!\!\!\! \overline{\mathbf Y}_{k}(\rho,f)
    \displaystyle =\sqrt {p_{s,k}(\rho)}\times\frac {A_0} {\sqrt{4\pi}}\sum\limits_{l=1}^L\underbrace{\frac{\sqrt{G_{k,l}(\rho)}}{d_{k,l}^2(\rho)}\exp\left(-j\frac{4\pi f_c}{c}d_{k,l}(\rho)\right)\overline{\mathbf X}_{k,l}(\rho,f)}_{\text{Useful information}}
     + \underbrace{\overline{\mathbf N}_k(\rho,f)}_{\text{Interference}},
\end{array}
\end{equation}
where the clutter and noise related term $\overline{\mathbf N}_k(\rho,f)$ represents the interference, and the rest represents useful information in the spectrogram.

\begin{figure*}[t]
\hspace{-0.5cm}
\centering
\setlength{\belowcaptionskip}{-0.3cm}
\setlength{\abovecaptionskip}{-0.1cm}
\begin{minipage}[t]{0.47\linewidth}
\includegraphics[width=7.5cm, trim = 0cm 0cm 0cm 0cm, clip = true]{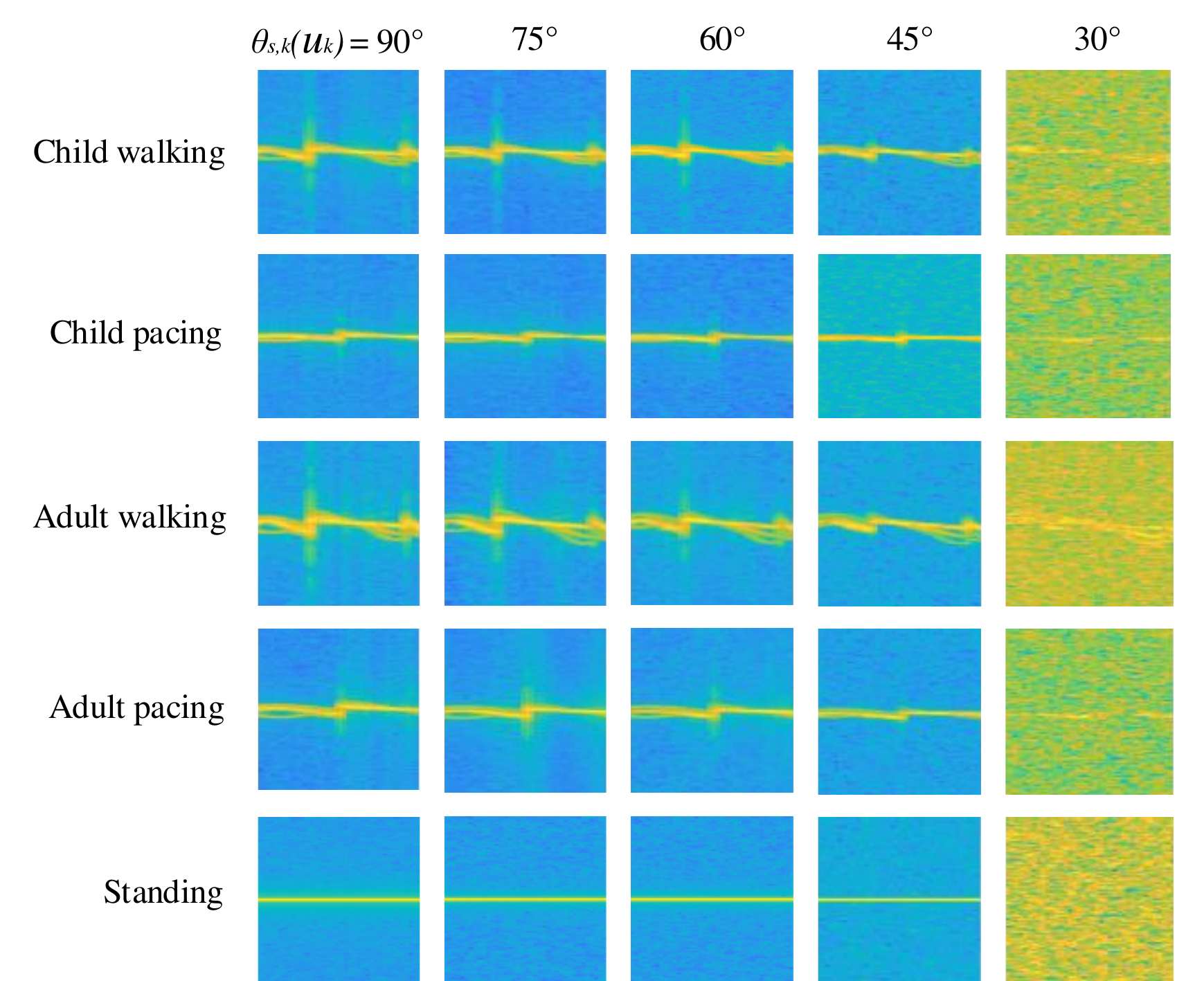}
\caption{\footnotesize{The spectrograms of five human motions at different sensing elevation angles $\theta_{s,k}(\bs u_k)$.}}
\label{fig:quality}
\end{minipage}
\quad
\begin{minipage}[t]{0.47\linewidth}
\includegraphics[width=6.5cm, trim = 0cm 0cm 0cm 0cm, clip = true]{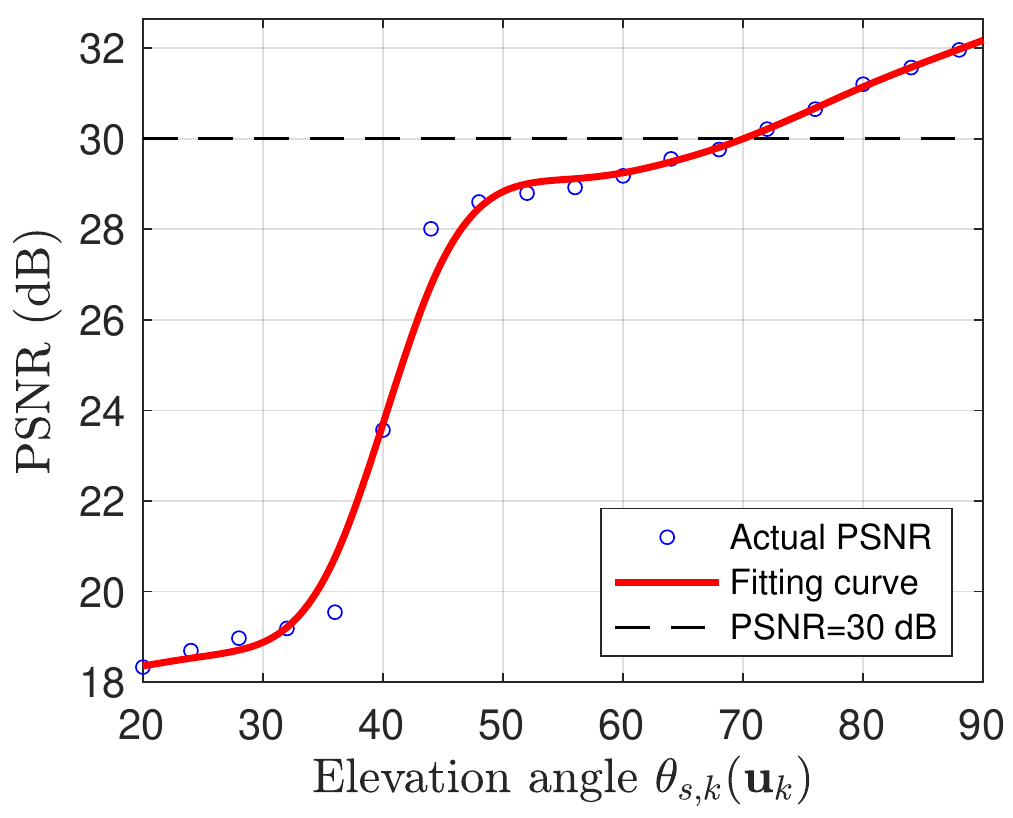}
\caption{\footnotesize{The quality of spectrogram PSNR versus sensing elevation angle $\theta_{s,k}(\bs u_k)$.}}
\label{fig:PSNR}
\end{minipage}
\end{figure*}

\begin{remark} \label{remark:quality}
\textbf{Impact of UAV position $\{\bs u_k\}$ on spectrogram quality.}
\emph{From \eqref{equ:sense_signal_matrix_FT_int}, we can see that the quality of spectrograms improves as we reduce $d_{k,l}(\rho)$, which is the distance between UAV $k$ and the $l$-th primitive of the human body.
Since each UAV $k$ flies at a constant altitude, distance $d_{k,l}(\rho)$ is minimized when the UAV hovers over the target, which leads to the best quality of the spectrograms.
The experimental results also validate this in Fig. \ref{fig:quality} and Fig. \ref{fig:PSNR}.
For ease of illustration, we show the relationship between the spectrogram and sensing elevation angle $\theta_{s,k}(\bs u_k)\in[0^\circ, 90^\circ]$. Note that higher $\theta_{s,k}(\bs u_k)$ leads to smaller $d_{k,l}(\rho)$.
We apply the wireless sensing simulator in \cite{sense_platform} to simulate various human motions and generate five human motion spectrograms as shown in Fig. \ref{fig:quality}. We can observe that when $\theta_{s,k}(\bs u_k)=30^\circ$, the quality of the spectrogram becomes the worst.
Moreover, we use the peak-signal-to-noise ratio (PSNR) index\footnote{
PSNR is a metric measuring the perceptual difference between two similar images, widely used in image compression and computer vision. It is a complete reference metric that requires two images, i.e., a reference image and a processed image.}
to visualize the quality of the spectrogram \cite{PSNR}.
Fig. \ref{fig:PSNR} shows the average PSNR versus $\theta_{s,k}(\bs u_k)$, and a fitting curve that is very close to actual PSNR.
We can see that PSNR increases with $\theta_{s,k}(\bs u_k)$ and reaches the highest value at $\theta_{s,k}(\bs u_k)=90^\circ$.
Since the spectrograms have a \emph{very good level} of quality when PSNR$\geq 30$ dB \cite{PSNR_30dB}, we can set a threshold for the sensing evaluation angle, e.g. $\theta_{s,k}(\bs u_k)\geq \theta_0 $.
As a result, each UAV can generate data samples (spectrograms) of roughly satisfactory quality.}

\end{remark}

\section{Problem Formulation} \label{sec:problem_formulation}

In this section, we aim to speed up the training process under a specific optimality gap of the loss function.
In Sections \ref{sec:assumption} and \ref{sec:theorm_results}, we analyze the convergence of the UAV-assisted FEEL training process and derive an upper bound on the loss function.
In Section \ref{sec:problem}, we formulate the training time minimization problem through bandwidth allocation, batch size design, and UAV position design, which balances the sensing, computation, and communication during training.

\subsection{Assumptions} \label{sec:assumption}

We consider general smooth convex learning problems with the following commonly adopted assumptions \cite{SGD_1,SGD_2, quantize, Zhu_one_bit}.
\begin{assumption}\label{assump:smooth}
(Smoothness).
Each local function $F_k(\mathbf w)$ is Lipschitz continuous with Lipschitz constant $L$:
$\forall \mathbf w_i$ and $\mathbf w_j$, $F_k(\mathbf w_i)\leq F_k(\mathbf w_j)+(\mathbf w_i-\mathbf w_j)^\mathrm{T}\nabla F_k(\mathbf w_j)+\frac L 2 \Vert\mathbf w_i-\mathbf w_j\Vert^2,\ \forall k\in\mc K$.
\end{assumption}
\begin{assumption}\label{assump:convex}
(Strong convexity).
Each local function $F_k(\mathbf w)$ is strongly convex in that there exists a constant $\mu>0$ such that
$\forall \mathbf w_i$ and $\mathbf w_j$, $F_k(\mathbf w_i)\geq F_k(\mathbf w_j)+(\mathbf w_i-\mathbf w_j)^\mathrm{T}\nabla F_k(\mathbf w_j)+\frac {\mu} 2 \Vert\mathbf w_i-\mathbf w_j\Vert^2,\ \forall k\in\mc K$.
\end{assumption}
\begin{assumption}\label{assump:Fg}
(Unbiasedness and bounded variance of local gradients).
The mean and variance of stochastic gradient $\mathbf g^{(n)}_k$ of local loss function $F_k(\mathbf w)$, $ \forall k\in\mc K$, satisfy that
\begin{equation} \label{equ:unbiased_exp_var}
\begin{array}{rll}
&\mathbb E \left[\mathbf g_k^{(n)}\right]=\nabla F_k(\mathbf w^{(n)}),\\
&\mathbb E\left[\left\Vert \mathbf g_k^{(n)}-\nabla F_k(\mathbf w^{(n)})\right\Vert^2\right]
\leq\displaystyle\frac{\sigma_k^2}{\delta_k},\nonumber
\end{array}
\end{equation}
where $\delta_k$ is the batch size in calculating gradient $\mathbf g_k^{(n)}$.
\end{assumption}
\begin{assumption}\label{assump:FF}
(Bounded data variance).
\begin{equation} \label{equ:data_var}
\begin{array}{rll}
\mathbb E\left[\left\Vert\nabla F_k(\mathbf w^{(n)})-\nabla F(\mathbf w^{(n)})\right\Vert^2\right]
\leq\Lambda_k^2, \nonumber
\end{array}
\end{equation}
which measures the heterogeneity of local datasets. 
\end{assumption}
%

\subsection{Theoretical results} \label{sec:theorm_results}
Under Assumptions \ref{assump:smooth}-\ref{assump:FF}, the following theorem establishes the convergence rate.
\begin{theorem}\label{theorem:Fbound}
Consider the UAV-assisted FEEL system with a fixed learning rate $\eta$, satisfying
\begin{equation} \label{equ:learningratec}
\begin{array}{rll}
0<\eta<\displaystyle\frac 1 {4L}.
\end{array}
\end{equation}
The expected optimality gap of the loss function after $n$ rounds is upper bounded by
\begin{equation} \label{equ:Fbound}
\begin{array}{lll}
\!\!\!\!\Exp{F(\mathbf w^{(n)}) \!\!-\!\! F(\mathbf w_*)} \! \leq \! G\displaystyle\frac {1\!-\! \left(1\!-\!\mu\eta\left(1\!-\!4L\eta\right)\right)^n}{\mu\eta\left(1\!-\!4L\eta\right)}
\!+\!\left(1\!-\!\mu\eta\left(1\!-\!4L\eta\right)\right)^n\Exp{F(\mathbf w^0)\!-\!F(\mathbf w_*)},
\end{array}
\end{equation}
with
\begin{equation} \label{equ:G}
\begin{array}{lll}
\!\!\!\!\! G\!=\!\eta\!\sum\limits_{k\in\mc K}\!\alpha_k^2 \!\sum\limits_{k\in\mc K}\!\left(\displaystyle\frac{\sigma_k^2}{\delta_k}\!+\!\Lambda_k^2\right)
\!+\!\displaystyle L\eta^2\!\sum\limits_{k\in\mc K}\!\displaystyle\beta_k \!\left(\frac{\sigma_k^2}{\delta_k}\!+\!2\Lambda_k^2\right)
\!+\!2L\eta^2 \!\sum\limits_{k\in\mc K}\!\gamma_k \!\left(\left(q_{s,k}(\bs u_k)\!-\!\overline q_s\right)^2\!+\!\overline q_s^2\right)\!\Lambda_{k}^2,
\end{array}
\end{equation}
where $\mathbf w_*$ is the optimal model as $\mathbf w_*=\arg\min\limits_{\mathbf w}F(\mathbf w)$, each $\alpha_k, \beta_k, \gamma_k, \forall k\in\mc K$ is a function of $\left\{q_{s,k}(\bs u_k), \forall k\in\mc K\right\}$, as shown in equations \eqref{equ:g}, \eqref{equ:M11}, and \eqref{equ:M22final} in Appendix \ref{app:Fbound}, and $\overline q_s=\frac 1 K\sum\limits_{k\in\mc K}q_{s,k}(\bs u_k)$ is the average successful sensing probability of UAVs.
We represent the right hand side of \eqref{equ:Fbound} as $\Phi\left(\{\delta_{k}\},\{\bs u_k\}, n\right)$, and we have $\Exp{F(\mathbf w^{(n)})-F(\mathbf w_*)}\leq\Phi\left(\{\delta_{k}\},\{\bs u_k\}, n\right)$.
\end{theorem}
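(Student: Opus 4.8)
The plan is to reduce \eqref{equ:Fbound} to a single per-round contraction of the form
\[
\Exp{F(\mathbf{w}^{(n+1)})-F(\mathbf{w}_*)}\leq\left(1-\mu\eta(1-4L\eta)\right)\Exp{F(\mathbf{w}^{(n)})-F(\mathbf{w}_*)}+G,
\]
and then to unroll it. Once this one-step inequality is established, the stated bound follows mechanically: iterating it $n$ times produces a geometric series in the ratio $\rho:=1-\mu\eta(1-4L\eta)$, whose closed form $\rho^n\Exp{F(\mathbf{w}^0)-F(\mathbf{w}_*)}+G\frac{1-\rho^n}{1-\rho}$ is exactly $\Phi(\{\delta_k\},\{\bs u_k\},n)$, since $1-\rho=\mu\eta(1-4L\eta)$. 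The learning-rate condition \eqref{equ:learningratec}, $0<\eta<\frac{1}{4L}$, is precisely what forces $\rho\in(0,1)$ (using $\mu\leq L$), so the series converges and the bound is meaningful.

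To obtain the per-round inequality I would start from smoothness (Assumption~\ref{assump:smooth}) applied to the aggregation step \eqref{equ:w}. Writing the update as $\mathbf{w}^{(n+1)}-\mathbf{w}^{(n)}=-\eta\tilde{\mathbf{g}}^{(n)}$ with $\tilde{\mathbf{g}}^{(n)}=\big(\sum_{k}\mathbf 1_k^{(n)}(\bs u_k)\mathbf g_k^{(n)}\big)/\big(\sum_{k}\mathbf 1_k^{(n)}(\bs u_k)\big)$ gives
\[
F(\mathbf{w}^{(n+1)})\leq F(\mathbf{w}^{(n)})-\eta\left\langle\nabla F(\mathbf{w}^{(n)}),\tilde{\mathbf{g}}^{(n)}\right\rangle+\frac{L\eta^2}{2}\left\|\tilde{\mathbf{g}}^{(n)}\right\|^2.
\]
I would then take expectation over the two independent randomness sources — the mini-batch seeds governing $\mathbf g_k^{(n)}$ (Assumption~\ref{assump:Fg}) and the Bernoulli sensing indicators $\mathbf 1_k^{(n)}$ — conditioning first on the indicator vector. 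Unbiasedness $\Exp{\mathbf g_k^{(n)}}=\nabla F_k(\mathbf w^{(n)})$ resolves the inner-product term into a weighted combination of the $\nabla F_k$, while the variance bound $\sigma_k^2/\delta_k$ and the heterogeneity bound $\Lambda_k^2$ (Assumption~\ref{assump:FF}) control the second-moment term. Finally, strong convexity (Assumption~\ref{assump:convex}) in its Polyak--{\L}ojasiewicz form $\|\nabla F(\mathbf w^{(n)})\|^2\geq 2\mu\left(F(\mathbf w^{(n)})-F(\mathbf w_*)\right)$ converts the surviving gradient-norm term into the optimality gap and produces the contraction factor $\rho$, leaving the residual $G$.

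The hard part — and the origin of every term in \eqref{equ:G} — is the \emph{random normalizing denominator} $\sum_{k}\mathbf 1_k^{(n)}(\bs u_k)$ in \eqref{equ:w}. Because the effective weight on UAV $k$, namely $\mathbf 1_k^{(n)}/\sum_{j}\mathbf 1_j^{(n)}$, is a ratio of correlated Bernoulli sums, its expectation does \emph{not} collapse to $q_{s,k}/\sum_j q_{s,j}$, and the aggregated direction is therefore a \emph{biased} estimate of $\nabla F$. I would encode the required first- and second-order moments of these random weights as the deterministic functions $\alpha_k,\beta_k,\gamma_k$ of $\{q_{s,k}(\bs u_k)\}$ given in \eqref{equ:g}, \eqref{equ:M11}, \eqref{equ:M22final}, and carry them through the two bounds above. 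The bias is what weakens the contraction: handling the inner-product term $-\eta\langle\nabla F,\Exp{\tilde{\mathbf{g}}^{(n)}}\rangle$ requires a Young-type split that halves the descent coefficient (yielding $\mu\eta$ rather than $2\mu\eta$) and simultaneously injects a heterogeneity penalty. Tracking the moments then yields exactly the three groups in $G$: a gradient-variance group scaled by $\sigma_k^2/\delta_k$, a smoothness group, and the heterogeneity group $\gamma_k\left((q_{s,k}(\bs u_k)-\overline q_s)^2+\overline q_s^2\right)\Lambda_k^2$. The factor $(q_{s,k}(\bs u_k)-\overline q_s)^2$ is the theorem's punchline — a variance-of-probabilities penalty that inflates $G$ under non-uniform sensing and vanishes when all $q_{s,k}$ coincide. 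A minor point I would dispatch separately is the measure-zero event in which no UAV senses successfully, rendering \eqref{equ:w} a $0/0$ form; this is absorbed by adopting the convention that the model is left unchanged on that event.
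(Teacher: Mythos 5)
Your proposal follows essentially the same route as the paper's proof: a one-step descent from smoothness, with the inner-product and squared-update terms bounded separately via the moments of the random normalized participation weights (yielding the $\alpha_k,\beta_k,\gamma_k$ and, through the decomposition $q_{s,k}=(q_{s,k}-\overline q_s)+\overline q_s$, the heterogeneity penalty), followed by the Polyak--{\L}ojasiewicz inequality and recursive unrolling. One small inaccuracy: the all-UAVs-fail event is not measure-zero --- it has probability $\prod_{k\in\mc K}(1-q_{s,k}(\bs u_k))>0$ --- and the paper handles it by conditioning on $\sum_{k\in\mc K}\mathbf 1_k^{(n)}(\bs u_k)\neq 0$ throughout rather than by your leave-the-model-unchanged convention, though neither treatment fully accounts for that event in the final contraction.
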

\begin{proof} \label{proof:Fbound}
See Appendix \ref{app:Fbound}.
\end{proof}

It can be seen from \eqref{equ:G} that when the UAVs have different successful sensing probabilities $q_{s,k}(\bs u_k)$, the negative effects caused by data heterogeneity $2L\eta^2 \!\sum\limits_{k\in\mc K}\!\gamma_k \!\left(\left(q_{s,k}(\bs u_k)\!-\!\overline q_s\right)^2\!+\!\overline q_s^2\right)\!\Lambda_{k}^2$ will be amplified.
However, these negative effects can be mitigated if UAVs have uniform successful sensing  probabilities $q_{s,k}(\bs u_k)=q_s, \forall k\in \mc K$. 
In this case, the training process can still converge to a proper stationary solution. 
Next, we derive the following corollary.

\begin{corollary}\label{corollary:Gsameq}
Under learning rate constraint \eqref{equ:learningratec}, when UAVs have the uniform successful sensing probability (i.e., $q_{s,k}(\bs u_k) = q_s, \forall k \in \mc{K}$), we can obtain an upper bound of $G$ in \eqref{equ:G} as
\begin{equation} \label{equ:Gsameq}
\begin{array}{rll}
G\leq\displaystyle\frac {\eta} {K}\sum\limits_{k\in\mc K}\left(\displaystyle\frac{\sigma_k^2}{\delta_k}+\Lambda_k^2\right)
+\frac{2L\eta^2}{K^2 \chi q_s}\sum\limits_{k\in\mc K}\left(\displaystyle\frac{\sigma_k^2}{\delta_k}+2\Lambda_k^2\right)
+\frac{4L\eta^2}{K q_s^{K-2}}\sum\limits_{k\in\mc K}\Lambda_{k}^2,
\end{array}
\end{equation}
where $\chi=1-(1-q_{s,\text{min}})^K$, and $q_{s,\text{min}}=\min\limits_{\bs u_k}q_{s,k}(\bs u_k)$ with constraint $\theta_{s,k}(\bs u_k)\geq\theta_0, \forall k \in \mc{K}$.

\end{corollary}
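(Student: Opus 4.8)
The plan is to specialize the three weight functions $\alpha_k$, $\beta_k$, $\gamma_k$ of \eqref{equ:g}, \eqref{equ:M11}, and \eqref{equ:M22final} to the uniform regime $q_{s,k}(\bs u_k)=q_s$ and then bound them. The organizing observation is that all three coefficients are built from the random participation count $N^{(n)}=\sum_{k\in\mc K}\mathbf 1_k^{(n)}(\bs u_k)$, which governs the normalization $1/\sum_{k}\mathbf 1_k^{(n)}(\bs u_k)$ in the aggregation rule \eqref{equ:w}. Because the indicators in \eqref{equ:q_indicator} are independent Bernoulli variables, under uniform sensing probability $N^{(n)}$ is $\mathrm{Binomial}(K,q_s)$, and the coefficients reduce to conditional inverse moments of $N^{(n)}$ on the event $\{N^{(n)}\geq 1\}$ on which an update actually occurs. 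Establishing \eqref{equ:Gsameq} therefore amounts to bounding these inverse moments term by term.

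First I would dispatch the mean-related and heterogeneity-difference pieces, which need no hard estimate. By symmetry under $q_{s,k}=q_s$, the aggregation weight $\alpha_k$ collapses to the common value $\bigl(1-(1-q_s)^K\bigr)/K$, so that $\sum_{k\in\mc K}\alpha_k^2=\bigl(1-(1-q_s)^K\bigr)^2/K\leq 1/K$, and the first summand of $G$ is immediately bounded by $\frac{\eta}{K}\sum_{k\in\mc K}\bigl(\sigma_k^2/\delta_k+\Lambda_k^2\bigr)$, matching the first term of \eqref{equ:Gsameq}. In the third summand, setting $q_{s,k}=\overline q_s=q_s$ annihilates the $\bigl(q_{s,k}-\overline q_s\bigr)^2$ contribution and leaves $2L\eta^2 q_s^2\sum_{k\in\mc K}\gamma_k\Lambda_k^2$, so only the $\overline q_s^2=q_s^2$ part survives.

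Next I would bound the two genuine inverse moments hidden in $\beta_k$ and $\gamma_k$, since neither $\Exp{1/N^{(n)}}$ nor $\Exp{1/(N^{(n)})^2}$ on the relevant events has an elementary closed form. The factor $\chi=1-(1-q_{s,\min})^K$ enters the second term as a position-uniform lower bound on the update probability: since the feasibility constraint $\theta_{s,k}(\bs u_k)\geq\theta_0$ and the monotonicity of \eqref{equ:los_s} give $q_s\geq q_{s,\min}$, we have $\Pr[N^{(n)}\geq 1]=1-(1-q_s)^K\geq\chi$, and inverting this survival probability converts the $1/\bigl(1-(1-q_s)^K\bigr)$ factor produced by conditioning into the bound $1/\chi$; combined with the classical binomial second-inverse-moment estimate bounding $\Exp{1/(N^{(n)})^2}$ by a multiple of $1/(Kq_s)^2$ (obtained, e.g., from $\frac{1}{(1+X)^2}\leq\frac{2}{(1+X)(2+X)}$ for $X\sim\mathrm{Binomial}(K-1,q_s)$ and the exact formula for $\Exp{1/((1+X)(2+X))}$), this yields the $2/(K^2\chi q_s)$ coefficient of the second term. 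The $\gamma_k$ term is governed instead by a cross-participation second inverse moment, whose worst case is the minimal configuration in which only the two UAVs in question participate; tracking the weight of that configuration is what produces the $1/q_s^{K-2}$ factor of the third term after the $q_s^2$ from $\overline q_s^2$ is absorbed.

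The main obstacle is exactly this control of the inverse moments of the conditioned binomial $N^{(n)}$. Convexity of $x\mapsto 1/x$ and $x\mapsto 1/x^2$ points the wrong way, so I cannot simply invert $\Exp{N^{(n)}}=Kq_s$; instead I would lower-bound $N^{(n)}$ by the smallest value it can take on each conditioning event and lower-bound the survival probability $\Pr[N^{(n)}\geq 1]$ by its $q_{s,\min}$ surrogate $\chi$. Pinning down the exact power $K-2$ in the $\gamma_k$ estimate, i.e.\ verifying that the minimal-participation configuration is the dominant one, is the delicate step. Once both inverse-moment bounds are in hand, the remaining work is routine: substitute them into $L\eta^2\sum_{k}\beta_k(\sigma_k^2/\delta_k+2\Lambda_k^2)$ and $2L\eta^2 q_s^2\sum_{k}\gamma_k\Lambda_k^2$, collect constants, and add the three contributions to recover \eqref{equ:Gsameq}.
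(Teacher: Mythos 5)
Your overall strategy is the paper's: specialize $\alpha_k,\beta_k,\gamma_k$ to the uniform case, exploit the exchangeability of the Bernoulli indicators so that everything reduces to inverse moments of the binomial participation count $N^{(n)}=\sum_{k}\mathbf 1_k^{(n)}(\bs u_k)$, and bound those moments term by term. The first two terms of \eqref{equ:Gsameq} come out correctly. One harmless slip: under uniform $q_s$ the weights in \eqref{equ:g} already carry the normalization $1-(1-q_s)^K$ in their denominators, so $\alpha_k=1/K$ exactly (consistent with $\sum_k\alpha_k=1$ established in the paper), not $(1-(1-q_s)^K)/K$; your bound $\sum_k\alpha_k^2\le 1/K$ survives either way. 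For $\beta_k$, your route via $\Exp{\mathbf 1_k^{(n)}(\bs u_k)/(N^{(n)})^2}=q_s\,\Exp{1/(1+X)^2}$ with $X\sim\mathrm{Binomial}(K-1,q_s)$, the estimate $1/(1+X)^2\le 2/((1+X)(2+X))$, and division by $\Pr[N^{(n)}\ge 1]\ge\chi$ does land on $2/(K^2\chi q_s)$. The paper instead uses the identity $C_{K-1}^{l-1}/l=C_K^l/K$ to collapse $1/l^2$ to a first inverse moment and cites $\sum_{l}\frac1l C_K^l q_s^l(1-q_s)^{K-l}\le 2/(Kq_s)$ from \cite{Zhu_one_bit}; the two computations are essentially equivalent, and yours has the advantage of being self-contained.

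The genuine gap is the third term. Your heuristic --- that the $q_s^{-(K-2)}$ factor comes from ``the minimal configuration in which only the two UAVs in question participate'' --- points in the wrong direction and does not yield a proof. The actual mechanism in \eqref{equ:M22final} is the opposite: the conditioning event is $\{N^{(n)}\ge 2\}$, two factors of $q_s$ per configuration have already been peeled off into $M_{221}$ (where they resurface as $\overline q_s^{\,2}=q_s^2$), and what remains is the ratio $\sum_{l\ge2}C_K^l(1-q_s)^{K-l}\,/\,\sum_{l\ge2}C_K^l q_s^l(1-q_s)^{K-l}$. This is bounded by $q_s^{-K}$ via the crude uniform estimate $q_s^l\ge q_s^K$ for all $2\le l\le K$; that is, the denominator is lower-bounded by pretending every configuration is as unlikely as \emph{full} participation, so the worst case is $l=K$, not $l=2$. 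Multiplying $\gamma_k=2/(Kq_s^K)$ by the recovered $q_s^2$ and the prefactor $2L\eta^2$ gives $4L\eta^2/(Kq_s^{K-2})$. Without this (or an equivalent) denominator estimate, your sketch does not establish the third term of \eqref{equ:Gsameq}; everything else is in order.
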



%
\begin{proof} \label{proof:corollary}
See Appendix \ref{app:Gsameq}.
\end{proof}
\begin{remark}\label{remark:NT_tradeoff}
%
From Theorem \ref{theorem:Fbound} and Corollary \ref{corollary:Gsameq}, we have two important insights:
\textbf{1) From \eqref{equ:Fbound} and \eqref{equ:Gsameq}, it can be seen that the convergence rate is affected by the batch size $\{\delta_k\}$ and the successful sensing probability $q_s$.}
Specifically, we  observe that a decrease in $\{\delta_k\}$ slows down the training convergence in \eqref{equ:Gsameq}, because fewer data samples are used for each training round.
Moreover, a smaller $q_s$ also deteriorates the training convergence in \eqref{equ:Gsameq}.
Since only UAVs that successfully sense the targets can participate in the training, a smaller $q_s$ represents the smaller successful sensing probability, and the smaller probability of the UAVs participating in the training.
Therefore, the left hand side of \eqref{equ:Fbound} takes more training rounds to converge.
\textbf{2) From \eqref{equ:Fbound}, the loss function eventually converges as $n\to\infty$, and $\Exp{F(\mathbf w^{(n)}) \!\!-\!\! F(\mathbf w_*)}$ will reach $\frac{G}{\mu\eta(1-4L\mu)}$ instead of diminishing to zero.}
This implies that partial participation of UAVs has a negative impact on the training process, and causes the loss function to converge to a biased solution.
Furthermore, the biased solution $\frac{G}{\mu\eta(1-4L\mu)}$ strongly depends on $q_s$ and $\{\delta_k\}$, since it contains $G$ restricted by \eqref{equ:Gsameq}.
\end{remark}

\subsection{Problem Formulation}\label{sec:problem}
As probabilistic sensing is inevitable in delay-constrained wireless sensing systems, our goal is to investigate its impact on the training time of UAV-assisted FEEL.
From \eqref{equ:los_s}, we know that the successful sensing probability is determined by UAV position.
Therefore, we formulate a resource allocation problem to minimize the total training time by optimizing the batch size $\{\delta_{k}\}$, UAV position $\{\bs u_k\}$, and bandwidth allocation $\{B_{k}\}$.
%
\begin{subequations}
\begin{align}
\textbf{P1:}\ \ \  \min\limits_{\{\delta_{k}\},\{\bs u_k\},\{B_{k}\},\atop N, T_{\text{max}}} \ \ \ \ \ \ \ &N\cdot T_{\text{max}} \notag\\
\text{s.t.}\ \ \ \ \ \ \ \ \ \ \ \ &\Phi\left(\{\delta_{k}\},\{\bs u_k\}, N\right)\leq \epsilon, \label{equ:problemnewa}\\
&\mathbb E \left[T^{(n)}\left(\{\delta_k\}, \{\bs u_k\}, \{B_k\}\right)\right]\leq T_{\text{max}}, \forall n\in\mc N_{\epsilon}, \label{equ:problemnewb}\\
&q_{s,k}(\bs u_k)=q_{s,k'}(\bs u_{k'}), \forall k, k'\in\mc K, \label{equ:problemnewc}\\
&\theta_{s,k}(\bs u_k)\geq\theta_0, \forall k\in\mc K, \label{equ:problemnewd}\\
&\delta_k\in\mathbb Z^{+}, \forall k\in\mc K, \label{equ:problemnewe}\\
&\sum\limits_{k\in\mc K} B_k=B_c. \label{equ:problemnewf}
\end{align}
\end{subequations}
The objective function $N\cdot T_{\text{max}}$ is the total training time,
where $N$ is the number of training rounds required to guarantee the $\epsilon$-optimality gap, i.e., $\Exp{F\left(\mathbf w^{(N)}\right)-F(\mathbf w_*)}\leq\Phi\left(\{\delta_{k}\},\{\bs u_k\}, N\right)\leq\epsilon$ in \eqref{equ:problemnewa}, and $T_{\text{max}}$ is the per round latency.
Constraint in \eqref{equ:problemnewb} indicates that the average training time per round cannot exceed the delay requirement $T_{\text{max}}$, and $\mc N_{\epsilon}=\{1,\dots,N\}$ is the set of training rounds.
As suggested by Corollary \ref{corollary:Gsameq}, it is crucial to maintain a uniform successful sensing probability across the UAVs, so we enforce the constraints in \eqref{equ:problemnewc}.
The constraint on sensing quality \eqref{equ:problemnewd} has been discussed in Section \ref{sec:sense_analysis}.
We consider that the batch size $\{\delta_k\}$ are integer variables in \eqref{equ:problemnewe}.
Moreover, equation \eqref{equ:problemnewf} constrains the total bandwidth allocated to all UAVs as $B_c$ in Section \ref{sec:communication}.

\begin{figure}[t]
\begin{center}
\includegraphics[width=7.5cm, trim = 0cm 0.2cm 0cm 0cm, clip = true]{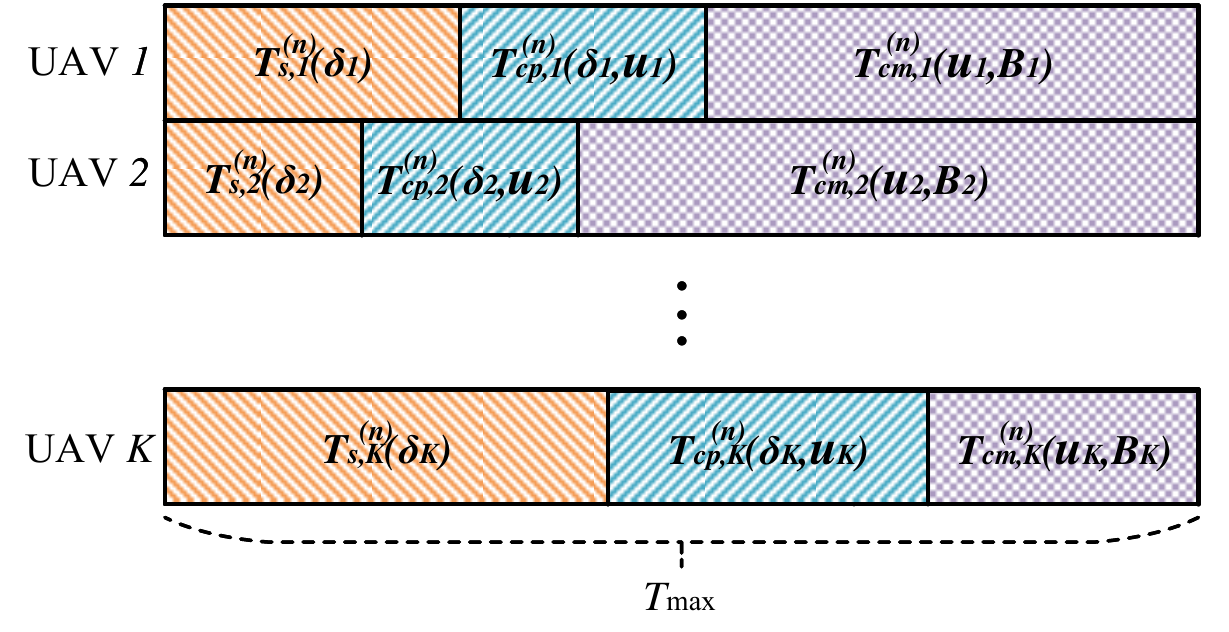}
\caption{\footnotesize{Latency illustration of UAVs that successfully sense the targets in round $n$. Sensing time $T_{s,k}^{(n)}(\delta_k)$, computation time $T_{cp,k}^{(n)}(\delta_k,\bs u_k)$, and communication time $T_{cm,k}^{(n)}(\bs u_k, B_k)$ of each UAV $k$ are indicated in orange, blue, and purple, respectively. The per round latency is $T_{\text{max}}= \max\limits_{k\in\mc K}\left\{T_{s,k}^{(n)}(\delta_k)+T_{cp,k}^{(n)}(\delta_k, \bs u_k)+T_{cm,k}^{(n)}(\bs u_k,B_k)\right\}$ based on \eqref{equ:round latency}.}}
\label{fig:tradeoff}
\end{center}
\end{figure}

The objective function and constraints \eqref{equ:problemnewa} and \eqref{equ:problemnewb} in \textbf{P1} are complicated by the coupling of $\{\delta_k\}$, $\{\bs u_k\}$, and $\{B_k\}$. 
Furthermore, the batch size $\delta_k$ of each UAV $k$ can only take integer values.
Therefore, \textbf{P1} is a mixed-integer non-convex problem and is thus very challenging to solve optimally.
To deal with this difficulty, we will utilize the alternating optimization technique in Section \ref{sec:train_time_minimize} to solve \textbf{P1} with any given $N$, in which the three decision variables are optimized alternately.
Moreover, we adopt a one-dimension search to find $N$ to achieve the minimum objective value, as summarized in Algorithm \ref{algo:BBPO} of Section \ref{sec:BBPO}.

Next, we show how the decision variables $\{\delta_k\}$, $\{\bs u_k\}$ and $\{B_k\}$ collectively affect the objective function $N\cdot T_{\text{max}}$ as follows.
\begin{remark}\label{remark:NT_tradeoff1}
\textbf{The batch size $\{\delta_k\}$ balances the number of training rounds $N$ and the per round latency $T_{\text{max}}$.}
\emph{ 
For UAVs that successfully sense targets, a larger $\{\delta_k\}$ indicates longer sensing and local computation time, since they are proportional to $\{\delta_k\}$ defined in \eqref{equ:sensing time} and \eqref{equ:local updating time}.
Thus, the $T_{\text{max}}$, including sensing, computation and communication time, will also increase.
However, large batch sizes can speed up the convergence of FEEL and reduce $N$. }
Therefore, the batch size $\{\delta_k\}$ must be properly designed to minimize the total training time.
\end{remark}
%
\begin{remark}\label{remark:NT_tradeoff2}
\textbf{The UAV position $\{\bs u_k\}$ also balances the number of training rounds $N$ and the per round latency $T_{\text{max}}$.}
\emph{
Specifically, when UAV $k$ is close to the target, the sensing elevation angle $\theta_{s,k}(\bs u_k)$ will increase.
Since the successful sensing probability $q_{s,k}(\bs u_k)$ increases with $\theta_{s,k}(\bs u_k)$, the UAV also have a higher probability of participating in the training. This will accelerate the convergence of FEEL and reduce $N$.
However, in this case, UAVs may be far away from the server, and the bandwidth allocated to each UAV will be reduced due to more UAVs participating in the training.
These factors will slow the uplink transmission rate and increase $T_{\text{max}}$.
Therefore, the UAV position $\{\bs u_k\}$ also needs to be properly designed to minimize the total training time.}
\end{remark}
%
\begin{remark}\label{remark:B_balance_T}
\textbf{The Bandwidth $\{B_k\}$ can minimize the per round latency $T_{\text{max}}$.}
\emph{Based on \eqref{equ:sensing time} and \eqref{equ:local updating time}, we have that sensing time $T_{s,k}^{(n)}(\delta_k)$ and local computation time $T_{cp,k}^{(n)}(\delta_k,\bs u_k)$ increase with $\delta_k$.
Since each UAV can choose different $\delta_k$ the number of generated samples, $T_{s,k}^{(n)}(\delta_k)$ and $T_{cp,k}^{(n)}(\delta_k,\bs u_k)$ will exhibit diversity among different UAVs.
We introduce bandwidth allocation in the system, let each UAV $k$ control the uplink transmission time $T_{cm,k}^{(n)}(\bs u_k, B_k)$ by adjusting $B_k$, to minimize $T_{\text{max}}$.
As shown in Fig. \ref{fig:tradeoff}, UAV $2$ spends less time on sensing and local computing, while UAV $K$ spends more time on them. That is, $T_{s,2}^{(n)}(\delta_2) + T_{cp,2}^{(n)}(\delta_2,\bs{u}_k) < T_{s,K}^{(n)}(\delta_K) + T_{cp,K}^{(n)}(\delta_2,\bs{u}_K)$.
In this way, we can allocate less bandwidth to UAV $2$ to achieve large $T_{cm,2}^{(n)}(\bs u_2, B_2)$ and allocate more bandwidth to UAV $K$ to achieve small $T_{cm,K}^{(n)}(\bs u_K, B_K)$, so that the total latency of UAV $2$ is equal to that of UAV $K$.
Thus, adjusting $\{B_k\}$ can minimize $T_{\text{max}}$.
}
\end{remark}
%

\section{Training Time Minimization} \label{sec:train_time_minimize}
In this section, we adopt the alternating optimization technique to solve \textbf{P1}.
We first divide \textbf{P1} into three subproblems, and then optimize $\{\delta_k\}$, $\{\bs u_k\}$, and $\{B_k\}$ in Section \ref{sec:bandwidth}, \ref{sec:senstime}, and \ref{sec:UAVposition}, respectively.
Overall, we propose the bandwidth, batch size, and position optimization (BBPO) scheme, which can compute a suboptimal solution of \textbf{P1} efficiently in Section \ref{sec:BBPO}.

\subsection{Bandwidth Allocation Optimization with Fixed Batch Size and Position} \label{sec:bandwidth}
We consider the subproblem of \textbf{P1} for optimizing the bandwidth allocation $\{B_k\}$ by assuming that batch size $\{\delta_k\}$, UAV position $\{\bs u_k\}$, and training round $N$ are fixed.
It follows from \eqref{equ:los_s} that the successful sensing probability $\left\{q_{s,k}(\bs u_k)\right\}$ are given. Given that $\{\delta_k\}$ and $\{\bs u_k\}$ satisfy the constraints \eqref{equ:problemnewa}, and \eqref{equ:problemnewc}-\eqref{equ:problemnewe}, \textbf{P1} reduces to
\begin{subequations}
\begin{align}
\!\!\textbf{P2:}\ \ \ \min\limits_{\{B_{k}\}} \ \ \ \ \ & T_{\text{max}} \notag\\
\text{s.t.}\ \ \ \ \ \
&\Exp{\!T_0\!\cdot \!\delta_k\!+\!\mathbf 1_{k}^{(n)}(\bs u_k) \!\cdot\!\frac {\delta_k \xi}{f_{cpu}}\!+\!\mathbf 1_{k}^{(n)}(\bs u_k) \!\cdot\!\frac {D_0}{r_k(\bs u_k,B_k)}\!}\!\leq\! T_{\text{max}}, \forall k\in\mc K, \forall n\in\mc N_{\epsilon},\label{equ:problemnew1a}\\
&\text{contraint \eqref{equ:problemnewf}},\nonumber
\end{align}
\end{subequations}
where \eqref{equ:problemnew1a} is the same as \eqref{equ:problemnewb}.
Specifically, due to \eqref{equ:q_indicator}, constraints \eqref{equ:problemnew1a} can be simplified to
\begin{align}
T_0\cdot \delta_k+q_s \cdot\frac {\delta_k \xi}{f_{\text{cpu}}}+q_s \cdot\frac {D_0}{r_k(\bs u_k,B_k)}\leq T_{\text{max}}, \forall k\in\mc K, \label{equ:T_simplify}
\end{align}
where $q_{s,k}(\bs u_k)=q_s, \forall k\in \mc K$.
Since the left hand side of constraints \eqref{equ:T_simplify} are convex with respect to $\{B_k\}$, \textbf{P2} is a standard convex problem. 
Therefore, it can be efficiently solved by standard convex optimization tools such as CVXPY \cite{CVXPY}.


\subsection{Batch Size Optimization with Fixed Position and Bandwidth Allocation} \label{sec:senstime}
We consider the subproblem of \textbf{P1} for optimizing batch size $\{\delta_k\}$ by assuming that $\{\bs u_k\}$, $\{B_k\}$, and training round $N$ are fixed.
Since the successful sensing probability $q_{s,k}(\bs u_k)=q_s, \forall k\in\mc K$ is determined by $\{\bs u_k\}$, parameter $q_s$ is also given.
In this case, given that $\{u_k\}$ and $\{B_k\}$ satisfy the constraints \eqref{equ:problemnewc}, \eqref{equ:problemnewd}, and \eqref{equ:problemnewf}, \textbf{P1} reduces to
\begin{subequations}
\begin{align}
\textbf{P3:}\ \ \ \min\limits_{\{\delta_{k}\}} \ \ \ \ \ \ \ & T_{\text{max}} \notag\\
\text{s.t.}\ \ \ \ \ \ \ \ &\delta_k>0, \forall k\in\mc K, \label{equ:problemnew2a}\\
&\text{constraints \eqref{equ:problemnewa}, \eqref{equ:T_simplify}}, \nonumber
\end{align}
\end{subequations}
where we relax the value of $\{\delta_k\}$ from integer in \eqref{equ:problemnewe} to real number $\delta_k>0, \forall k\in\mc K$,
and \eqref{equ:problemnewb} is simplified as \eqref{equ:T_simplify}.
Next, we characterize the optimal solution of \textbf{P3} as follows.

\begin{lemma} \label{lemma:constrain_T_delta}
At the optimal $\{\delta_k^*\}$ of \textbf {P3}, equality constraint in \eqref{equ:T_simplify} holds. That is,
\begin{align}
T_0\cdot \delta_k^*+q_s \cdot\frac {\delta_k^* \xi}{f_{\text{cpu}}}+q_s \cdot\frac {D_0}{r_k(\bs u_k,B_k)} = T_{\text{max}}, \forall k\in\mc K.  \label{equ:constrain_T_delta}
\end{align}
\end{lemma}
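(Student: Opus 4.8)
The plan is to argue by contradiction, exploiting two monotonicities of \textbf{P3}. Writing the left-hand side of \eqref{equ:T_simplify} as $\phi_k(\delta_k):=(T_0+q_s\xi/f_{\text{cpu}})\delta_k+q_sD_0/r_k(\bs u_k,B_k)$, note that $\phi_k$ is affine and strictly increasing in $\delta_k$, since $\bs u_k$, $B_k$, and hence $r_k$ are held fixed here. On the other hand, by Theorem~\ref{theorem:Fbound} the bound $\Phi(\{\delta_k\})$ is strictly decreasing in each $\delta_k$: under \eqref{equ:learningratec} one has $0<1-\mu\eta(1-4L\eta)<1$, so the factor multiplying $G$ in \eqref{equ:Fbound} is positive, while $G$ in \eqref{equ:G} decreases in each $\delta_k$ through its $\sigma_k^2/\delta_k$ terms (the coefficients $\alpha_k,\beta_k,\gamma_k$ depend only on the fixed probabilities $\{q_{s,k}(\bs u_k)\}$). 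Thus enlarging any $\delta_k$ relaxes \eqref{equ:problemnewa} but tightens \eqref{equ:T_simplify}, so the two constraints pull $\delta_k$ in opposite directions.

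Suppose, for contradiction, that at an optimum $(\{\delta_k^*\},T_{\text{max}}^*)$ some constraint in \eqref{equ:T_simplify} is slack, say $\phi_j(\delta_j^*)<T_{\text{max}}^*$. I would perturb in two stages. Stage one: for every index with slack, raise $\delta_k$ to the unique value $\bar\delta_k:=\phi_k^{-1}(T_{\text{max}}^*)>\delta_k^*$. This keeps all of \eqref{equ:T_simplify} feasible and leaves $T_{\text{max}}^*=\max_k\phi_k(\delta_k)$ unchanged, but by the strict monotonicity of $\Phi$ it \emph{strictly} decreases the convergence bound, so now $\Phi<\epsilon$ holds with strict slack. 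Stage two: lower the latency to $T_{\text{max}}^*-\Delta$ and set $\delta_k:=\phi_k^{-1}(T_{\text{max}}^*-\Delta)=\bar\delta_k-\Delta/(T_0+q_s\xi/f_{\text{cpu}})$ for all $k$. Every $\phi_k$ then equals the reduced latency $T_{\text{max}}^*-\Delta$, and by continuity of $\Phi$ the strict slack $\Phi<\epsilon$ survives for all sufficiently small $\Delta>0$, so \eqref{equ:problemnewa} and $\delta_k>0$ remain satisfied. This produces a feasible point of \textbf{P3} with strictly smaller objective $T_{\text{max}}^*-\Delta$, contradicting optimality; hence every constraint in \eqref{equ:T_simplify} must hold with equality, which is \eqref{equ:constrain_T_delta}.

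The step I expect to be the main obstacle is that the objective is a maximum rather than a single smooth quantity, so filling the slack of one constraint does not by itself lower $T_{\text{max}}$ — all binding terms must be pushed down simultaneously. The two-stage construction is tailored precisely to this: stage one converts every inequality into an active one, so that in stage two a single downward shift of the latency can be matched by a coordinated decrease of all $\delta_k$ at once. The remaining care is to confirm that stage one yields \emph{strict} (not merely weak) slack in \eqref{equ:problemnewa}, which uses $\sigma_k^2>0$ so that $\Phi$ is genuinely decreasing in the perturbed coordinate, and that $\Phi$ is continuous in $\{\delta_k\}$; both are immediate from the closed form \eqref{equ:Fbound}--\eqref{equ:G}.
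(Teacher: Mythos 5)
Your proof is correct, and it is worth noting that the paper itself states Lemma~\ref{lemma:constrain_T_delta} with no proof at all --- the text passes directly from the lemma to \eqref{equ:delta2} --- so there is no ``paper's argument'' to compare against beyond the implicit intuition that a slack latency constraint wastes room that could be spent on a larger batch. Your two-stage perturbation makes that intuition rigorous in exactly the right way: the key observation that $\phi_k$ is strictly increasing in $\delta_k$ while $\Phi$ is strictly decreasing in each $\delta_k$ (since $\{\bs u_k\}$, and hence $\alpha_k,\beta_k,\gamma_k$ and $r_k$, are frozen in \textbf{P3}, and the factor $\tfrac{1-A^N}{\mu\eta(1-4L\eta)}$ multiplying $G$ is positive under \eqref{equ:learningratec}) is precisely what drives the result, and your stage-one/stage-two split correctly handles the fact that the objective is a max, so that no single coordinate change can lower $T_{\text{max}}$ until every constraint has been made active. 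The one caveat you already flag deserves emphasis: strictness of the decrease of $\Phi$ in the perturbed coordinate requires $\sigma_k^2>0$ for the UAV whose constraint is slack; if some $\sigma_j^2=0$, then $\delta_j$ is absent from $\Phi$ and the lemma as stated can fail (an optimum may leave UAV $j$'s constraint slack), so the lemma implicitly assumes nondegenerate gradient noise on every UAV. With that assumption made explicit, your argument is a complete proof and in fact establishes the stronger statement that \emph{every} optimizer of \textbf{P3} satisfies \eqref{equ:constrain_T_delta}, which is what is needed for the substitution \eqref{equ:delta2} to be legitimate.
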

Therefore, each $\delta_k^*$ can be represented as a function of $T_{\text{max}}$, i.e.,
\begin{align}
\delta_k^*(T_{\text{max}})=\frac{T_{\text{max}}-\frac {q_sD_0}{r_k(\bs u_k,B_k)}}{T_0+\frac {q_s \xi}{f_{\text{cpu}}}}. \label{equ:delta2}
\end{align}
%
According to \eqref{equ:Fbound}, as $\Phi\left(\left\{\delta_{k}(T_{\text{max}})\right\},\{\bs u_k\}, N\right)$ increases, $T_{\text{max}}$ decreases.
To minimize $T_{\text{max}}$, we consider $\Phi\left(\left\{\delta_{k}(T_{\text{max}})\right\},\{\bs u_k\}, N\right)=\epsilon$ in \eqref{equ:problemnewa}.
Based on \eqref{equ:Fbound}, \eqref{equ:Gsameq}, and \eqref{equ:problemnewa}, we have
\begin{align}
\displaystyle\frac {\eta} {K}\sum\limits_{k\in\mc K}\frac{\sigma_k^2}{\delta_k^*(T_{\text{max}})}
+\frac{2L\eta^2}{K^2 \chi q_s}\sum\limits_{k\in\mc K}\displaystyle\frac{\sigma_k^2}{\delta_k^*(T_{\text{max}})}
+J
= \frac{(1-A)(\epsilon-\lambda A^N)}{1-A^N}, \label{equ:T2_extension}
\end{align}
where $J=\left(\frac {\eta} {K}+\frac{4L\eta^2}{K^2 \chi q_s}+\frac{4L\eta^2}{K q_s^{K-2}}\right)\sum\limits_{k\in\mc K}\Lambda_{k}^2$, $A=1-\mu\eta(1-4L\eta)$, and $\lambda=\Exp{F\left(\mathbf w^0\right)-F(\mathbf w_*)}$.
It can be transformed into a polynomial equation with respect to $T_{\text{max}}$ and solved efficiently.

\subsection{Position Optimization with Fixed Batch Size and Bandwidth Allocation} \label{sec:UAVposition}
We consider the subproblem of \textbf{P1} for optimizing UAV position $\{\bs u_k\}$ by assuming that $\{\delta_k\}$, $\{B_k\}$, and training round $N$ are fixed.
In this case, \textbf{P1} given that $\{\delta_k\}$ and $\{B_k\}$ satisfy the constraints \eqref{equ:problemnewe} and \eqref{equ:problemnewf} reduces to
\begin{subequations}
\begin{align}
\textbf{P4:}\ \ \ \min\limits_{\{\bs u_k\}} \ \ \ \ \ \ \ &T_{\text{max}} \notag\\
\text{s.t.}\ \ \ \ \ \ \ \ &\text{constraints \eqref{equ:problemnewa}, \eqref{equ:T_simplify}, \eqref{equ:problemnewc}, \eqref{equ:problemnewd}}. \nonumber
\end{align}
\end{subequations}
Constraints \eqref{equ:problemnewb} in \textbf{P1} can be simplified as \eqref{equ:T_simplify}.
Below, we aim to transform the constraints on $\{\bs u_k\}$ of \textbf{P4} into constraints on $q_s$, as this will simplify our problem optimization. Since \eqref{equ:problemnewc} indicates $q_{s,k}(\bs u_k)=q_s, \forall k\in\mc K$, we omit the transformation of \eqref{equ:problemnewc}.

\subsubsection{Constraint \eqref{equ:problemnewa}}  \label{sec:conv1}
Based on \eqref{equ:Fbound}, \eqref{equ:Gsameq}, and \eqref{equ:problemnewa}, we have
\begin{align}
\displaystyle\frac {\eta} {K}\sum\limits_{k\in\mc K}\left(\displaystyle\frac{\sigma_k^2}{\delta_k}+\Lambda_k^2\right)
+\frac{2L\eta^2}{K^2 \chi q_s}\sum\limits_{k\in\mc K}\left(\displaystyle\frac{\sigma_k^2}{\delta_k}+2\Lambda_k^2\right)
+\frac{4L\eta^2}{K q_{s}^{K-2}}\sum\limits_{k\in\mc K}\Lambda_{k}^2
\leq \frac{(1-A)(\epsilon-\lambda A^N)}{1-A^N}. \label{equ:phi_q_extension}
\end{align}
Since $q_s\geq q_{s,\text{min}}$, \eqref{equ:phi_q_extension} can be rewritten as
\begin{align}
q_s\geq
\frac{2L\eta^2}{K^2 \chi}\sum\limits_{k\in\mc K}\left(\displaystyle\frac{\sigma_k^2}{\delta_k}\!+\!2\Lambda_k^2\right)
\!\!\Bigg /\!\!
\left[\frac{(1\!-\!A)(\epsilon\!-\!\lambda A^N)}{1\!-\!A^N}
\!-\!\displaystyle\frac {\eta} {K}\!\sum\limits_{k\in\mc K}\!\left(\displaystyle\frac{\sigma_k^2}{\delta_k}\!+\!\Lambda_k^2\right)
\!-\!\frac{4L\eta^2}{K q_{s,\text{min}}^{K-2}}\sum\limits_{k\in\mc K}\Lambda_{k}^2\right]. \label{equ:phi_q}
\end{align}
%

\subsubsection{Constraint \eqref{equ:problemnewd}} \label{sec:conv2}
According to \eqref{equ:los_s}, the successful sensing probability increases as the sensing elevation angle $\theta_{s,k}(\bs u_k)$ increases.
Based on \eqref{equ:problemnewd}, we have
\begin{align}
q_{s}\geq\frac 1 {1+\psi \exp(-\zeta[\theta_{0}-\psi])}. \label{equ:uv_q}
\end{align}
\begin{figure}[t]
\begin{center}
\includegraphics[width=7.5cm, trim = 0cm 0cm 0cm 0cm, clip = true]{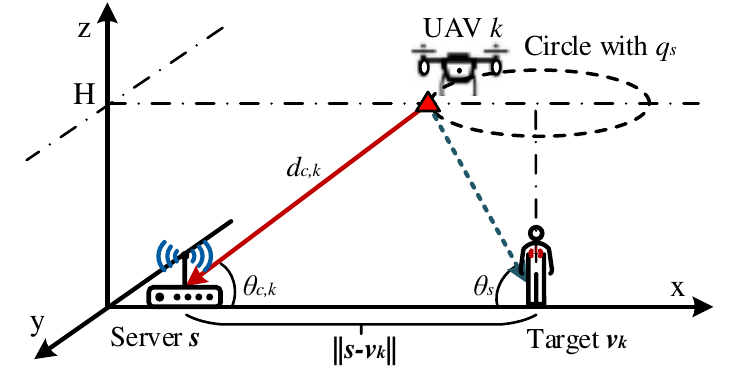}
\caption{\footnotesize{An illustration of the optimal UAV position with given $q_s$. The optimal position is marked with a red triangle.}}
\label{fig:q_u}
\end{center}
\end{figure}
%
\subsubsection{Relationship between $q_s$ and $r_k(\bs u_k,B_k)$ in \eqref{equ:T_simplify}} \label{sec:conv3}
After the above constraint transformations, we can see that only $r_k(\bs u_k,B_k)$ in \textbf{P4} contains $\bs u_k$.
According to the expression of $r_k(\bs u_k,B_k)$ in \eqref{equ:UAV-server}, we have the following proposition.
\begin{proposition} \label{proposition:q_u}
For any given $q_s$ and $B_k$, the optimal position $\bs{u}_k^* = (x_{u,k}^*, y_{u,k}^*, H)$ for each UAV $k$ to maximize $r_k(\bs u_k,B_k)$ is derived from
$x_{u,k}^* = x_{v,k}-\frac {H(x_{v,k}-x_s)}{\tan \theta_s\Vert\bs s-\bs v_k\Vert}$ and $y_{u,k}^* = y_{v,k}-\frac {H(y_{v,k}-x_s)}{\tan \theta_s\Vert\bs s-\bs v_k\Vert}$,
where $\theta_s$ is the corresponding sensing elevation angle of $q_s$ in \eqref{equ:los_s}.
\end{proposition}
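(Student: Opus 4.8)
The plan is to reduce the maximization of $r_k(\bs u_k, B_k)$ to a purely geometric problem. Since $B_k$ is held fixed, the rate in \eqref{equ:UAV-server} is strictly increasing in the channel gain $h_k(\bs u_k)$, so it suffices to maximize $h_k(\bs u_k)$. First I would exploit the fact that every UAV flies at the common altitude $H$: the communication elevation angle $\theta_{c,k}(\bs u_k) = \frac{180^\circ}{\pi}\sin^{-1}|(H - z_s)/d_{c,k}|$ then depends on the position only through the UAV--server distance $d_{c,k} = \|\bs u_k - \bs s\|$, and therefore the LOS probability $q_{c,k}$ in \eqref{equ:LOS probability} is itself a function of $d_{c,k}$ alone. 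Substituting into \eqref{equ:ATG channel gain}, the gain becomes $h_k = (K_0 d_{c,k})^{-\alpha}/(\eta_2 - (\eta_2 - \eta_1)q_{c,k}(d_{c,k}))$, a function of the single scalar $d_{c,k}$.

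The key step is to show that $h_k$ is strictly decreasing in $d_{c,k}$. As $d_{c,k}$ grows, the numerator $(K_0 d_{c,k})^{-\alpha}$ decreases since $\alpha > 0$; simultaneously $\theta_{c,k}$ decreases, so the logistic $q_{c,k}$ decreases, and because $\eta_2 > \eta_1$ the denominator $\eta_2 - (\eta_2 - \eta_1)q_{c,k}$ increases. Both effects drive $h_k$ down, so maximizing $h_k$ is equivalent to minimizing $d_{c,k}$ over the feasible set. This monotonicity argument, which lets the two coupled positional effects (distance and LOS probability) be collapsed into one, is the part I expect to demand the most care.

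It then remains to minimize $d_{c,k}$ subject to the sensing constraint. Fixing $q_s$ fixes the sensing elevation angle $\theta_s$ through the strict monotonicity of \eqref{equ:los_s}, hence fixes $d_{s,k} = |H - z_{v,k}|/\sin\theta_s$; with the altitude pinned at $H$, this confines the UAV to a horizontal circle centered at the target's ground projection $(x_{v,k}, y_{v,k})$ with radius $R = |H - z_{v,k}|/\tan\theta_s$. Since the vertical offset $|H - z_s|$ to the server is constant on this circle, minimizing $d_{c,k}$ reduces to the elementary planar problem of finding the point of the circle closest to the server's horizontal projection $(x_s, y_s)$. That minimizer lies on the ray from the center toward $(x_s, y_s)$, giving $(x_{u,k}^*, y_{u,k}^*) = (x_{v,k}, y_{v,k}) + R\,((x_s, y_s) - (x_{v,k}, y_{v,k}))/D_h$ with $D_h$ the center-to-projection separation; writing $R = H/\tan\theta_s$ for a ground-level target ($z_{v,k} = 0$) and identifying $D_h$ with $\|\bs s - \bs v_k\|$ recovers exactly the stated expressions for $x_{u,k}^*$ and $y_{u,k}^*$ (the analogous $y$-formula replacing $x_s$ by $y_s$).
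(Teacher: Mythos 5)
Your proof is correct and follows essentially the same route as the paper's, which simply asserts ``from solid geometry'' that the stated position minimizes $d_{c,k}$ and maximizes $\theta_{c,k}$ and hence maximizes $r_k$; you supply the two steps the paper leaves implicit, namely that $h_k$ collapses to a strictly decreasing function of the single scalar $d_{c,k}$ at fixed altitude, and that the constraint $q_{s,k}(\bs u_k)=q_s$ confines the UAV to a horizontal circle of radius $H/\tan\theta_s$ about the target's ground projection. Your observation that the paper's $y$-coordinate expression should read $y_s$ in place of $x_s$ is also correct.
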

\begin{proof} \label{proof:q_u}
From solid geometry, position $\bs{u}_k^*$ leads to the minimum UAV-server distance $d_{c,k}$, and the maximum communication elevation angle $\theta_{c,k}$. Due to \eqref{equ:UAV-server}, we have the maximum $r_k(\bs u_k,B_k)$.
\end{proof}

We mark the optimal position with a red triangle in Fig. \ref{fig:q_u}.

\begin{lemma} \label{lemma:r_s}
According to $\bs{u}_k^*$ with given $q_s$ in Proposition \ref{proposition:q_u}, transmission rate $r_k(\bs u_k,B_k)$ can be written as $\hat r_k(q_s, B_k)$, which is a monotonically decreasing function of $q_s$.
\end{lemma}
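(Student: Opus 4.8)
The plan is to make the dependence of $r_k$ on $q_s$ fully explicit by substituting the optimal position $\bs u_k^*$ from Proposition \ref{proposition:q_u} into the rate expression \eqref{equ:UAV-server}, and then to show that every place where $q_s$ enters does so in a way that can only decrease the rate as $q_s$ grows. First I would recall that $q_s$ and the sensing elevation angle $\theta_s$ are in one-to-one increasing correspondence through \eqref{equ:los_s}; since $q_{s,k}(\bs u_k)$ is strictly increasing in $\theta_{s,k}(\bs u_k)$, fixing $q_s$ is equivalent to fixing $\theta_s$, and a larger $q_s$ corresponds to a larger $\theta_s$. This lets me treat $\theta_s$ as the driving variable and translate monotonicity in $q_s$ into monotonicity in $\theta_s$.

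Next I would trace how $\theta_s$ propagates into the rate. At the optimal position of Proposition \ref{proposition:q_u}, the UAV lies on the vertical line through the segment joining the target $\bs v_k$ and the server $\bs s$ (projected at altitude $H$), so the UAV–server distance $d_{c,k}$ becomes an explicit function of $\theta_s$ alone (with $H$, $\bs v_k$, $\bs s$ fixed). Geometrically, increasing $\theta_s$ forces the UAV closer to the target and therefore farther from the server along this configuration, so $d_{c,k}$ is increasing in $\theta_s$. I would make this precise by writing $d_{c,k}$ in terms of the horizontal offset $\tfrac{H}{\tan\theta_s}$ relative to the target and confirming $\partial d_{c,k}/\partial\theta_s>0$ on the admissible range.

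The remaining step is to push this through the channel gain \eqref{equ:ATG channel gain} and then the rate \eqref{equ:UAV-server}. Writing $\hat r_k(q_s,B_k)$ for the composed function, I would argue that $h_k$ is decreasing in $q_s$ for two reasons acting in the same direction: the numerator $(K_0 d_{c,k})^{-\alpha}$ decreases as $d_{c,k}$ increases with $\theta_s$, and the denominator $\eta_1 q_{c,k}+\eta_2(1-q_{c,k})$ behaves monotonically because $\eta_2>\eta_1$ means it decreases as $q_{c,k}$ increases. Here I must check the sign of $\partial q_{c,k}/\partial\theta_s$ at the optimal position: although larger $\theta_s$ increases $d_{c,k}$, the communication elevation angle $\theta_{c,k}$ may move the other way, so the denominator effect needs to be signed carefully. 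Since $r_k$ in \eqref{equ:UAV-server} is increasing in $h_k$ (logarithm of $1+p_{c,k}h_k/(B_kN_0)$), once $h_k$ is shown decreasing in $q_s$ the conclusion $\hat r_k$ decreasing in $q_s$ follows by composition.

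The main obstacle I anticipate is the denominator of the channel gain: the two path-loss regimes interact with the LOS probability $q_{c,k}$, and establishing that the net effect of increasing $\theta_s$ is a genuine rate decrease requires simultaneously controlling both the growth of $d_{c,k}$ and the change in $q_{c,k}$ at the specific optimal geometry of Proposition \ref{proposition:q_u}. If $q_{c,k}$ were to increase sharply with $\theta_s$ it could partly offset the distance-induced loss; I would therefore argue that at $\bs u_k^*$ the dominant first-order effect is the increasing UAV–server distance, so that the product of the monotone numerator decay and the monotone denominator behavior yields an unambiguously decreasing $h_k$, and hence a monotonically decreasing $\hat r_k(q_s,B_k)$.
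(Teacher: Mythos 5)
Your proposal follows essentially the same route as the paper: convert $q_s$ to $\theta_s$ by inverting \eqref{equ:los_s}, use the geometry of Proposition \ref{proposition:q_u} to write $d_{c,k}$ and $\theta_{c,k}$ as functions of $\theta_s$ via $\frac{H}{\tan\theta_{c,k}}+\frac{H}{\tan\theta_s}=\Vert\bs s-\bs v_k\Vert$, and chain the monotonicities through \eqref{equ:ATG channel gain} and \eqref{equ:UAV-server}. The one point you leave open resolves in your favor and requires no ``dominance'' argument: as $\theta_s$ increases the horizontal offset $\Vert\bs s-\bs v_k\Vert-\frac{H}{\tan\theta_s}$ grows, so $\theta_{c,k}$ and hence $q_{c,k}$ \emph{decrease}, which (since $\eta_2>\eta_1$) makes the denominator $\eta_2-(\eta_2-\eta_1)q_{c,k}$ of $h_k$ \emph{increase}; this reinforces, rather than competes with, the decay of the numerator $(K_0 d_{c,k})^{-\alpha}$, so $h_k$ and therefore $\hat r_k(q_s,B_k)$ are unambiguously decreasing in $q_s$.
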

\begin{proof}
From Fig. \ref{fig:q_u}, we have the relationship between $\theta_{c,k}$ and $\theta_s$ as
%
$\theta_{c,k} \!= \!\tan^{-1}  \!\!\left(\!\!\frac{H}{\Vert\bs s-\bs v_k\Vert-\frac H{\tan\theta_s}} \!\!\right)\times\frac{180^{\circ}}{\pi},$
%
which is derived by $\frac {H}{\tan\theta_{c,k}}+\frac{H}{\tan\theta_s}=\Vert\bs s-\bs v_k\Vert$.
Moreover, the relationship between $d_{c,k}^2$ and $\theta_s$ is
%
$d_{c,k}^2 = \left(\Vert\bs s-\bs v_k\Vert-\frac H{\tan\theta_s}\right)^2+H^2.$
%
Based on \eqref{equ:los_s}, the sensing elevation angle $\theta_s$ can be represented as a function of $q_s$, i.e.,
%
$\theta_{s} = -\frac 1 {\zeta}\left[\ln\left(\frac 1 {q_s}-1\right)-\ln \psi\right]+\psi.$
%
Substituting the above equations into \eqref{equ:UAV-server}, we have $\hat r_k(q_s, B_k)$. Also, it decreases monotonically as $q_s$ increases.
\end{proof}

Based on Proposition \ref{proposition:q_u} and Lemma \ref{lemma:r_s}, \textbf{P4} can be reformulated as
\begin{subequations}
\begin{align}
\textbf{P4':}\ \ \ \min\limits_{q_s} \ \ \ \ \ \ \ &T_{\text{max}} \notag\\
\text{s.t.}\ \ \ \ \ \ \ &T_0\cdot \delta_k+q_s \cdot\frac {\delta_k \xi}{f_{\text{cpu}}}+q_s \cdot\frac {D_0}{\hat r_k(q_s, B_k)}\leq T_{\text{max}}, \forall k\in\mc K, \label{equ:problemnew31a}\\
&0< q_s \leq 1, \label{equ:problemnew31b}\\
&\text{constraints \eqref{equ:phi_q}, \eqref{equ:uv_q}.} \nonumber
\end{align}
\end{subequations}
Since $\hat r_k(q_s, B_k)$ is a monotonically decreasing function of $q_s$, we can achieve the minimum $T_{\text{max}}$ by minimizing $q_s$ from \eqref{equ:problemnew31a}.
We first find the minimum $q_s^*$ that satisfies constraints \eqref{equ:phi_q}, \eqref{equ:uv_q}, and \eqref{equ:problemnew31b}.
Then, based on \eqref{equ:problemnew31a} we have $T_{\text{max}}=\max\limits_{k\in\mc K}\left\{T_0\cdot \delta_k+q_s^* \cdot\frac {\delta_k \xi}{f_{\text{cpu}}}+q_s^* \cdot\frac {D_0}{\hat r_k(q_s^*, B_k)}\right\}$.
Finally, the optimal UAV position $\{\bs u_k^*\}$ can be derived from Proposition \ref{proposition:q_u}.

\begin{algorithm}[t]
  \caption{Iterative Bandwidth, Batch Size, and Position Optimization (BBPO)}
  \label{algo:BBPO}
  \KwIn{UAV set $\mc K$; target location set $\{\bs v_k\}$; UAV transmission power $\{p_{c,k}\}$; the minimum sensing elevation angle $\theta_0$; accuracy threshold $\tau$; the minimum training round $N_{\text{min}}\!=\!\log_{A}\left(\frac{\epsilon(1-A)-G_{\text{max}}}{\lambda(1-A)-G_{\text{max}}}\right)$; the maximum training round $N_{\text{max}}$;}
  \For{$n=N_{\text{min}}:N_{\text{max}}$}{
  Set the iteration index of the following three steps as $i=0$\;
  Initialize UAV batch size $\{\delta_k^{(i)}\}$ to satisfy the constraints in \eqref{equ:problemnewe} and UAV position $\{\bs u_k^{(i)}\}$ to satisfy the constraints in \eqref{equ:problemnewc} and \eqref{equ:problemnewd}\;
  \Repeat
  {\rm {$|T_{\text{max}}^{(i)}(n)-T_{\text{max}}^{(i-1)}(n)|\leq \tau$}}
  {
  Fix $\{\delta_k^{(i)}\}$ and $\{\bs u_k^{(i)}\}$, optimize bandwidth $\{B_k^{(i)}\}$ by solving \textbf{P2} in Section \ref{sec:bandwidth}\;
  Fix $\{\bs u_k^{(i)}\}$ and $\{B_k^{(i)}\}$, optimize batch size $\{\delta_k^{(i)}\}$ by solving \textbf{P3} in Section \ref{sec:senstime}\;
  Fix $\{\delta_k^{(i)}\}$ and $\{B_k^{(i)}\}$, optimize position $\{\bs u_k^{(i)}\}$ by solving \textbf{P4} in Section \ref{sec:UAVposition}\;
  Let $T_{\text{max}}^{(i)}(n)$ be the objective value of \textbf{P4}\;
  Update $i=i+1$\;
  }
  }
  Find the optimal training round $N^*=\argmin\limits_{n\in\{N_{\text{min}},\dots,N_{\text{max}}\}}n\cdot T_{\text{max}}^{(i)}(n)$\;
    \KwOut{$\{\delta^*_k\}=\{\lceil\delta_k^{(i)}\rceil\}$, $\{\bs u^*_k\}=\{\bs u_k^{(i)}\}$, and $\{B^*_k\}=\{B_k^{(i)}\}$.}

\end{algorithm}
%

\subsection{Iterative Bandwidth, Batch Size, and Position Optimization (BBPO)}\label{sec:BBPO}
Based on the solutions of the three subproblems in the above sections, we propose the iterative algorithm for \textbf{P1}, summarized in Algorithm \ref{algo:BBPO}\footnote{The parameters $A$ and $\lambda$ are defined in \eqref{equ:T2_extension}.
The minimum training round $N_{\text{min}}$ is derived by \eqref{equ:Fbound} with constraint $\lambda(1-A)-G_{\text{max}}>0$, where $G_{\text{max}}$ is derived by \eqref{equ:Gsameq} with $q_s=1$ and $\delta_k=\delta_{\text{max}}, \forall k\in\mc K$. The maximum training round $N_{\text{max}}$ depends on the maximum tolerance training time and is thus a given parameter.}.
We first employ a one-dimension search for $N$. With each given $N$, we solve sub-problems \textbf{P2}, \textbf{P3}, and \textbf{P4} alternately. 
Since $N$ is searched linearly and \textbf{P2}-\textbf{P4} can be efficiently solved, Algorithm \ref{algo:BBPO} can efficiently solve \textbf{P1}.

\section{Simulation Results} \label{sec:simulation}
In this section, we present the numerical results to evaluate the performance of our proposed BBPO scheme.
We first describe the basic simulation settings in Section \ref{sec:parameters}. Then, we discuss the baseline schemes in Section \ref{sec:benchmark scheme}. We demonstrate and compare the BBPO scheme's performance with the baseline schemes in Section \ref{sec:performance evaluation}.


%
\begin{table}[t]
\caption{System Parameters}
\label{tab:system parameters}
\centering
\begin{tabular}{llll}
\hline
{\textbf{Parameter}} & {\textbf{Value}}&{\textbf{Parameter}} & {\textbf{Value}}\\
\hline
\hline
{Communication bandwidth, $B_c$} & $6$ MHz  \cite{Bandwidth} &{Sensing bandwidth, $B_s$} & $10$ MHz \cite{ISCC_liu}\\
{Communication transmit power, $p_c$} & $20$ dBm \cite{Bandwidth} &{Sensing transmit power, $p_s$} & $30$ dBm \cite{ISCC_liu}\\
{Carrier frequency, $f_c$} & $60$ GHz \cite{ISCC_liu} &{Chirp duration, $T_p$} & $10$ \textmu s \cite{ISCC_liu}\\
{Noise power spectral density, $N_0$} & $-174$ dBm/Hz \cite{ISCC_liu} & {Chirp numbers per frame, $M$} & $25$ \cite{ISCC_liu}\\
{Path-loss exponent, $\alpha$} & $2$  \cite{3GPP2017} & {Unit sensing time, $T_0$} & $0.5$ s \cite{ISCC_liu}\\
{Path-loss for LOS, NLOS, $\eta_1, \eta_2$} & $3, 23$ dB \cite{3GPP2017} & {CPU frequency, $f_{\text{cpu}}$} & $5\times 10^8$ cycles/s \cite{ISCC_liu}\\
{Environment parameters, $\psi, \zeta$} & $11.95, 0.14$ \cite{3GPP2017} & {CPU cycles for one sample, $\xi$} & $2.5\times 10^7$ \cite{ISCC_liu}\\
\hline
\end{tabular}
\end{table}
%

\subsection{Simulation Settings} \label{sec:parameters}

\subsubsection{UAV-assisted FEEL system}
We consider a UAV-assisted federated learning system with an edge server and $K=8$ ISAC UAVs.
The UAVs are randomly distributed within a circular area of radius $300$ meters and are geometrically separated at different positions for sensing.
Moreover, we define the minimum sensing elevation angle as $\theta_0=70^{\circ}$, corresponding to PSNR $=30$ dB in Fig. \ref{fig:PSNR}. Other simulation parameters are listed in Table~\ref{tab:system parameters}.

\subsubsection{Human motion recognition}
We apply the wireless sensing simulator in \cite{sense_platform} to simulate various human motions and generate human motion datasets. In the simulation, we aim to identify five different human motions, i.e., child walking, child pacing, adult walking, adult pacing, and standing \cite{ISCC_liu}.
Examples of data samples of each human motion are shown in Fig. \ref{fig:quality}.

\subsubsection{Learning model}
We apply widely used ResNet-10 ($4,900,677$ model parameters) with batch normalization as the classifier model. The learning rate is set to $\eta=0.03$.

%
\begin{figure}[t]
\begin{center}
\setlength{\abovecaptionskip}{-0.2cm}
\includegraphics[width=9cm, trim = 0cm 0cm 0cm 1.5cm, clip = true]{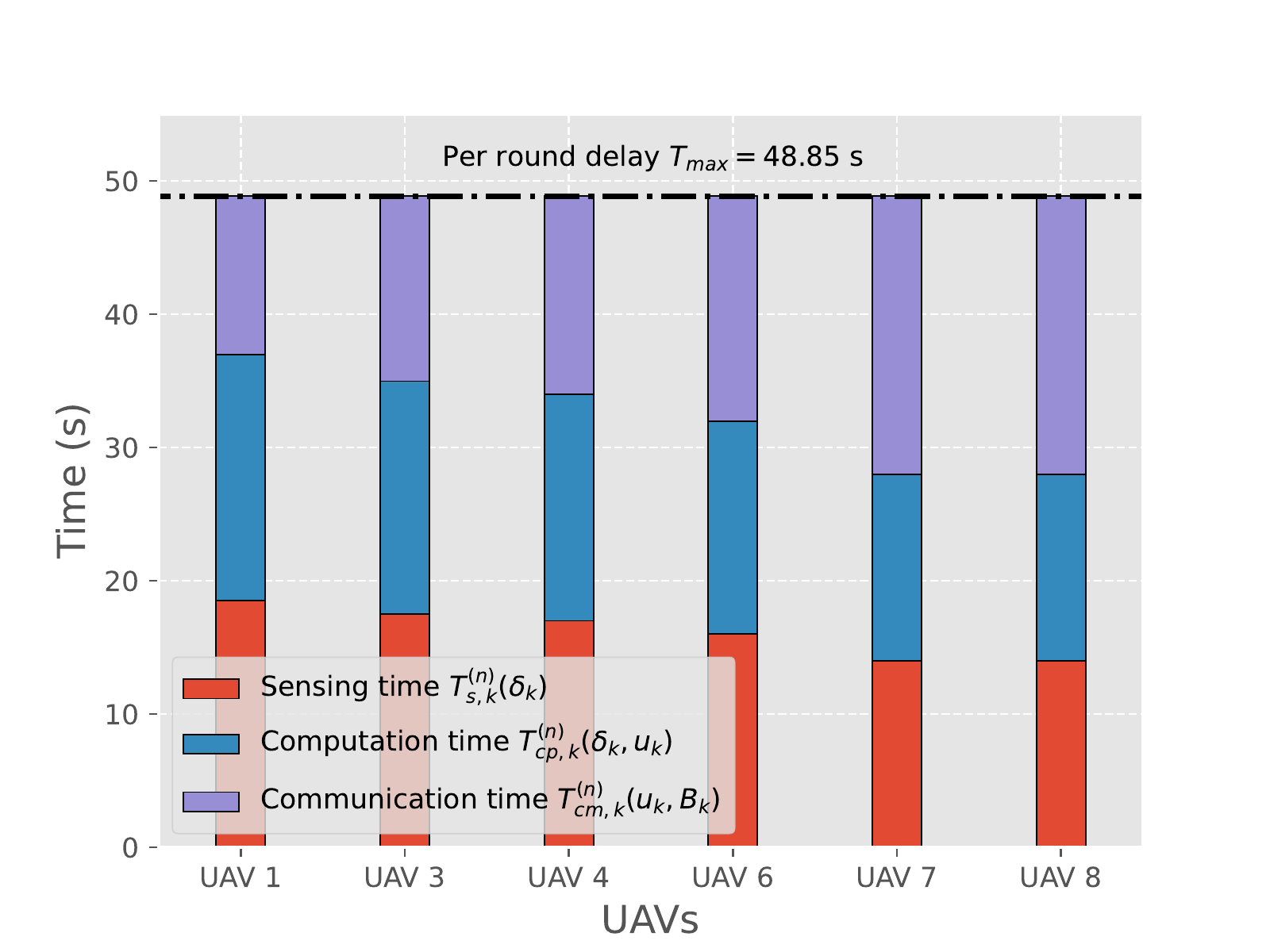}
\caption{\footnotesize{An illustration of sensing time $T_{s,k}^{(n)}(\delta_k)$, computation time $T_{cp,k}^{(n)}(\delta_k,\bs u_k)$, and communication time $T_{cm,k}^{(n)}(\bs u_k, B_k)$ of participating UAVs in round $n=700$ under our proposed BBPO scheme. UAV $2$ and UAV $5$ do not participate in this training round, since they do not sense the targets successfully.}}
\label{fig:Tmax}
\end{center}
\end{figure}

\subsection{Baseline Schemes} \label{sec:benchmark scheme}
%
\begin{itemize}
    \item \textbf{Baseline 1:} (Det-UAVposition) This scheme considers that the UAV position $\{\bs u_k\}$ is given. The batch size $\{\delta_k\}$ and bandwidth $\{B_k\}$ are optimized as discussed in Section \ref{sec:bandwidth} and Section \ref{sec:senstime} in our proposed scheme. By comparing with this baseline, we evaluate the validity of the UAV position design in our proposed scheme.
  \item \textbf{Baseline 2:} (Eq-Bandwidth) This scheme uses the uniform bandwidth $B_k = B_c/K$ for uplink transmission in Section \ref{sec:bandwidth}. The batch size $\{\delta_k\}$ and UAV position $\{\bs u_k\}$ are optimized as discussed in Section \ref{sec:senstime} and Section \ref{sec:UAVposition} in our proposed scheme. By comparing with this baseline, we evaluate the validity of bandwidth allocation in our proposed scheme.
    \item \textbf{Baseline 3:} (Eq-Batchsize) This scheme considers the uniform batch size $\delta_k = \delta, \forall k\in\mc K$ in Section \ref{sec:senstime}. The bandwidth $\{B_k\}$ and UAV position $\{\bs u_k\}$ are optimized as discussed in Section \ref{sec:bandwidth} and Section \ref{sec:UAVposition} in our proposed scheme. By comparing with this baseline, we evaluate the validity of batch size design in our proposed scheme.
  \item \textbf{Baseline 4:} (BBPO-Ideal) The only difference between Baseline 4 and the proposed BBPO scheme is that this scheme assumes that all UAVs always successfully sense the targets and that all UAVs participated in each training round. Specifically, we have the successful sensing probability $q_s=1$ in \eqref{equ:problemnewa}, \eqref{equ:problemnewb}, and \eqref{equ:problemnewe}. The optimization of batch size $\{\delta_k\}$, UAV position $\{\bs u_k\}$, and bandwidth $\{B_k\}$ is the same as our proposed scheme. This baseline is an ideal situation for sensing.
\end{itemize}

\begin{figure*}[t]
\centering
\setlength{\abovecaptionskip}{-0.2cm}
\begin{minipage}[t]{0.45\linewidth}
\includegraphics[width=8cm, trim = 0cm 0cm 0cm 1.2cm, clip = true]{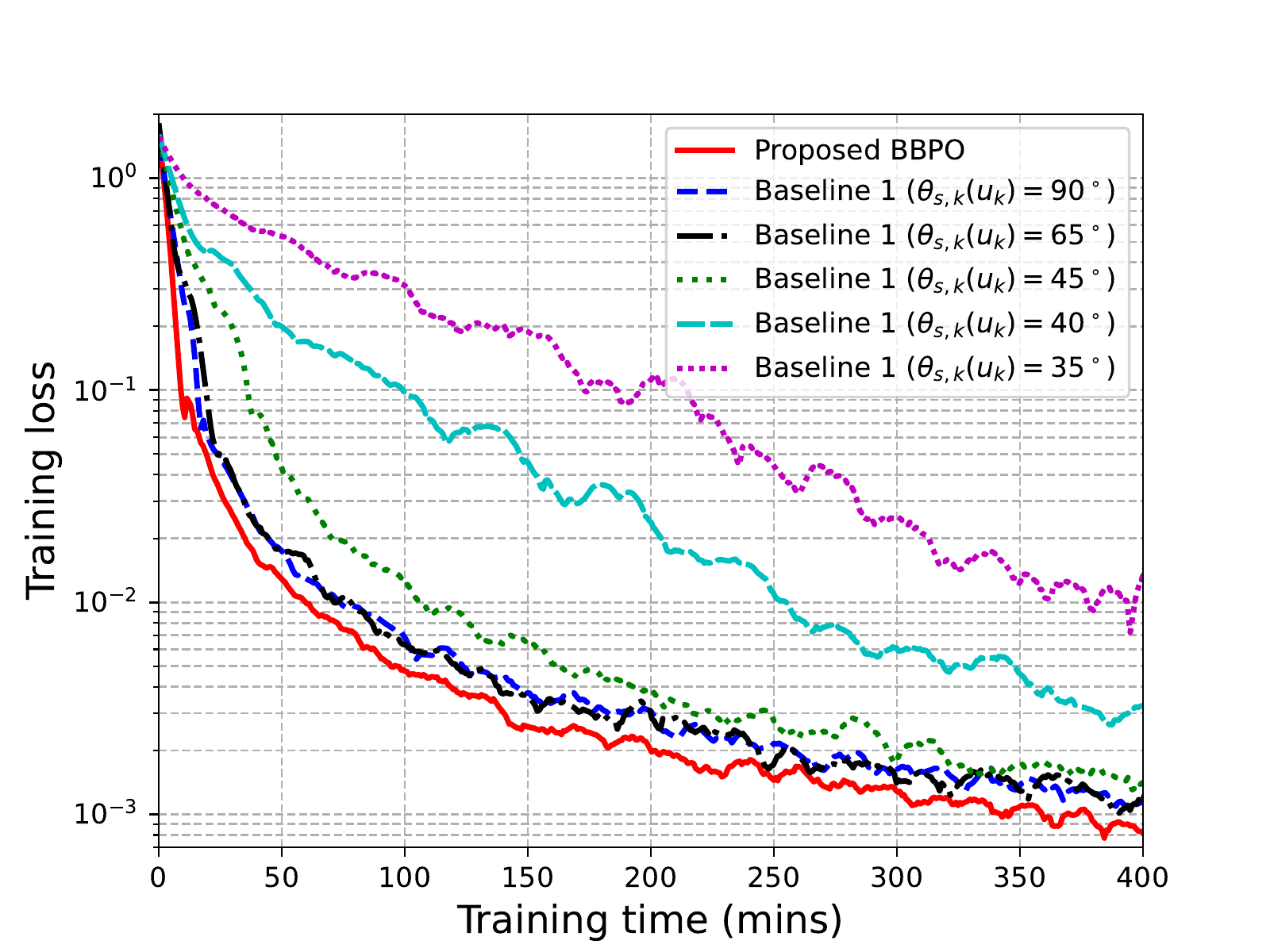}
\caption{\footnotesize{Training loss versus training time under the BBPO scheme and Baseline 1.}}
\label{fig:loss_BBPO}
\end{minipage}
\quad
\begin{minipage}[t]{0.45\linewidth}
\includegraphics[width=8cm, trim = 0cm 0cm 0cm 1.2cm, clip = true]{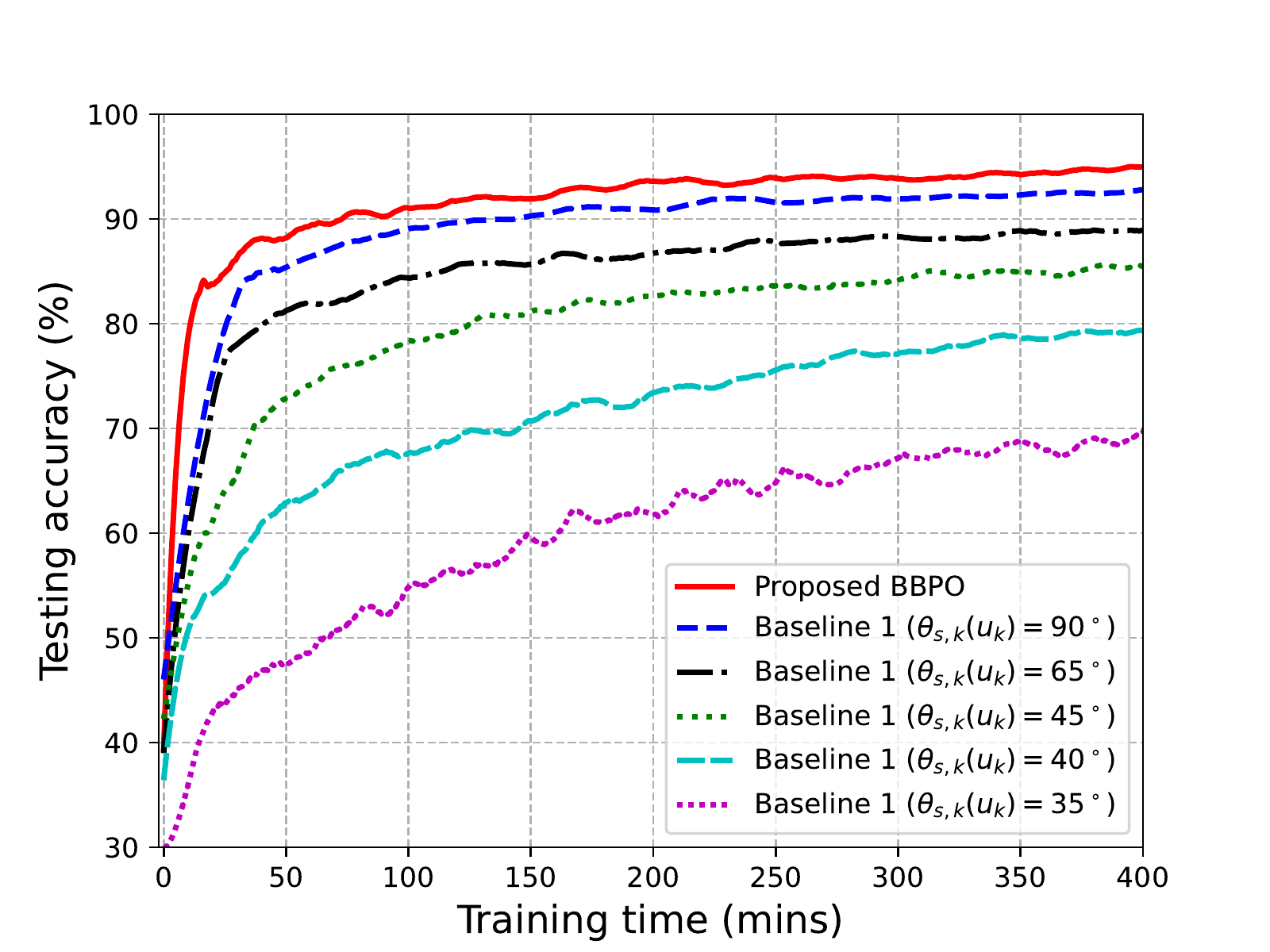}
\caption{\footnotesize{Testing accuracy versus training time under the BBPO scheme and Baseline 1.}}
\label{fig:accuracy_BBPO}
\end{minipage}
\end{figure*}
%

\subsection{Performance Evaluation} \label{sec:performance evaluation}

\subsubsection{\textbf{Balance effect of decision variables on $T_{\text{max}}$}} \label{sec:balance Tmax}
In Fig. \ref{fig:Tmax}, we plot the per round delay, which is the sum of the sensing, computation, and communication time, of each participating UAV under the BBPO scheme.
We can observe that the UAVs with longer sensing time $T_{s,k}^{(n)}(\delta_k)$ will also have longer computation time $T_{cp,k}^{(n)}(\delta_k,\bs u_k)$.
This is because $T_{s,k}^{(n)}(\delta_k)$ and $T_{cp,k}^{(n)}(\delta_k,\bs u_k)$ are proportional to batch size $\delta_k$.
Moreover, each UAV $k$ can control the uplink transmission time $T_{cm,k}^{(n)}(\bs u_k, B_k)$ by adjusting its $B_k$, to minimize  $T_{\text{max}}$.
This observation confirms out discussion in Remark \ref{remark:B_balance_T} (illustrated in Fig. \ref{fig:Tmax}) that all participating UAVs achieve the same $T_{\text{max}}=46.78$ s.

\subsubsection{\textbf{Impact of UAV position $\{\bs u_k\}$ on training performance}} \label{sec:UAV position performance}
In Fig. \ref{fig:loss_BBPO} and Fig. \ref{fig:accuracy_BBPO}, we evaluate the training performance of the proposed BBPO scheme by comparing with Baseline 1 (Det-UAVposition) under different UAV positions.
We use sensing elevation angle $\theta_{s,k}(\bs u_k)$ to reflect different UAV positions in Baseline 1. For instance, Baseline 1 ($\theta_{s,k}(\bs u_k)=35^\circ$) represents that current UAV position $\{\bs u_k\}$ corresponds to sensing elevation angle $\theta_{s,k}(\bs u_k)=35^\circ, \forall k\in\mc K$.

In Fig. \ref{fig:loss_BBPO}, we plot the training loss versus the total training time.
We can observe that the training loss decreases with training time under the BBPO scheme and Baseline 1, consistent with the analysis in Section \ref{sec:theorm_results}.
It also shows that the BBPO scheme achieves the fastest convergence rate compared to other schemes, and the convergence rate of Baseline 1 ($\theta_{s,k}(\bs u_k)=35^\circ$) is the slowest.
This is because the smaller $\theta_{s,k}(\bs u_k)$ indicates the further distance between the UAV and the sensing target, resulting in poor quality (i.e., lower PSNR) of data samples.
It also indicates the smaller successful sensing probability $q_s(\bs u_k)$, resulting in fewer UAVs participating in training. According to \eqref{equ:Fbound}, we need more training rounds $N$ to achieve the specified optimality gap.
However, this does not mean that a larger $\theta_{s,k}(\bs u_k)$ will lead to a better performance. As UAVs move closer to the target but farther away from the server, the uplink transmission time and per round latency $T_{\text{max}}$ will increase.
Since we consider the trade-off between $N$ and $T_{\text{max}}$ in our BBPO scheme. It achieves a faster convergence rate than other baseline schemes.
In Fig. \ref{fig:accuracy_BBPO}, we plot the testing accuracy under two schemes against the total training time. We observe that the BBPO scheme achieves the highest testing accuracy compared to other baseline schemes. The reason is the same as above.

\begin{figure*}[t]
\hspace{-0.2cm}
\centering
\setlength{\abovecaptionskip}{-0.2cm}
\begin{minipage}[t]{0.45\linewidth}
\includegraphics[width=8cm, trim = 0cm 0cm 0cm 1cm, clip = true]{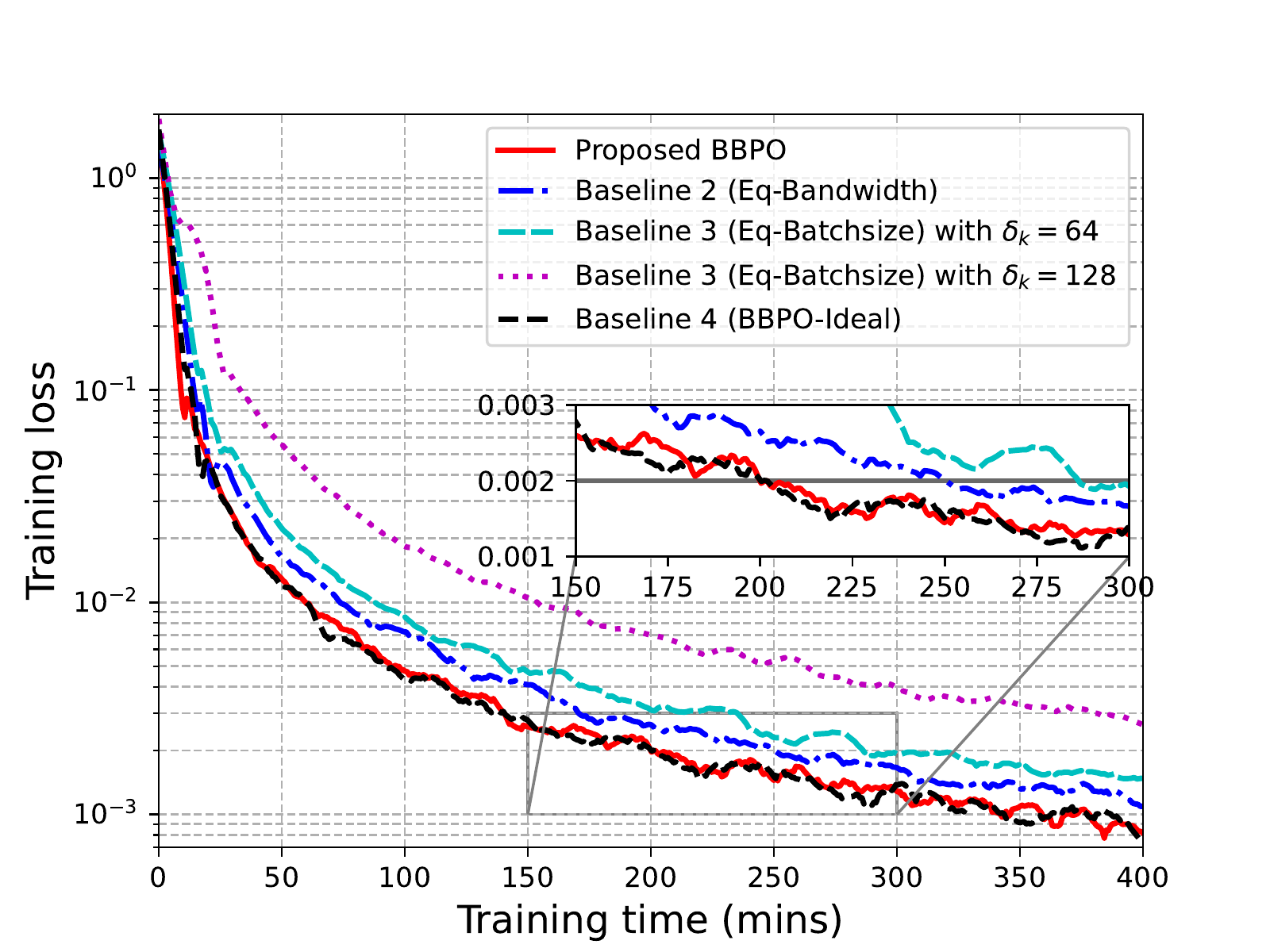}
\caption{\footnotesize{Training loss versus training time under different baseline schemes.}}
\label{fig:loss_benchmark}
\end{minipage}
\quad
\begin{minipage}[t]{0.45\linewidth}
\includegraphics[width=8cm, trim = 0cm 0cm 0cm 1cm, clip = true]{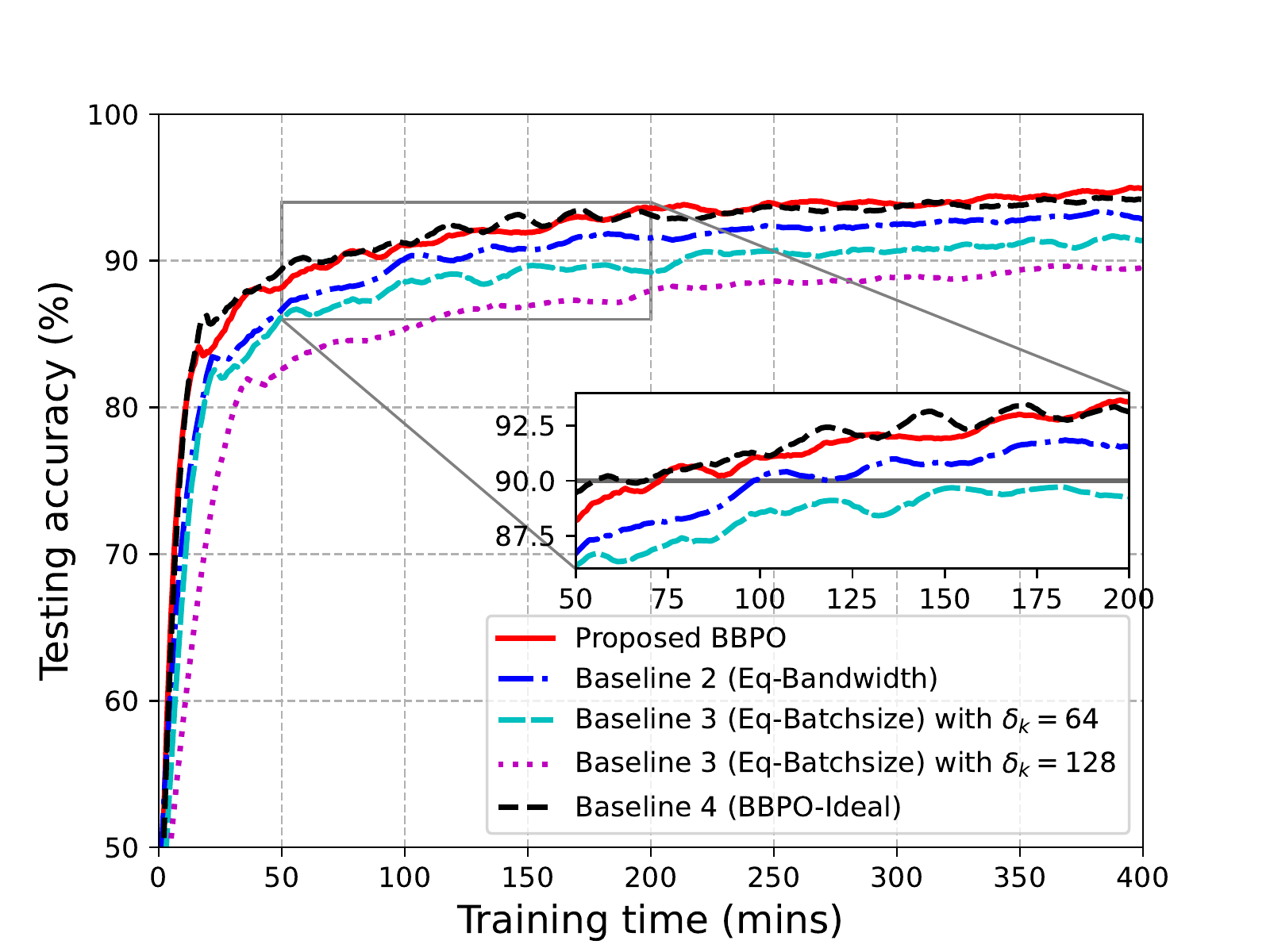}
\caption{\footnotesize{Testing accuracy versus training time under different baseline schemes.}}
\label{fig:accuracy_benchmark}
\end{minipage}
\end{figure*}
%

\subsubsection{\textbf{Impact of bandwidth $\{B_k\}$ and sensing time $\{\delta_k\}$ on training performance}} \label{sec:UAV position performance}

Fig. \ref{fig:loss_benchmark} and Fig. \ref{fig:accuracy_benchmark} show the training performance under four schemes. Compared with Baseline 2 (Eq-Bandwidth) and Baseline 3 (Eq-Batchsize), the proposed BBPO scheme achieves the best convergence rate and testing accuracy performance.
Due to the per round latency constraint \eqref{equ:problemnewb}, we try to adaptively assign different bandwidths to each UAV to minimize $T_{\text{max}}$ under the BBPO scheme.
However, in Baseline 2, all UAVs use a uniform bandwidth, which increases $T_{\text{max}}$ and worsens the training performance.
In Baseline 3, we consider that the UAVs have the uniform batch size $\delta_k=64$ or $\delta_k=128, \forall k\in\mc K$. Since a larger $\delta_k$ reduces the number of training rounds $N$ but increases $T_{\text{max}}$.
Therefore, the BBPO scheme that jointly optimizes $\{\delta_k\}$ and $\{B_k\}$ outperforms Baseline 2 and Baseline 3 in both Fig. \ref{fig:loss_benchmark} and Fig. \ref{fig:accuracy_benchmark}.
Specifically, in Fig. \ref{fig:loss_benchmark} under the same training loss $\epsilon=0.002$, the training time of the BBPO scheme is $20\%$ less than Baseline 2, and $33.33\%$ less than Baseline 3 with $\delta_k=64, \forall k\in\mc K$.
In Fig. \ref{fig:accuracy_benchmark}, to achieve a testing accuracy of $90\%$, the training time of the BBPO scheme is $30\%$ less than Baseline 2 and $60\%$ less than Baseline 3 with $\delta_k=64, \forall k\in\mc K$.
Moreover, it achieves almost the same performance as Baseline 4.

\section{Conclusion} \label{sec:conclusion}
In this paper, we investigated the UAV deployment design and resource allocation in UAV-assisted federated edge learning (FEEL) for improving training performance. 
Specifically, we studied the impact of UAV deployment on sensing quality for human motion recognition. We identified a threshold for the sensing elevation angle that produces satisfactory quality of data samples.
Then, we derived an upper bound on the UAV-assisted FEEL training loss as a function of the successful sensing probability, and showed that the negative impact of data heterogeneity can be reduced if UAVs have a uniform successful sensing probability.
Based on the convergence analysis, we minimized the total training time by jointly optimizing the UAV deployment and integrated sensing, computation, and communication (ISCC) resources.
Then, we applied the alternating optimization technique and decomposed this challenging mixed-integer non-convex problem into three subproblems. 
Moreover, we proposed the BBPO scheme to alternately optimize bandwidth, batch size, and position, which efficiently computes suboptimal solutions.
Simulation results showed that our BBPO scheme outperforms other baselines in terms of convergence rate and testing accuracy.

\appendices
\section{Proof of Theorem 1}\label{app:Fbound}
Based on Assumption 1, the expected loss function $F\left(\mathbf w^{(n+1)}\right)$ can be expressed as
\begin{equation} \label{equ:F}
\begin{array}{rll}
\!\! \Exp{\!F\left(\mathbf w^{(n+1)}\right)\!}
\!\leq \!\Exp{\!F\left(\mathbf w^{(n)}\right)\!}
\!+\!\Exp{\!\left\langle\nabla F\left(\mathbf w^{(n)}\right), \mathbf w^{(n+1)}\!-\!\mathbf w^{(n)}\right\rangle\!}
\!+\!\displaystyle\frac L 2 \Exp{\!\left\Vert\mathbf w^{(n+1)}\!-\!\mathbf w^{(n)}\right\Vert^2\!}.
\end{array}
\end{equation}
We first provide two key lemmas to derive the convergence rate in Section \ref{sec:lemmas}. Then, the proofs of these lemmas are in Section \ref{sec:proof_lemmaFw} and \ref{sec:proof_lemmaww}.

\subsection{Convergence Analysis} \label{sec:lemmas}
%
\begin{lemma}\label{lemma:Fw}
With Assumption 3 and Assumption 4, it holds that
\begin{equation} \label{equ:Fw}
\begin{array}{rll}
\Exp{\left\langle\nabla F(\mathbf w^{(n)}), \mathbf w^{(n+1)}-\mathbf w^{(n)}\right\rangle}
\leq-\displaystyle\frac{\eta}{2}\Exp{\left\Vert\nabla F(\mathbf w^{(n)})\right\Vert^2}
+\eta\sum\limits_{k\in\mc K}\alpha_k^2\cdot\sum\limits_{k\in\mc K}\left(\displaystyle\frac{\sigma_k^2}{\delta_k}+\Lambda_k^2\right),
\end{array}
\end{equation}
where each $\alpha_k, \forall k\in\mc K$ is a function of $\left\{q_{s,k}(\bs u_k)\right\}$. 
Detailed expression of $\{\alpha_k\}$ is in \eqref{equ:g}.
\end{lemma}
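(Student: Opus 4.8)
The plan is to start from the aggregation rule \eqref{equ:w}, which I rewrite as $\mathbf w^{(n+1)}-\mathbf w^{(n)}=-\eta\,\mathbf d^{(n)}$ with the random descent direction $\mathbf d^{(n)}=\sum_{k\in\mc K}w_k^{(n)}\mathbf g_k^{(n)}$ and normalized participation weights $w_k^{(n)}=\mathbf 1_k^{(n)}(\bs u_k)/\sum_{j\in\mc K}\mathbf 1_j^{(n)}(\bs u_j)$, which satisfy $\sum_{k\in\mc K}w_k^{(n)}=1$ whenever at least one UAV participates. Substituting into the left-hand side of \eqref{equ:Fw} and treating $\nabla F(\mathbf w^{(n)})$ as fixed given $\mathbf w^{(n)}$, the quantity to control is $-\eta\,\Exp{\langle\nabla F(\mathbf w^{(n)}),\mathbf d^{(n)}\rangle}$. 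First I would apply the polarization identity $-\langle a,b\rangle=-\frac12\|a\|^2-\frac12\|b\|^2+\frac12\|a-b\|^2$ with $a=\nabla F(\mathbf w^{(n)})$ and $b=\mathbf d^{(n)}$, discarding the nonpositive $-\frac12\|b\|^2$ term. This extracts the target leading term $-\frac{\eta}{2}\Exp{\|\nabla F(\mathbf w^{(n)})\|^2}$ and reduces the lemma to bounding the residual $\frac{\eta}{2}\Exp{\|\mathbf d^{(n)}-\nabla F(\mathbf w^{(n)})\|^2}$.

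For the residual, I would decompose each summand as $\mathbf g_k^{(n)}-\nabla F(\mathbf w^{(n)})=(\mathbf g_k^{(n)}-\nabla F_k(\mathbf w^{(n)}))+(\nabla F_k(\mathbf w^{(n)})-\nabla F(\mathbf w^{(n)}))$, separating the zero-mean stochastic-gradient noise from the data-heterogeneity offset, and then bound the squared norm using $\|x+y\|^2\le2\|x\|^2+2\|y\|^2$ followed by the vector Cauchy--Schwarz inequality $\|\sum_k w_k^{(n)}v_k\|^2\le(\sum_k (w_k^{(n)})^2)(\sum_k\|v_k\|^2)$. The decisive structural fact is that the Bernoulli sensing indicators $\{\mathbf 1_k^{(n)}\}$ (hence the weights) are independent of the SGD sampling noise, so the expectation of the product factorizes: the expectation of the weight factor $\sum_k (w_k^{(n)})^2$ is collected into $\sum_k\alpha_k^2$, with the $\alpha_k$ given by \eqref{equ:g} as functions of $\{q_{s,k}(\bs u_k)\}$; the noise piece is controlled by $\Exp{\|\mathbf g_k^{(n)}-\nabla F_k(\mathbf w^{(n)})\|^2}\le\sigma_k^2/\delta_k$ (Assumption \ref{assump:Fg}); and the heterogeneity piece by $\Exp{\|\nabla F_k(\mathbf w^{(n)})-\nabla F(\mathbf w^{(n)})\|^2}\le\Lambda_k^2$ (Assumption \ref{assump:FF}). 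Collecting these yields $\frac{\eta}{2}\cdot2\big(\sum_k\alpha_k^2\big)\sum_k(\sigma_k^2/\delta_k+\Lambda_k^2)$, i.e.\ exactly the right-hand side of \eqref{equ:Fw}.

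The main obstacle is the random normalization $1/\sum_{j}\mathbf 1_j^{(n)}$: because the participation count couples all the weights, the moment $\Exp{(w_k^{(n)})^2}$ underlying $\alpha_k$ is not a product of individual sensing probabilities but an expectation over all $2^{K}$ participation patterns of $\mc K$, each weighted by $\prod_{k}q_{s,k}(\bs u_k)^{\mathbf 1_k^{(n)}}(1-q_{s,k}(\bs u_k))^{1-\mathbf 1_k^{(n)}}$, with the degenerate event $\sum_{j}\mathbf 1_j^{(n)}=0$ excluded (or mapped to a null update) so that $\mathbf d^{(n)}$ stays well defined. Carrying out this combinatorial expectation and collecting it into the closed-form coefficients $\alpha_k$ as explicit functions of $\{q_{s,k}(\bs u_k)\}$ is the technical heart of the argument and is precisely what \eqref{equ:g} records; once those moments are available, substituting the Cauchy--Schwarz bound back and multiplying through by $\eta$ completes the proof of \eqref{equ:Fw}.
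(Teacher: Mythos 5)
Your overall skeleton (polarization identity, discarding the nonpositive $-\tfrac12\|b\|^2$ term, splitting the residual into SGD noise plus data heterogeneity, Cauchy--Schwarz, then Assumptions \ref{assump:Fg} and \ref{assump:FF}) is the same as the paper's, but you apply the polarization identity one step too early, and this breaks the constant in the lemma. You polarize with $b=\mathbf d^{(n)}=\sum_{k}w_k^{(n)}\mathbf g_k^{(n)}$ still random in the participation indicators, so your residual is $\frac{\eta}{2}\Exp{\Vert\mathbf d^{(n)}-\nabla F(\mathbf w^{(n)})\Vert^2}$ and the weight factor you must average is $\sum_{k}(w_k^{(n)})^2$. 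The claim that $\Exp{\sum_{k}(w_k^{(n)})^2}$ ``is collected into $\sum_{k}\alpha_k^2$'' is false: by \eqref{equ:g}, $\alpha_k=\Exp{w_k^{(n)}}$, so Jensen gives $\Exp{(w_k^{(n)})^2}\geq\alpha_k^2$ with strict inequality whenever the participation pattern is genuinely random. In fact $\Exp{(w_k^{(n)})^2}=\Exp{\mathbf 1_{k}^{(n)}(\bs u_k)/(\sum_{j}\mathbf 1_{j}^{(n)}(\bs u_j))^2}$ is exactly the coefficient $\beta_k$ defined in \eqref{equ:M11}, which the paper reserves for Lemma \ref{lemma:ww}. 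Your route therefore yields only the weaker inequality with $\sum_{k}\beta_k$ in place of $\sum_{k}\alpha_k^2$, not the bound stated in \eqref{equ:Fw}.

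The fix is the paper's first step \eqref{equ:proofFwa}: the left-hand side of \eqref{equ:Fw} is \emph{linear} in the aggregate, so conditioning on $\mathbf w^{(n)}$ and averaging over the sensing indicators first replaces $\mathbf d^{(n)}$ by the deterministic convex combination $\sum_{k}\alpha_k\mathbf g_k^{(n)}$ inside the inner product, the only remaining randomness being the SGD sampling. Polarizing \emph{that} quantity makes the residual $\frac{\eta}{2}\Exp{\Vert\nabla F(\mathbf w^{(n)})-\sum_{k}\alpha_k\mathbf g_k^{(n)}\Vert^2}$ with fixed weights, and Cauchy--Schwarz then produces the factor $\sum_{k}\alpha_k^2$ exactly as in \eqref{equ:proofFwd}--\eqref{equ:proofFwe}. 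The remaining ingredients of your outline --- the combinatorial evaluation of $\alpha_k$ over participation patterns conditioned on $\sum_{j}\mathbf 1_{j}^{(n)}(\bs u_j)\neq 0$, the independence of the indicators from the gradient noise, and the two variance assumptions --- do match the paper.
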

\begin{lemma}\label{lemma:ww}
With Assumption 3 and Assumption 4, it holds that
\begin{equation} \label{equ:ww}
\begin{array}{rll}
\Exp{\left\Vert\mathbf w^{(n+1)}\!-\!\mathbf w^{(n)}\right\Vert^2}
\!\!\!\!&\leq2\eta^2\left(\!\sum\limits_{k\in\mc K}\!\beta_k\displaystyle\frac{\sigma_k^2}{\delta_k}
\!+\!2 \!\sum\limits_{k\in\mc K}\!\beta_k\Lambda_k^2
\!+\!2 \!\sum\limits_{k\in\mc K}\! \gamma_k\left(\left(q_{s,k}(\bs u_k)\!-\!\overline q_s\right)^2+\overline q_s^2\right)\Lambda_{k}^2\right)\\
&\!+\!4\eta^2 \Exp{\left\Vert\nabla F(\mathbf w^{(n)})\right\Vert^2},
\end{array}
\end{equation}
where each $\beta_k, \gamma_k, \forall k\in\mc K$ is a function of $\left\{q_{s,k}(\bs u_k)\right\}$. 
Detailed expressions for $\{\beta_k\}$ and $\{\gamma_k\}$ are contained in \eqref{equ:M11} and \eqref{equ:M22final} respectively.
\end{lemma}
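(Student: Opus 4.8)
The plan is to start from the aggregation rule \eqref{equ:w}, which gives the one-step model difference
\begin{equation*}
\mathbf w^{(n+1)}-\mathbf w^{(n)} = -\eta\,\frac{\sum_{k\in\mc K}\mathbf 1_k^{(n)}(\bs u_k)\,\mathbf g_k^{(n)}}{\sum_{k\in\mc K}\mathbf 1_k^{(n)}(\bs u_k)},
\end{equation*}
so that the left-hand side of \eqref{equ:ww} equals $\eta^2$ times the expected squared norm of the \emph{normalized} aggregated gradient. Here the expectation runs over two independent sources of randomness: the Bernoulli sensing indicators $\mathbf 1_k^{(n)}(\bs u_k)$, each with success probability $q_{s,k}(\bs u_k)$, and the stochastic gradient noise. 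My first move is to condition on the random participation set $\mc S=\{k:\mathbf 1_k^{(n)}(\bs u_k)=1\}$; given $\mc S$ the normalization weights $\mathbf 1_k^{(n)}/\sum_j\mathbf 1_j^{(n)}$ become the fixed uniform weights $1/|\mc S|$ over $\mc S$, which makes the inner gradient-noise expectation tractable.

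Next I would decompose each local stochastic gradient as
\begin{equation*}
\mathbf g_k^{(n)} = \nabla F(\mathbf w^{(n)}) + \bigl(\nabla F_k(\mathbf w^{(n)})-\nabla F(\mathbf w^{(n)})\bigr) + \bigl(\mathbf g_k^{(n)}-\nabla F_k(\mathbf w^{(n)})\bigr),
\end{equation*}
and peel off the deterministic full-gradient direction using $\|a+b\|^2\le 2\|a\|^2+2\|b\|^2$; since the weights sum to one, the full-gradient piece contributes exactly the $4\eta^2\,\Exp{\|\nabla F(\mathbf w^{(n)})\|^2}$ term in \eqref{equ:ww}. The residual fluctuation splits into a \emph{noise} part, whose second moment is bounded by $\sigma_k^2/\delta_k$ via Assumption \ref{assump:Fg}, and a \emph{heterogeneity} part, bounded by $\Lambda_k^2$ via Assumption \ref{assump:FF}. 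Crucially, the gradient-noise vectors are zero-mean and independent across $k$, so conditioned on $\mc S$ all cross terms between distinct noise vectors vanish in expectation, leaving only diagonal noise contributions.

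The main obstacle is the final averaging over $\mc S$, i.e.\ evaluating the combinatorial weight moments
\begin{equation*}
\Exp{\frac{\mathbf 1_k^{(n)}}{\bigl(\sum_j\mathbf 1_j^{(n)}\bigr)^2}}\quad\text{and}\quad \Exp{\frac{\mathbf 1_k^{(n)}\mathbf 1_{k'}^{(n)}}{\bigl(\sum_j\mathbf 1_j^{(n)}\bigr)^2}}
\end{equation*}
under the product-Bernoulli law. These expectations, which require summing over all nonempty subsets of $\mc K$, are precisely the coefficients $\beta_k$ (from the diagonal $k=k'$ noise and heterogeneity terms) and $\gamma_k$ (from the off-diagonal $k\neq k'$ heterogeneity terms) defined in \eqref{equ:M11} and \eqref{equ:M22final}. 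I expect this subset-sum bookkeeping to be the most delicate part of the argument, and I would organize it by first isolating the diagonal contributions to obtain the $\sum_k\beta_k(\sigma_k^2/\delta_k)$ and $2\sum_k\beta_k\Lambda_k^2$ terms, then handling the off-diagonal heterogeneity sum. For the latter I would insert the centering $q_{s,k}(\bs u_k)=\overline q_s+(q_{s,k}(\bs u_k)-\overline q_s)$ so that the cross-UAV interaction reorganizes into the stated form $\bigl((q_{s,k}(\bs u_k)-\overline q_s)^2+\overline q_s^2\bigr)\Lambda_k^2$; this is the step that exposes how non-uniform sensing probabilities amplify the data-heterogeneity penalty.

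Finally I would address the degenerate event $\sum_j\mathbf 1_j^{(n)}=0$, on which the update is undefined, by conditioning throughout on at least one successful sensing (consistent with the factor $\chi=1-(1-q_{s,\text{min}})^K$ that later appears in Corollary \ref{corollary:Gsameq}). Collecting the diagonal noise, diagonal heterogeneity, and off-diagonal heterogeneity contributions, together with the $4\eta^2$ full-gradient term, then yields exactly the bound \eqref{equ:ww}.
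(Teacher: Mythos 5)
Your plan reproduces the paper's argument almost step for step: the same starting point from \eqref{equ:w}, the same conditioning on the random participation set, the same use of unbiasedness (Assumption \ref{assump:Fg}) to kill the cross terms between distinct noise vectors, the same identification of $\beta_k$ and $\gamma_k$ as subset-sum weight moments, and the same centering $q_{s,k}(\bs u_k)=\overline q_s+(q_{s,k}(\bs u_k)-\overline q_s)$ to diagonalize the off-diagonal heterogeneity sum. The one substantive discrepancy is the \emph{order} in which you apply $\Vert \mathbf a+\mathbf b\Vert^2\le 2\Vert \mathbf a\Vert^2+2\Vert \mathbf b\Vert^2$. You peel off the full-gradient direction first and then split the residual into noise and heterogeneity; tracing the constants, that yields $2\eta^2$ on $\Exp{\Vert\nabla F(\mathbf w^{(n)})\Vert^2}$ but $4\eta^2$ on \emph{both} the $\sum_k\beta_k\sigma_k^2/\delta_k$ term and the heterogeneity terms --- a valid bound, but not \eqref{equ:ww}, whose variance term carries only the coefficient $2\eta^2$. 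The paper splits in the opposite order: first noise versus true local gradients (giving $2\eta^2 M_1+2\eta^2 M_2$, with $M_1$ the variance term), and only then, inside $M_2$, heterogeneity versus $\nabla F$ (which is where the factor $4\eta^2$ on the $\Lambda_k^2$ terms and on $\Exp{\Vert\nabla F(\mathbf w^{(n)})\Vert^2}$ arises). Since the constants of \eqref{equ:ww} feed directly into $G$ and Corollary \ref{corollary:Gsameq}, you need the paper's ordering to recover the stated inequality. Two smaller points: the factors $q_{s,k_1}q_{s,k'_1}$ that get centered in the off-diagonal term are not weight moments attached to the gradients a priori --- they are extracted from the subset probability $\prod_{k_1\in\mc K_1}q_{s,k_1}$ via $\prod_{k_1\in\mc K_1}q_{s,k_1}\le q_{s,k_1}q_{s,k'_1}$ before the centering and the $\langle \mathbf a,\mathbf b\rangle\le\frac 1 2(\Vert \mathbf a\Vert^2+\Vert \mathbf b\Vert^2)$ step are applied, and the relevant conditioning there is on at least two successes rather than one; also, only unbiasedness is needed to annihilate the noise cross terms, not full independence across UAVs.
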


By substituting \eqref{equ:Fw} and \eqref{equ:ww} into \eqref{equ:F}, we have
\begin{align}
&\Exp{F(\mathbf w^{(n+1)})} \nonumber\\
&\leq \Exp{F(\mathbf w^{(n)})}
-\left(\displaystyle\frac{\eta}{2}-2L\eta^2\right)\Exp{\left\Vert\nabla F(\mathbf w^{(n)})\right\Vert^2}
+\eta\sum\limits_{k\in\mc K}\alpha_k^2\cdot\sum\limits_{k\in\mc K}\left(\displaystyle\frac{\sigma_k^2}{\delta_k}+\Lambda_k^2\right) \notag\\
&+\displaystyle L\eta^2\left(\sum\limits_{k\in\mc K}\beta_k\displaystyle\frac{\sigma_k^2}{\delta_k}+2 \sum\limits_{k\in\mc K}\beta_k\Lambda_k^2+2 \sum\limits_{k\in\mc K} \gamma_k\left(\left(q_{s,k}(\bs u_k)-\overline q_s\right)^2+\overline q_s^2\right)\Lambda_{k}^2\right) \notag\\
&\leq \Exp{F(\mathbf w^{(n)})}
-\mu\eta\left(1-4L\eta\right)\Exp{F(\mathbf w^{(n)})-F(\mathbf w_*)}
+\eta\sum\limits_{k\in\mc K}\alpha_k^2 \sum\limits_{k\in\mc K}\left(\displaystyle\frac{\sigma_k^2}{\delta_k}+\Lambda_k^2\right) \notag\\
&+\displaystyle L\eta^2\left(\sum\limits_{k\in\mc K}\beta_k\displaystyle\frac{\sigma_k^2}{\delta_k}+2 \sum\limits_{k\in\mc K}\beta_k\Lambda_k^2+2 \sum\limits_{k\in\mc K} \gamma_k\left(\left(q_{s,k}-\overline q_s\right)^2+\overline q_s^2\right)\Lambda_{k}^2\right), \label{equ:F1}
\end{align}
where \eqref{equ:F1} is due to $\left\Vert\nabla F(\mathbf w^{(n)})\right\Vert^2\geq 2\mu\left(F(\mathbf w^{(n)})-F(\mathbf w_*)\right)$ based on Assumption 2.
Subtract $\Exp{F\left(\mathbf w_*\right)}$ in both sides of \eqref{equ:F1}, we have
\begin{align}
&\Exp{F(\mathbf w^{(n+1)})-F(\mathbf w_*)} \notag\\
&\leq
\left(1-\mu\eta\left(1-4L\eta\right)\right)\Exp{F\left(\mathbf w^{(n)}\right)-F\left(\mathbf w_*\right)} \notag\\
&\!+\!\underbrace{\eta\!\sum\limits_{k\in\mc K}\!\alpha_k^2 \!\sum\limits_{k\in\mc K}\!\left(\displaystyle\frac{\sigma_k^2}{\delta_k}\!+\!\Lambda_k^2\right)
\!+\!\displaystyle L\eta^2\!\sum\limits_{k\in\mc K}\!\beta_k\displaystyle\left(\frac{\sigma_k^2}{\delta_k}\!+\!2\Lambda_k^2\right)
\!+\!2L\eta^2 \!\sum\limits_{k\in\mc K}\! \gamma_k\left(\left(q_{s,k}(\bs u_k)\!-\!\overline q_s\right)^2\!+\!\overline q_s^2\right)\Lambda_{k}^2}_{\triangleq G}. \label{equ:F2}
\end{align}
Applying \eqref{equ:F2} recursively, we have
\begin{align}
\!\!\!\Exp{\!F\left(\mathbf w^{(n)}\right)\!-\!F(\mathbf w_*)\!}
\leq G\displaystyle\frac {1\!-\!\left(1\!-\!\mu\eta\left(1\!-\!4L\eta\right)\right)^n}{\mu\eta\left(1\!-\!4L\eta\right)}\!+\!\left(1\!-\!\mu\eta\left(1\!-\!4L\eta\right)\right)^n\Exp{\!F\left(\mathbf w^0\right)\!-\!F(\mathbf w_*)\!}. \label{equ:F3}
\end{align}
To guarantee the convergence of the learning process, we have $1-\mu\eta\left(1-4L\eta\right)<1$, which leads to $0<\eta<\frac 1 {4L}$.
Thus, Theorem 1 is proved.

\subsection{Proof of Lemma \ref{lemma:Fw}} \label{sec:proof_lemmaFw}
%
We first derive the desired expression for the aggregated gradient in each round $n$ as
\begin{align}
&\Exp{\displaystyle\frac{\sum\limits_{k\in\mc K} \mathbf 1_{k}^{(n)}(\bs u_k)\mathbf {g}_k^{(n)}} {\sum\limits_{k\in\mc K} \mathbf1_{k}^{(n)}(\bs u_k)}}{\sum\limits_{k\in\mc K}\mathbf1_{k}^{(n)}(\bs u_k)\neq 0} \notag\\
&=\Exp{\sum\limits_{l=1}^{K}\!\!\!\!\!\!\!\sum\limits_{\mc K_1\cup\mc K_2=\mc K, \atop |\mc K_1|=l, |\mc K_2|=K-l}\!\!\!\!\!\!\!\!\! \Pr\!\left(\!\!\mathbf 1_{k_1}^{(n)}(\bs u_{k_1})\!=\!1 \forall k_1\in\mc K_1, \mathbf 1_{k_2}^{(n)}(\bs u_{k_2})\!=\!0 \forall k_2\in\mc K_2 \middle| \sum\limits_{k\in\mc K}\mathbf 1_k^{(n)}(\bs u_k)\neq0\!\!\right)
\displaystyle\!\!\frac 1 l \!\!\sum\limits_{k_1\in\mc K_1} \!\!\mathbf g_{k_1}^{(n)}} \notag\\
&=\Exp{\sum\limits_{l=1}^{K}\sum\limits_{\mc K_1\cup\mc K_2=\mc K, \atop |\mc K_1|=l, |\mc K_2|=K-l}\displaystyle\frac{\prod\limits_{k_1\in\mc K_1}q_{s,k_1} \prod\limits_{k_2\in\mc K_2}(1-q_{s,k_2})}{1-\prod\limits_{k\in\mc K}(1-q_{s,k})}\cdot \displaystyle\frac 1 l \sum\limits_{k_1\in\mc K_1} \mathbf g_{k_1}^{(n)}} \notag\\
&\triangleq\sum\limits_{k\in\mc K}\alpha_k \Exp{\mathbf g_k^{(n)}}, \label{equ:g}
\end{align}
where $\mc K_1$ is the set of UAVs that successfully sense the target, and $\mc K_2$ is the set of UAVs that fail to sense the target, and do not participate in round $n$.  
Moreover, each $\alpha_k, \forall k\in\mc K$ is a function of $\left\{q_{s,k}(\bs u_k)\right\}$. Let $\mathbf g_k^{(n)}=1, \forall k\in\mc K$, we have $\sum\limits_{k\in\mc K}\alpha_k=1$.
For the convenience of expression, we use $q_{s,k_1}$ and $q_{s,k_2}$ to represent $q_{s,k_1}(\bs u_{k_1})$ and $q_{s,k_1}(\bs u_{k_2})$ respectively. 

Next, we proof Lemma \ref{lemma:Fw} as follows
\begin{align}
&\Exp{\left\langle\nabla F(\mathbf w^{(n)}), \mathbf w^{(n+1)}-\mathbf w^{(n)}\right\rangle}\label{equ:proofFw}\\
&=-\eta\Exp{\left\langle\nabla F(\mathbf w^{(n)}),\sum\limits_{k\in\mc K}\alpha_k \mathbf g_k^{(n)}\right\rangle} \tag{\ref{equ:proofFw}{a}} \label{equ:proofFwa}\\
&=\displaystyle\frac{\eta}{2}\Exp{
\left\Vert\nabla F(\mathbf w^{(n)})-\sum\limits_{k\in\mc K}\alpha_k \mathbf g_k^{(n)}\right\Vert^2
-\left\Vert\nabla F(\mathbf w^{(n)})\right\Vert^2
-\left\Vert\sum\limits_{k\in\mc K}\alpha_k \mathbf g_k^{(n)}\right\Vert^2}
\tag{\ref{equ:proofFw}{b}} \label{equ:proofFwb}\\
&\leq\displaystyle\frac{\eta}{2}\Exp{
\left\Vert\nabla F(\mathbf w^{(n)})-\sum\limits_{k\in\mc K}\alpha_k \mathbf g_k^{(n)}\right\Vert^2
-\left\Vert\nabla F(\mathbf w^{(n)})\right\Vert^2}
\notag\\
&=\displaystyle\frac{\eta}{2}\Exp{\left\Vert\nabla F(\mathbf w^{(n)})\!-\!\sum\limits_{k\in\mc K}\alpha_k F_k(\mathbf w^{(n)})\!+\!\sum\limits_{k\in\mc K}\alpha_k F_k(\mathbf w^{(n)})\!-\!\sum\limits_{k\in\mc K}\alpha_k \mathbf g_k^{(n)}\right\Vert^2}
\!\!-\!\displaystyle\frac{\eta}{2}\Exp{\left\Vert\nabla F(\mathbf w^{(n)})\right\Vert^2}
\notag\\
&\leq\eta\Exp{\!\left\Vert\!\sum\limits_{k\in\mc K}\!\alpha_k \!\left(\nabla F\!(\mathbf w^{(n)}\!)\!-\!F_k\!(\mathbf w^{(n)}\!)\!\right)\right\Vert^2\!\!\!+\!\left\Vert\!\sum\limits_{k\in\mc K}\!\alpha_k \!\left(F_k\!(\mathbf w^{(n)}\!)\!-\!\mathbf g_k^{(n)}\!\right)\right\Vert^2\!}
\!\!-\! \frac{\eta}{2}\Exp{\!\left\Vert\nabla F\!(\mathbf w^{(n)}\!)\right\Vert^2\!}
\tag{\ref{equ:proofFw}{c}} \label{equ:proofFwc}\\
&\leq\eta\!\sum\limits_{k\in\mc K}\!\alpha_k^2 \Exp{\!\sum\limits_{k\in\mc K}\!\left\Vert\nabla F\!(\mathbf w^{(n)}\!)\!-\!F_k\!(\mathbf w^{(n)}\!)\right\Vert^2
\!+\!\sum\limits_{k\in\mc K}\!\left\Vert F_k\!(\mathbf w^{(n)}\!)\!-\!\mathbf g_k^{(n)}\right\Vert^2\!}
\!-\!\frac{\eta}{2}\Exp{\!\left\Vert\nabla F\!(\mathbf w^{(n)}\!)\right\Vert^2\!}
\tag{\ref{equ:proofFw}{d}} \label{equ:proofFwd}\\
&\leq\eta\sum\limits_{k\in\mc K}\alpha_k^2\cdot\sum\limits_{k\in\mc K}\left(\Lambda_k^2
+\displaystyle\frac{\sigma_k^2}{\delta_k}\right)
-\displaystyle\frac{\eta}{2}\Exp{\left\Vert\nabla F(\mathbf w^{(n)})\right\Vert^2},
\tag{\ref{equ:proofFw}{e}} \label{equ:proofFwe}
\end{align}
where \eqref{equ:proofFwa} is obtained by \eqref{equ:g},
 \eqref{equ:proofFwb} follows from the basic identity $\langle \!\mathbf a,\!\mathbf b\!\rangle\!\!=\!\!\frac 1 2\!\left(\Vert \mathbf a\Vert^2\!\!+\!\!\Vert \mathbf b\Vert^2\!\!-\!\!\Vert \mathbf a\!\!-\!\!\mathbf b\Vert^2\!\right)$,
\eqref{equ:proofFwc} is due to $\Vert \mathbf a+\mathbf b\Vert^2\leq2\left(\Vert \mathbf a\Vert^2+\Vert\mathbf b\Vert^2\right)$ and $\nabla F(\mathbf w^{(n)})=\sum\limits_{k\in\mc K}\left(\alpha_k\nabla F(\mathbf w^{(n)})\right)$,
\eqref{equ:proofFwd} is due to the Cauchy-Schwarz Inequality $\left\Vert\sum\limits_{k\in\mc K}\mathbf a_k \mathbf b_k\right\Vert^2\leq\sum\limits_{k\in\mc K}\Vert\mathbf a_k\Vert^2+\sum\limits_{k\in\mc K}\Vert\mathbf b_k\Vert^2$, 
and inequality \eqref{equ:proofFwe} is due to Assumption 3 and Assumption 4.
Therefore, we obtain Lemma \ref{lemma:Fw}.

\subsection{Proof of Lemma \ref{lemma:ww}} \label{sec:proof_lemmaww}
%
\begin{align}
&\Exp{\left\Vert\mathbf w^{(n+1)}-\mathbf w^{(n)}\right\Vert^2} \label{equ:proofww}\\
&=\eta^2\Exp{\left\Vert\displaystyle\frac{\sum\limits_{k\in\mc K} \mathbf 1_{k}^{(n)}(\bs u_k)\left(\mathbf {g}_k^{(n)}-\nabla F_k(\mathbf w^{(n)})\right)}{\sum\limits_{k\in\mc K} \mathbf1_{k}^{(n)}(\bs u_k)}
+\displaystyle\frac{\sum\limits_{k\in\mc K} \mathbf 1_{k}^{(n)}(\bs u_k)\nabla F_k(\mathbf w^{(n)})} {\sum\limits_{k\in\mc K} \mathbf1_{k}^{(n)}(\bs u_k)}\right\Vert^2}
\notag\\
&\leq2\eta^2\underbrace{\Exp{\left\Vert\displaystyle\frac{\sum\limits_{k\in\mc K} \mathbf 1_{k}^{(n)}(\bs u_k)\left(\mathbf {g}_k^{(n)}-\nabla F_k(\mathbf w^{(n)})\right)}{\sum\limits_{k\in\mc K} \mathbf1_{k}^{(n)}(\bs u_k)}\right\Vert^2}}_{\triangleq M_1}
\!+\!2\eta^2\underbrace{\Exp{\left\Vert\displaystyle\frac{\sum\limits_{k\in\mc K} \mathbf 1_{k}^{(n)}(\bs u_k)\nabla F_k(\mathbf w^{(n)})} {\sum\limits_{k\in\mc K} \mathbf1_{k}^{(n)}(\bs u_k)}\right\Vert^2}}_{\triangleq M_2},
\tag{\ref{equ:proofww}{a}} \label{equ:proofwwa}
\end{align}
where \eqref{equ:proofwwa} is due to $\Vert \mathbf a+\mathbf b\Vert^2\leq2\left(\Vert \mathbf a\Vert^2+\Vert\mathbf b\Vert^2\right)$.
Next we bound $M_1$ and $M_2$ as follows.

\subsubsection{Bound of $M_1$}
\begin{align}
 M_1 
&=\Exp{\frac{\sum\limits_{k\in\mc K} \mathbf 1_{k}^{(n)}(\bs u_k)\left\Vert\mathbf {g}_k^{(n)} - \nabla F_k(\mathbf w^{(n)})\right\Vert^2}{\left(\sum\limits_{k\in\mc K} \mathbf1_{k}^{(n)}(\bs u_k)\right)^2}} \notag\\
&+\underbrace{\Exp{\displaystyle\frac{\sum\limits_{i\in\mc K}\sum\limits_{j\in\mc K,\atop j\neq i} \mathbf 1_{i}^{(n)}(\bs u_i) \mathbf 1_{j}^{(n)}(\bs u_j)\left(\mathbf {g}_i^{(n)}-\nabla F_i(\mathbf w^{(n)})\right)\left(\mathbf {g}_j^{(n)}-\nabla F_j(\mathbf w^{(n)})\right)}{\left(\sum\limits_{k\in\mc K} \mathbf1_{k}^{(n)}(\bs u_k)\right)^2}}}_{=0}. \label{equ:M1}
\end{align}
Since we have $\mathbb E \left[\mathbf g_k^{(n)}\right]=\nabla F_k\left(\mathbf w^{(n)}\right), \forall k\in\mc K$, the last term in \eqref{equ:M1} is $0$.
Thus,
\begin{align}
&M_{1} \nonumber\\
&=\!\Exp{\!\sum\limits_{l=1}^{K}\!\!\!\!\!\!\!\!\sum\limits_{\mc K_1\!\cup\mc K_2=\mc K, \atop |\mc K_1|=l, |\mc K_2|=K-l}\!\!\!\!\!\!\!\!\!\!\!\! \Pr\!\!\left(\!\!\mathbf 1_{k_1}^{(\!n\!)}\!(\!\bs u_{k_1}\!)\!\!=\!1 \forall k_1\!\!\in\!\!\mc K_1,\! \mathbf 1_{k_2}^{(\!n\!)}\!(\!\bs u_{k_2}\!)\!\!=\!0 \forall k_2\!\!\in\!\!\mc K_2 \middle| \sum\limits_{k\in\mc K}\!\!\mathbf 1_k^{(\!n\!)}\!(\!\bs u_k\!)\!\neq\!0\!\!\right)
\!\!\!\frac 1 {l^2 }\!\!\!\sum\limits_{k_1\in\mc K_1}\!\!\! \left\Vert\mathbf {g}_k^{(n)}\!\!\!-\!\!\nabla F_k(\!\mathbf w^{(n)}\!)\right\Vert^2\!} 
\notag\\
&=\Exp{\sum\limits_{l=1}^{K}\sum\limits_{\mc K_1\cup\mc K_2=\mc K, \atop |\mc K_1|=l, |\mc K_2|=K-l}\displaystyle\frac{\prod\limits_{k_1\in\mc K_1}q_{s,k_1} \prod\limits_{k_2\in\mc K_2}(1-q_{s,k_2})}{1-\prod\limits_{k\in\mc K}(1-q_{s,k})}\cdot \displaystyle\frac 1 {l^2} \sum\limits_{k_1\in\mc K_1} \left\Vert\mathbf {g}_k^{(n)}-\nabla F_k(\mathbf w^{(n)})\right\Vert^2}
\notag\\
&\triangleq\sum\limits_{k\in\mc K}\beta_k \Exp{\left\Vert\mathbf {g}_k^{(n)}-\nabla F_k(\mathbf w^{(n)})\right\Vert^2} \notag\\
&\leq\sum\limits_{k\in\mc K}\beta_k\displaystyle\frac{\sigma_k^2}{\delta_k}, \label{equ:M11} 
\end{align}
where each $\beta_k, \forall k\in\mc K$ is a function of $\{q_{s,k}\}$, 
and \eqref{equ:M11} is due to $\mathbb E\left[\left\Vert \mathbf g_k^{(n)}-\nabla F_k(\mathbf w^{(n)})\right\Vert^2\right]\leq\frac{\sigma_k^2}{\delta_k}$.

\subsubsection{Bound of $M_2$}
\begin{align}
M_2
&=\Exp{\displaystyle\frac{\left\Vert\sum\limits_{k\in\mc K} \mathbf 1_{k}^{(n)}(\bs u_k)\left(\nabla F_k(\mathbf w^{(n)})-\nabla F(\mathbf w^{(n)})\right)+\sum\limits_{k\in\mc K} \mathbf 1_{k}^{(n)}(\bs u_k)\nabla F(\mathbf w^{(n)})\right\Vert^2}{\left(\sum\limits_{k\in\mc K} \mathbf1_{k}^{(n)}(\bs u_k)\right)^2}}\notag \\
&\leq2\Exp{\displaystyle\frac{\left\Vert\sum\limits_{k\in\mc K} \mathbf 1_{k}^{(n)}(\bs u_k)\left(\nabla F_k(\mathbf w^{(n)})-\nabla F(\mathbf w^{(n)})\right)\right\Vert^2}{\left(\sum\limits_{k\in\mc K} \mathbf1_{k}^{(n)}(\bs u_k)\right)^2}+\frac{\left\Vert\sum\limits_{k\in\mc K} \mathbf 1_{k}^{(n)}(\bs u_k)\nabla F(\mathbf w^{(n)})\right\Vert^2}{\left(\sum\limits_{k\in\mc K} \mathbf1_{k}^{(n)}(\bs u_k)\right)^2}} \notag\\
&=2\Exp{\displaystyle\frac{\left\Vert\sum\limits_{k\in\mc K} \mathbf 1_{k}^{(n)}(\bs u_k)\left(\nabla F_k(\mathbf w^{(n)})\!-\!\nabla F(\mathbf w^{(n)})\right)\right\Vert^2}{\left(\sum\limits_{k\in\mc K} \mathbf1_{k}^{(n)}(\bs u_k)\right)^2}}
\!\!+\!2\Exp{\frac{\left(\!\sum\limits_{k\in\mc K} \mathbf 1_{k}^{(n)}(\bs u_k)\!\!\right)^2 \left\Vert\nabla F(\mathbf w^{(n)})\right\Vert^2}{\left(\sum\limits_{k\in\mc K} \mathbf1_{k}^{(n)}(\bs u_k)\right)^2}} \notag\\
&=2\underbrace{\Exp{\displaystyle\frac{\sum\limits_{k\in\mc K} \mathbf 1_{k}^{(n)}(\bs u_k)\left\Vert\nabla F_k(\mathbf w^{(n)})-\nabla F(\mathbf w^{(n)})\right\Vert^2}{\left(\sum\limits_{k\in\mc K} \mathbf1_{k}^{(n)}(\bs u_k)\right)^2}} }_{\triangleq M_{21}}
+2\Exp{\left\Vert\nabla F(\mathbf w^{(n)})\right\Vert^2} \notag\\
&+2\underbrace{\Exp{\displaystyle\frac{\sum\limits_{i\in\mc K}\sum\limits_{j\in\mc K,\atop j\neq i} \mathbf 1_{i}^{(n)}(\bs u_i) \mathbf 1_{j}^{(n)}(\bs u_j)\!\!\left(\nabla F_i(\mathbf w^{(n)})\!-\!\nabla F(\mathbf w^{(n)})\right)\!\!\left(\nabla F_j(\mathbf w^{(n)})\!-\!\nabla F(\mathbf w^{(n)})\right)}{\left(\sum\limits_{k\in\mc K} \mathbf1_{k}^{(n)}(\bs u_k)\right)^2}}}_{\triangleq M_{22}}. \label{equ:M2}
\end{align}
Next, we bound $M_{21}$ and $M_{22}$ in \eqref{equ:M2} as follows.
Firstly, similar to \eqref{equ:M11}, we have
\begin{align}
M_{21}
&\triangleq\sum\limits_{k\in\mc K}\beta_k \Exp{\left\Vert\nabla F_k(\mathbf w^{(n)})-\nabla F(\mathbf w^{(n)})\right\Vert^2} \notag\\
&\leq\sum\limits_{k\in\mc K}\beta_k\Lambda_k^2, \label{equ:M21}
\end{align}
where \eqref{equ:M21} is due to $\mathbb E\left[\left\Vert\nabla F_k(\mathbf w^{(n)})-\nabla F(\mathbf w^{(n)})\right\Vert^2\right] \leq\Lambda_k^2$ in Assumption 4.
Secondly,
\begin{align}
M_{22}
&=\mathbb E\Bigg[\sum\limits_{l=2}^{K}\!\!\!\!\!\!\sum\limits_{\mc K_1\cup\mc K_2=\mc K, \atop |\mc K_1|=l, |\mc K_2|=K-l} \!\!\!\!\!\!\!\!\!\Pr\left(\mathbf 1_{k_1}^{(n)}(\bs u_{k_1})=1 \forall k_1\in\mc K_1, \mathbf 1_{k_2}^{(n)}(\bs u_{k_2})=0 \forall k_2\in\mc K_2 \middle| \sum\limits_{k\in\mc K}\mathbf 1_k^{(n)}(\bs u_{k})\!\geq\! 2\right) \notag\\
&\cdot \displaystyle\frac 1 {l^2} \sum\limits_{k_1\in\mc K_1}\sum\limits_{k'_1\in\mc K_1, k'_1\neq k_1} \left(\nabla F_{k_1}(\mathbf w^{(n)})-\nabla F(\mathbf w^{(n)})\right)
\left(\nabla F_{k'_1}(\mathbf w^{(n)})-\nabla F(\mathbf w^{(n)})\right)\Bigg] \notag\\
&=\mathbb E\Bigg[\sum\limits_{l=2}^{K}\sum\limits_{\mc K_1\cup\mc K_2=\mc K, \atop |\mc K_1|=l, |\mc K_2|=K-l}\displaystyle\frac{\prod\limits_{k_1\in\mc K_1}q_{s,k_1} \prod\limits_{k_2\in\mc K_2}(1-q_{s,k_2})}{1-\prod\limits_{k\in\mc K}(1-q_{s,k})-\sum\limits_{k\in\mc K}q_{s,k}\prod\limits_{k'\in\mc K, \atop k'\neq k}(1-q_{s,k'})} \notag\\
&\cdot \displaystyle\frac 1 {l^2} \sum\limits_{k_1\in\mc K_1}\sum\limits_{k'_1\in\mc K_1, k'_1\neq k_1} \left(\nabla F_{k_1}(\mathbf w^{(n)})-\nabla F(\mathbf w^{(n)})\right)
\left(\nabla F_{k'_1}(\mathbf w^{(n)})-\nabla F(\mathbf w^{(n)})\right)\Bigg] \notag\\
&\leq\sum\limits_{l=2}^{K}\sum\limits_{\mc K_1\cup\mc K_2=\mc K, \atop |\mc K_1|=l, |\mc K_2|=K-l}\displaystyle\frac{\prod\limits_{k_2\in\mc K_2}(1-q_{s,k_2})}{1-\prod\limits_{k\in\mc K}(1-q_{s,k})-\sum\limits_{k\in\mc K}q_{s,k}\prod\limits_{k'\in\mc K, \atop k'\neq k}(1-q_{s,k'})} \notag\\
&\cdot \frac 1 {l^2} \underbrace{\Exp{\!\sum\limits_{k_1\in\mc K_1}\!\sum\limits_{k'_1\in\mc K_1, \atop k'_1\neq k_1}\! q_{s,k_1}\left(\nabla F_{k_1}(\mathbf w^{(n)})-\nabla F(\mathbf w^{(n)})\right)\cdot q_{s,k'_1}\left(\nabla F_{k'_1}(\mathbf w^{(n)})-\nabla F(\mathbf w^{(n)})\right)\!}}_{\triangleq M_{221}} \label{equ:M22}
\end{align}
where \eqref{equ:M22} is due to $\prod\limits_{k_1\in\mc K_1}q_{s,k_1}\leq q_{s,k_1}q_{s,k'_1}, \forall k,k'\in\mc K_1, k'\neq k$.
Next, we define the average successful sensing probability as $\overline q=\frac 1 K\sum\limits_{k\in\mc K}q_{s,k}$, and we have
\begin{align}
&M_{221} \label{equ:M221}\\
&=\! \mathbb E\Bigg[\!\sum\limits_{k_1\in\mc K_1}\!\!\sum\limits_{k'_1\in\mc K_1, \atop k'_1\neq k_1} \!\!\!\! \left(q_{s,k_1}\!-\!\overline q_s\!+\!\overline q_s\right)\!\left(\nabla F_{k_1}(\!\mathbf w^{(n)}\!)\!-\!\!\nabla F(\!\mathbf w^{(n)}\!)\right)
\!\cdot \!\left(q_{s,k'_1}\!-\!\overline q_s\!+\!\overline q_s\right)\!\left(\nabla F_{k'_1}(\!\mathbf w^{(n)}\!)\!-\!\!\nabla F(\!\mathbf w^{(n)}\!)\right)\!\Bigg] \notag\\
&=\mathbb E\Bigg[\!\sum\limits_{k_1\in\mc K_1}\!\sum\limits_{k'_1\in\mc K_1, k'_1\neq k_1}\!\!\!\!\!\!\!\Big[\! \left(q_{s,k_1}\!-\!\overline q_s\right)\!\left(\nabla F_{k_1}(\mathbf w^{(n)})\!-\!\nabla F(\mathbf w^{(n)})\right)
\!\cdot\! \left(q_{s,k'_1}\!-\!\overline q_s\right)\!\left(\nabla F_{k'_1}(\mathbf w^{(n)})\!-\!\nabla F(\mathbf w^{(n)})\right) \notag\\
&+\left(q_{s,k_1}-\overline q_s\right)\left(\nabla F_{k_1}(\mathbf w^{(n)})-\nabla F(\mathbf w^{(n)})\right)\cdot \overline q_s\left(\nabla F_{k'_1}(\mathbf w^{(n)})-\nabla F(\mathbf w^{(n)})\right) \notag\\
&+\overline q_s\left(\nabla F_{k_1}(\mathbf w^{(n)})-\nabla F(\mathbf w^{(n)})\right)\cdot \left(q_{s,k'_1}-\overline q_s\right)\left(\nabla F_{k'_1}(\mathbf w^{(n)})-\nabla F(\mathbf w^{(n)})\right) \notag\\
&+\overline q_s\left(\nabla F_{k_1}(\mathbf w^{(n)})-\nabla F(\mathbf w^{(n)})\right)\cdot \overline q_s\left(\nabla F_{k'_1}(\mathbf w^{(n)})-\nabla F(\mathbf w^{(n)})\right) \!\Big]\!\Bigg] \notag\\
&\leq\frac 1 2\mathbb E\Bigg[\!\sum\limits_{k_1\in\mc K_1}\!\!\sum\limits_{k'_1\in\mc K_1, \atop k'_1\neq k_1}\!\!\!\Big[\!
\left(q_{s,k_1}\!-\!\overline q_s\right)^2\!\left\Vert\nabla F_{k_1}(\!\mathbf w^{(n)}\!)\!-\!\nabla F(\!\mathbf w^{(n)}\!)\right\Vert^2 \!\!\!+\! \left(q_{s,k'_1}\!-\!\overline q_s\right)^2\!\left\Vert\nabla F_{k'_1}(\!\mathbf w^{(n)}\!)\!-\!\nabla F(\!\mathbf w^{(n)}\!)\right\Vert^2 \notag\\
&+\left(q_{s,k_1}-\overline q_s\right)^2\left\Vert\nabla F_{k_1}(\mathbf w^{(n)})-\nabla F(\mathbf w^{(n)})\right\Vert^2 + \overline q_s^2\left\Vert\nabla F_{k'_1}(\mathbf w^{(n)})-\nabla F(\mathbf w^{(n)})\right\Vert^2 \notag\\
&+\overline q_s^2\left\Vert\nabla F_{k_1}(\mathbf w^{(n)})-\nabla F(\mathbf w^{(n)})\right\Vert^2 + \left(q_{s,k'_1}-\overline q_s\right)^2\left\Vert\nabla F_{k'_1}(\mathbf w^{(n)})-\nabla F(\mathbf w^{(n)})\right\Vert^2 \notag\\
&+\overline q_s^2\left\Vert\nabla F_{k_1}(\mathbf w^{(n)})-\nabla F(\mathbf w^{(n)})\right\Vert^2 + \overline q_s^2\left\Vert\nabla F_{k'_1}(\mathbf w^{(n)})-\nabla F(\mathbf w^{(n)})\right\Vert^2 \Big]\!\Bigg] \tag{\ref{equ:M221}{a}} \label{equ:M221a}\\
&=\sum\limits_{k_1\in\mc K_1}\sum\limits_{k'_1\in\mc K_1, k'_1\neq k_1}\Bigg(
\left(\left(q_{s,k_1}-\overline q_s\right)^2+\overline q_s^2\right)
\Exp{\left\Vert\nabla F_{k_1}(\mathbf w^{(n)})-\nabla F(\mathbf w^{(n)})\right\Vert^2} \notag\\ 
&+ \left(\left(q_{s,k'_1}-\overline q_s\right)^2+\overline q_s^2\right)\Exp{\left\Vert\nabla F_{k'_1}(\mathbf w^{(n)})-\nabla F(\mathbf w^{(n)})\right\Vert^2}\Bigg) \notag\\
&=\sum\limits_{k_1\in\mc K_1}\Bigg(
(l-1)\left(\left(q_{s,k_1}-\overline q_s\right)^2+\overline q_s^2\right)
\Exp{\left\Vert\nabla F_{k_1}(\mathbf w^{(n)})-\nabla F(\mathbf w^{(n)})\right\Vert^2} \notag\\ 
&+ \sum\limits_{k'_1\in\mc K_1, k'_1\neq k_1}\left(\left(q_{s,k'_1}-\overline q_s\right)^2+\overline q_s^2\right)\Exp{\left\Vert\nabla F_{k'_1}(\mathbf w^{(n)})-\nabla F(\mathbf w^{(n)})\right\Vert^2}\Bigg) \tag{\ref{equ:M221}{b}} \label{equ:M221b}\\
&=\sum\limits_{k_1\in\mc K_1}\Bigg(
(l-2)\left(\left(q_{s,k_1}-\overline q_s\right)^2+\overline q_s^2\right)
\Exp{\left\Vert\nabla F_{k_1}(\mathbf w^{(n)})-\nabla F(\mathbf w^{(n)})\right\Vert^2} \notag\\ 
&+ \sum\limits_{k_1\in\mc K_1}\left(\left(q_{s,k_1}-\overline q_s\right)^2+\overline q_s^2\right)\Exp{\left\Vert\nabla F_{k_1}(\mathbf w^{(n)})-\nabla F(\mathbf w^{(n)})\right\Vert^2}\Bigg) \notag\\
&=2(l-1)\sum\limits_{k_1\in\mc K_1} 
\left(\left(q_{s,k_1}-\overline q_s\right)^2+\overline q_s^2\right)
\Exp{\left\Vert\nabla F_{k_1}(\mathbf w^{(n)})-\nabla F(\mathbf w^{(n)})\right\Vert^2} \notag\\
&\leq2(l-1)\sum\limits_{k_1\in\mc K_1} 
\left(\left(q_{s,k_1}-\overline q_s\right)^2+\overline q_s^2\right)\Lambda_{k_1}^2 
\tag{\ref{equ:M221}{c}} \label{equ:M221c}
\end{align}
where \eqref{equ:M221a} is due to $\langle \mathbf a, \mathbf b\rangle\leq \frac 1 2 (\Vert \mathbf a\Vert^2+\Vert \mathbf b\Vert^2)$,
\eqref{equ:M221b} is due to $|\mc K_1|=l$,
\eqref{equ:M221c} is due to $\mathbb E\left[\left\Vert\nabla F_k(\mathbf w^{(n)})-\nabla F(\mathbf w^{(n)})\right\Vert^2\right]\leq\Lambda_k^2$ in Assumption 4.
Substituting \eqref{equ:M221c} into \eqref{equ:M22}, we have
\begin{align}
M_{22}
&\leq\sum\limits_{l=2}^{K}\!\! \sum\limits_{\mc K_1\cup\mc K_2=\mc K, \atop |\mc K_1|=l, |\mc K_2|=K-l}\!\!\!\!\!\! \frac{\prod\limits_{k_2\in\mc K_2}\!\!(1\!-\!q_{s,k_2})}{1\!-\!\prod\limits_{k\in\mc K}\!\!(1-q_{s,k})\!-\!\sum\limits_{k\in\mc K}\!\!q_{s,k}\!\!\prod\limits_{k'\in\mc K, \atop k'\neq k}(1\!-\!q_{s,k'})} 
\!\cdot\! \frac {2(l\!-\!1)} {l^2} \!\sum\limits_{k_1\in\mc K_1}\!\!
\left(\left(q_{s,k_1}\!-\!\overline q_s\right)^2\!+\!\overline q_s^2\right)\Lambda_{k_1}^2\Bigg] \notag\\
&<\sum\limits_{l=2}^{K}\!\! \sum\limits_{\mc K_1\cup\mc K_2=\mc K, \atop |\mc K_1|=l, |\mc K_2|=K-l}\!\!\!\!\!\! \frac{\prod\limits_{k_2\in\mc K_2}\!\!(1\!-\!q_{s,k_2})}{1\!-\!\prod\limits_{k\in\mc K}(1\!-\!q_{s,k})\!-\!\sum\limits_{k\in\mc K}\!q_{s,k}\!\!\prod\limits_{k'\in\mc K, \atop k'\neq k}(1\!-\!q_{s,k'})} 
\!\cdot\! \displaystyle\frac {2} {l} \sum\limits_{k_1\in\mc K_1} \left(\left(q_{s,k_1}\!-\!\overline q_s\right)^2\!+\!\overline q_s^2\right)\Lambda_{k_1}^2\Bigg] \notag\\
&\triangleq\sum\limits_{k\in\mc K} \gamma_k\left(\left(q_{s,k}-\overline q_s\right)^2+\overline q_s^2\right)\Lambda_{k}^2, \label{equ:M22final}
\end{align}
where each $\gamma_k, \forall k\in\mc K$ is a function of $\{q_{s,k}\}$.

By substituting \eqref{equ:M21} and \eqref{equ:M22final} into \eqref{equ:M2}, we obtain the bound of $M_2$ as follows.
\begin{align}
M_2
&<2\sum\limits_{k\in\mc K}\beta_k\Lambda_k^2+2\sum\limits_{k\in\mc K} \gamma_k\left(\left(q_{s,k}(\bs u_k)-\overline q_s\right)^2+\overline q_s^2\right)\Lambda_{k}^2+2\Exp{\left\Vert\nabla F(\mathbf w^{(n)})\right\Vert^2} \label{equ:M2final}
\end{align}
Finally, by substituting \eqref{equ:M1} and \eqref{equ:M2final} into \eqref{equ:proofww},
we obtain Lemma \ref{lemma:ww}.
\begin{align}
&\Exp{\left\Vert\mathbf w^{(n+1)}-\mathbf w^{(n)}\right\Vert^2} \notag\\
&\leq2\eta^2\left(\sum\limits_{k\in\mc K}\beta_k\displaystyle\frac{\sigma_k^2}{\delta_k}
\!+\!2 \sum\limits_{k\in\mc K}\beta_k\Lambda_k^2
\!+\!2 \sum\limits_{k\in\mc K} \gamma_k\left(\left(q_{s,k}\!-\!\overline q_s\right)^2\!+\!\overline q_s^2\right)\Lambda_{k}^2\right)
\!+\!4\eta^2 \Exp{\left\Vert\nabla F(\mathbf w^{(n)})\right\Vert^2}. \label{equ:proofwwfinal}
\end{align}
%

\section{Proof of Corollary 1}\label{app:Gsameq}
We consider that all UAVs have a uniform successful sensing probability $q_{s,k}(\bs u_k)=q_s, \forall k\in\mc K$.
We first explore the relationship between $\alpha_k, \beta_k, \gamma_k, \forall k\in\mc K$ and $q_s$.
Then, we derive an upper bound on $G$.

\subsection{Expression of $\{\alpha_k\}$}
%
Due to $q_{s,k}(\bs u_k)=q_s, \forall k\in\mc K$, we can simplify \eqref{equ:g} to 
\begin{align}
&\Exp{\sum\limits_{l=1}^{K}\sum\limits_{\mc K_1\cup\mc K_2=\mc K, \atop |\mc K_1|=l, |\mc K_2|=K-l}\displaystyle\frac{\prod\limits_{k_1\in\mc K_1}q_{s,k_1} \prod\limits_{k_2\in\mc K_2}(1-q_{s,k_2})}{1-\prod\limits_{k\in\mc K}(1-q_{s,k})}\cdot \displaystyle\frac 1 l \sum\limits_{k_1\in\mc K_1} \mathbf g_{k_1}^{(n)}} \label{equ:alpha}\\
&=\Exp{\sum\limits_{l=1}^{K}\displaystyle\frac{q_{s}^l (1-q_{s})^{K-l}}{1-(1-q_{s})^K}\cdot \displaystyle\frac 1 l \cdot\sum\limits_{\mc K_1\cup\mc K_2=\mc K, \atop |\mc K_1|=l, |\mc K_2|=K-l}\sum\limits_{k_1\in\mc K_1} \mathbf g_{k_1}^{(n)}} \notag\\
&=\Exp{\sum\limits_{l=1}^{K}\displaystyle\frac{q_{s}^l (1-q_{s})^{K-l}}{1-(1-q_{s})^K}\cdot \displaystyle\frac 1 l \cdot C_{K-1}^{l-1}\sum\limits_{k\in\mc K} \mathbf g_{k}^{(n)}} 
 \tag{\ref{equ:alpha}{a}} \label{equ:alphaa}\\
&=\Exp{\sum\limits_{l=1}^{K} C_{K}^{l} \displaystyle\frac{q_{s}^l (1-q_{s})^{K-l}}{1-(1-q_{s})^K}\cdot \displaystyle\frac 1 K \cdot \sum\limits_{k\in\mc K} \mathbf g_{k}^{(n)}} 
 \tag{\ref{equ:alpha}{b}} \label{equ:alphab}\\
&=\frac 1 K\sum\limits_{k\in\mc K} \Exp{\mathbf g_k^{(n)}} \tag{\ref{equ:alpha}{c}} \label{equ:alphac}\\
&\triangleq\sum\limits_{k\in\mc K}\alpha_k \Exp{\mathbf g_k^{(n)}}, \notag
\end{align}
where \eqref{equ:alphaa} is due to $\sum\limits_{\mc K_1\cup\mc K_2=\mc K, \atop |\mc K_1|=l, |\mc K_2|=K-l}\sum\limits_{k_1\in\mc K_1} \mathbf g_{k_1}^{(n)}= C_{K-1}^{l-1}\sum\limits_{k\in\mc K} \mathbf g_{k}^{(n)}$,
\eqref{equ:alphab} is due to $\frac K l C_{K-1}^{l-1}=C_K^l$,
and \eqref{equ:alphac} is due to $\sum\limits_{l=1}^{K} C_{K}^{l} \frac{q_{s}^l (1-q_{s})^{K-l}}{1-(1-q_{s})^K}=1$. Therefore, we have $\alpha_k=\frac 1 K, \forall k\in\mc K$.

\subsection{Expression of $\{\beta_k\}$}
Due to $q_{s,k}(\bs u_k)=q_s, \forall k\in\mc K$, we can simplify \eqref{equ:M11} to 
\begin{align}
&\Exp{\sum\limits_{l=1}^{K}\sum\limits_{\mc K_1\cup\mc K_2=\mc K, \atop |\mc K_1|=l, |\mc K_2|=K-l}\displaystyle\frac{\prod\limits_{k_1\in\mc K_1}q_{s,k_1} \prod\limits_{k_2\in\mc K_2}(1-q_{s,k_2})}{1-\prod\limits_{k\in\mc K}(1-q_{s,k})}\cdot \displaystyle\frac 1 {l^2} \sum\limits_{k_1\in\mc K_1} \left\Vert\mathbf {g}_k^{(n)}-\nabla F_k(\mathbf w^{(n)})\right\Vert^2}
\label{equ:beta}\\
&=\Exp{\sum\limits_{l=1}^{K}\displaystyle\frac{q_{s}^l (1-q_{s})^{K-l}}{1-(1-q_{s})^K}\cdot \displaystyle\frac 1 {l^2} \cdot \sum\limits_{\mc K_1\cup\mc K_2=\mc K, \atop |\mc K_1|=l, |\mc K_2|=K-l}\sum\limits_{k_1\in\mc K_1} \left\Vert\mathbf {g}_k^{(n)}-\nabla F_k(\mathbf w^{(n)})\right\Vert^2}
\notag\\
&=\Exp{\sum\limits_{l=1}^{K}\displaystyle\frac {C_K^l }{l}\cdot\displaystyle\frac{q_{s}^l (1-q_{s})^{K-l}}{1-(1-q_{s})^K}\cdot \frac 1 K \sum\limits_{k\in\mc K}\left\Vert\mathbf {g}_k^{(n)}-\nabla F_k(\mathbf w^{(n)})\right\Vert^2}
\tag{\ref{equ:beta}{a}} \label{equ:betaa}\\
&\leq\frac{2}{K^2 \chi q_s}\sum\limits_{k\in\mc K}\Exp{\left\Vert\mathbf {g}_k^{(n)}-\nabla F_k(\mathbf w^{(n)})\right\Vert^2}
\tag{\ref{equ:beta}{b}} \label{equ:betab}\\
&\triangleq\sum\limits_{k\in\mc K}\beta_k \Exp{\left\Vert\mathbf {g}_k^{(n)}-\nabla F_k(\mathbf w^{(n)})\right\Vert^2}, \notag
\end{align}
where \eqref{equ:betaa} can be obtained based on the same reason as obtaining \eqref{equ:alphab}, 
\eqref{equ:betab} is obtained by Lemma 5 in \cite{Zhu_one_bit}, which leads to $\sum\limits_{l=1}^{K}\!\frac 1 {l} C_K^l {q_{s}^l (1-q_{s})^{K-l}}\!\leq\! \frac {2}{Kq_s}$,
parameter $\chi\!=\!1\!-\!(1\!-\!q_{s,min})^K$ and $q_{s,min}$ represents the minimum value of $q_{s,k}(\bs u_k)$.
Therefore, we have $\beta_k=\frac{2}{K^2 \chi q_s}, \forall k\in\mc K$.

\subsection{Expression of $\{\gamma_k\}$}
Due to $q_{s,k}(\bs u_k)=q_s, \forall k\in\mc K$, we can simplify \eqref{equ:M22final} to 
\begin{align}
&\sum\limits_{l=2}^{K}\!\sum\limits_{\mc K_1\cup\mc K_2=\mc K, \atop |\mc K_1|=l, |\mc K_2|=K-l}\!\!\!\! \frac{\prod\limits_{k_2\in\mc K_2}(1\!-\!q_{s,k_2})}{1\!-\!\prod\limits_{k\in\mc K}(1\!-\!q_{s,k})\!-\!\sum\limits_{k\in\mc K}q_{s,k}\prod\limits_{k'\in\mc K, \atop k'\neq k}(1\!-\!q_{s,k'})} 
\cdot \frac {2} {l} \sum\limits_{k_1\in\mc K_1} \left(\left(q_{s,k_1}-\overline q_s\right)^2+\overline q_s^2\right)\Lambda_{k_1}^2 \label{equ:gamma}\\
&=2\sum\limits_{l=2}^{K} C_K^{l} \displaystyle\frac{(1-q_{s})^{K-l}}{1-(1-q_{s})^K-Kq_{s}(1-q_{s})^{K-1}} 
\cdot\frac 1 K\sum\limits_{k\in\mc K} \left(\left(q_{s,k}-\overline q_s\right)^2+\overline q_s^2\right)\Lambda_{k_1}^2 \tag{\ref{equ:gamma}{a}} \label{equ:gammaa}\\
&=2 \cdot\displaystyle\frac{\sum\limits_{l=2}^{K} C_K^{l} (1-q_{s})^{K-l}}{\sum\limits_{l=2}^{K} C_K^{l} (1-q_{s})^{K-l}q_s^l} 
\cdot\frac 1 K\sum\limits_{k\in\mc K} \left(\left(q_{s,k}-\overline q_s\right)^2+\overline q_s^2\right)\Lambda_{k_1}^2 \tag{\ref{equ:gamma}{b}} \label{equ:gammab}\\
&\leq 2 \cdot\displaystyle\frac{\sum\limits_{l=2}^{K} C_K^{l} (1-q_{s})^{K-l}}{\sum\limits_{l=2}^{K} C_K^{l} (1-q_{s})^{K-l}q_s^K} 
\cdot\frac 1 K\sum\limits_{k\in\mc K} \left(\left(q_{s,k}-\overline q_s\right)^2+\overline q_s^2\right)\Lambda_{k_1}^2 \tag{\ref{equ:gamma}{c}} \label{equ:gammac}\\
&=\frac {2} {Kq_s^K}\sum\limits_{k\in\mc K} \left(\left(q_{s,k}-\overline q_s\right)^2+\overline q_s^2\right)\Lambda_{k_1}^2 \notag\\
&\triangleq\sum\limits_{k\in\mc K} \gamma_k\left(\left(q_{s,k}-\overline q_s\right)^2+\overline q_s^2\right)\Lambda_{k}^2, \notag
\end{align}
where \eqref{equ:gammaa} can be obtained based on the same reason as obtaining \eqref{equ:alphab}, 
\eqref{equ:gammab} is due to $\sum\limits_{l=2}^{K} C_K^{l} (1-q_{s})^{K-l}q_s^l=1-(1-q_{s})^K-Kq_{s}(1-q_{s})^{K-1}$,
and \eqref{equ:gammac} is due to $q_s^l>q_s^K, \forall l\geq 2$.
Therefore, we have $\gamma_k=\frac{2}{K q_s^K}, \forall k\in\mc K$.
Finally, by substituting the expression of $\{\alpha_k\}, \{\beta_k\}, \{\gamma_k\}$ into $G$ bound, we obtain Corollary 1.
\begin{equation} \label{equ:proofGsameq}
\begin{array}{rll}
G\leq\displaystyle\frac {\eta} {K}\sum\limits_{k\in\mc K}\left(\displaystyle\frac{\sigma_k^2}{\delta_k}+\Lambda_k^2\right) 
+\frac{2L\eta^2}{K^2 \chi q_s}\sum\limits_{k\in\mc K}\left(\displaystyle\frac{\sigma_k^2}{\delta_k}+2\Lambda_k^2\right)
+\frac{4L\eta^2}{K q_s^{K-2}}\sum\limits_{k\in\mc K}\Lambda_{k}^2.
\end{array}
\end{equation}

\footnotesize
\renewcommand{\refname}{Reference}
\bibliographystyle{IEEEtran}
\bibliography{IEEEabrv,myfile}

\end{document}